\newtheorem{theorem}{Theorem}
\newtheorem{lemma}{Lemma}
\newtheorem{corollary}{Corollary}
\renewcommand{\vec}[1]{\mathbf{#1}}
\newtheorem{Observation}{Observation}
\newtheorem{definition}{Definition}
\newtheorem{claim}{Claim}
\journal{Elsevier}
    \renewcommand{\paragraph}{\@startsection{paragraph}{4}{\z@}%
       {-3.25ex\@plus -1ex \@minus -.2ex}%
       {1.5ex \@plus .2ex}%
       {\normalfont\normalsize\bfseries}}
    \renewcommand{\subparagraph}{\@startsection{subparagraph}{5}{\z@}%
       {-3.25ex\@plus -1ex \@minus -.2ex}%
       {1.5ex \@plus .2ex}%
       {\normalfont\normalsize\bfseries}}
\begin{document}
\newsavebox{\mybox}
\savebox{\mybox}{This is a tcolorbox}
\begin{frontmatter}

\title{Simultaneous Consecutive Ones Submatrix and Editing Problems : Classical Complexity \& Fixed-Parameter Tractable Results }

\author{Rani M. R, R. Subashini, Mohith Jagalmohanan}
\address{Department of Computer Science \& Engineering.\\ National Institute of Technology, Calicut\\
 \texttt{\{rani\_p150067cs, suba\}@nitc.ac.in}}

\begin{abstract}
A binary matrix $M$ has the consecutive ones property ($C1P$) for rows (resp. columns) if there is a permutation of its columns (resp. rows) that arranges the ones consecutively in all the rows (resp. columns). If $M$ has the $C1P$ for rows and the $C1P$ for columns, then $M$ is said to have the simultaneous consecutive ones property ($SC1P$). Binary matrices having the $SC1P$ plays an important role in theoretical as well as practical applications. \newline \newline In this article, we consider the classical complexity and fixed-parameter tractability of $(a)$ Simultaneous Consecutive Ones Submatrix ($SC1S$) and $(b)$ Simultaneous Consecutive Ones Editing ($SC1E$) [Oswald et al., Theoretical Comp. Sci. 410(21-23):1986-1992, \hyperref[references]{2009}] problems. $SC1S$ problems focus on deleting a minimum number of rows, columns, and rows as well as columns to establish the $SC1P$, whereas $SC1E$ problems deal with flipping a minimum number of $0$-entries, $1$-entries, and $0$-entries as well as $1$-entries to obtain the $SC1P$.  We show that the decision versions of $SC1S$ and $SC1E$ problems are NP-complete. \newline \newline We consider the parameterized versions of $SC1S$ and $SC1E$ problems with $d$, being the solution size, as the parameter. Given a binary matrix $M$ and a positive integer $d$, $d$-$SC1S$-$R$, $d$-$SC1S$-$C$, and  $d$-$SC1S$-$RC$ problems decide whether there exists a set of rows, columns, and rows as well as columns, respectively, of size at most $d$, whose deletion results in a matrix with the $SC1P$. The $d$-$SC1P$-$0E$, $d$-$SC1P$-$1E$, and $d$-$SC1P$-$01E$ problems decide whether there exists a set of $0$-entries, $1$-entries, and $0$-entries as well as $1$-entries, respectively, of size at most $d$, whose flipping results in a matrix with the \vspace{0.095 in} $SC1P$. 

Our main results include: \begin{enumerate} \item The decision versions of $SC1S$ and $SC1E$ problems are NP-complete. \item Using bounded search tree technique, certain reductions and related results from the literature [Cao et al., Algorithmica 75(1):118-137, \hyperref[references]{2016}, and Kaplan et al., SIAM Journal on Computing 28(5):1906-1922, \hyperref[references]{1999}], we show that $d$-$SC1S$-$R$, $d$-$SC1S$-$C$, $d$-$SC1S$-$RC$ and $d$-$SC1P$-$0E$ are fixed-parameter tractable on binary matrices with run-times $O^{*}(8^{d})$, $O^{*}(8^{d})$, $O^{*}(2^{O(dlogd)})$ and $O^{*}(18^{d})$ respectively. \end{enumerate}We also give improved FPT algorithms for $SC1S$ and $SC1E$ problems on certain restricted binary matrices. 
\end{abstract}
\begin{keyword}
 Simultaneous Consecutive Ones Property \sep Consecutive Ones Property \sep Fixed-Parameter Tractable \sep Parameterized Complexity  
\end{keyword}
\end{frontmatter}
\linenumbers
\section{Introduction}\label{intro}
Binary matrices having the simultaneous consecutive ones property are fundamental in recognizing biconvex graphs \cite{tucker1972structure}, recognizing proper interval graphs \cite{Fis85}, identifying block structure of matrices in applications arising from integer linear programming \cite{oswald2003weighted}  and finding clusters of ones from metabolic networks \cite{konig2006discovering}. A binary matrix  has the \textit{consecutive ones property ($C1P$) for rows (resp. columns)} \cite{fulkerson1965incidence}, if there is a permutation of its columns (resp. rows) that arranges the ones consecutively in all the rows (resp. columns). A binary  matrix has the \textit{simultaneous consecutive ones property} \textbf{\textit{($SC1P$)}}  \cite{oswald2009simultaneous}, if we can permute the rows and columns in such a way that the ones in every column and in every row occur consecutively. That is, a binary matrix has the $SC1P$ if it satisfies the $C1P$ for both rows and columns. Matrices with the $C1P$ and the $SC1P$ are related to interval graphs and proper interval graphs respectively. There exist several linear-time and polynomial-time algorithms for testing the $C1P$ for columns (see, for example \cite{booth1976testing,hsu2002simple,hsu2003pc,mcconnell2004certifying,meidanis1998consecutive,raffinot2011consecutive}). 
These algorithms can also be used for testing the $C1P$ for rows. The column permutation (if one exists) to obtain the $C1P$ for rows will not affect the $C1P$ of the columns (if one exists) and vice versa. Thus, testing the $SC1P$ can also be done in linear time. \\\\$SC1P$ being a non-trivial property, we aim to establish the $SC1P$ in a given binary matrix through deletion of row(s)/column(s) and flipping of $0$s/$1$s. We consider the Simultaneous Consecutive Ones Submatrix ($SC1S$) and Simultaneous Consecutive Ones Editing ($SC1E$) \cite{oswald2009simultaneous} problems to establish the $SC1P$, if the given binary matrix do not have the $SC1P$. \textit{SC1S problems} focus on deleting a minimum number of rows, columns, and rows as well as columns to establish the $SC1P$ whereas \textit{SC1E problems} deal with flipping a minimum number of $0$-entries, $1$-entries, and $0$-entries as well as $1$-entries to obtain the $SC1P$. We pose the following optimization problems: \textsc{Sc$\mathsmaller1$s-Row Deletion}\label{opt}, \textsc{Sc$\mathsmaller 1$s-Column Deletion} and \textsc{Sc$\mathsmaller 1$s-Row-Column Deletion} in the $SC1S$ category, and, \textsc{Sc$ \mathsmaller 1$p-$0$-Flipping}, \textsc{Sc$ \mathsmaller 1$p-$1$-Flipping} and \textsc{Sc$\mathsmaller1$p-$01$-Flipping} in the $SC1E$ category. Given a binary matrix $M$, the \textit{\textsc{Sc$\mathsmaller1$s-Row$/$Column$/$Row-Column Deletion}} finds a minimum number of rows/columns/rows as well as columns, whose deletion results in a matrix satisfying the $SC1P$. On the other hand, the \textit{\textsc{Sc$ \mathsmaller 1$p-$0/1/01$-Flipping}} finds a minimum number of $0$-entries/$1$-entries/any entries, to be flipped to satisfy the $SC1P$. We show that the decision versions of the above defined problems are NP-complete. We refer to the parameterized versions of the above problems, parameterized by $d$ as $d$-$SC1S$-$R/C/RC$ and $d$-$SC1P$-$0E/1E/01E$ respectively, with $d$ being the number of rows$/$columns/rows as well as columns that can be deleted, and the number of $0$-entries/$1$-entries/any entries that can be flipped respectively. \\\\ \textbf{Parameterized Complexity}: Fixed-parameter tractability is one of the ways to deal with NP-hard problems. In parameterized complexity, the running time of an algorithm is measured not only in terms of the input size, but also in terms of a parameter. A parameter is an integer associated with an instance of a problem. It is a measure of some property of the input instance. A problem is \textit{fixed-parameter tractable (FPT)} with respect to a parameter $d$, if there exists an algorithm that solves the problem in $f(d).n^{O(1)}$ time, where $f$ is a computable function depending only on $d$, and $n$ is the size of the input instance. The time complexity of such algorithms can be expressed as $O^{*}(f(d))$, by hiding the polynomial terms in $n$. We recommend the interested reader to \cite{downey2013fundamentals} for a more comprehensive overview of the topic.\\\\ \textbf{Problem Definition}: A \textit{matrix} can be considered as a set of rows (columns) together with an order on this set \cite{dom2009recognition}. Here, in this paper, the term matrix always refer to a \textit{binary matrix}. For a given matrix $M$, $m_{ij}$ refers to the entry corresponding to $i^{th}$ row and $j^{th}$ column of $M$. Matrix having at most $x$ ones in each column and at most $y$ ones in each row is denoted as $(x,y)$-matrix. A $(2,*)$-matrix can contain at most two ones per column and there is no bound on the number of ones per row. A $(*,2)$-matrix has no restriction on the number of ones per column and have at most two ones per row. Given an $m \times n$ matrix $M$, let $R(M)=\lbrace r_{1}, r_{2}, \ldots, r_{m} \rbrace$ and $C(M)=\lbrace c_{1}, c_{2}, \ldots, c_{n} \rbrace$ denote the sets of rows and columns of $M$, respectively. Here, $r_{i}$ and $c_{j}$  denote the binary vectors corresponding to row $\vec{r_{i}}$ and column $\vec{c_{j}}$ of $M$, respectively.  For a subset $R^{'} \subseteq R(M)$ of rows, $M[R^{'}]$ and $M \backslash R^{'}$ denote the submatrix induced on $R^{'}$ and $R(M) \backslash R^{'}$ respectively. Similarly, for a subset $C^{'} \subseteq C(M)$ of columns, the submatrix induced on $C^{'}$ and $C(M) \backslash C^{'}$ are denoted by $M[C^{'}]$ and $M \backslash C^{'}$ respectively. Let $A(M)= \lbrace ij \mid m_{ij}=1 \rbrace$  and $B(M)= \lbrace ij \mid m_{ij}=0 \rbrace$ be the set of indices of all $1$-entries and $0$-entries respectively in $M$. We present the formal definitions of the problems $d$-$SC1S$-$R$, $d$-$SC1S$-$C$, $d$-$SC1S$-$RC$, $d$-$SC1P$-$0E$ and $d$-$SC1P$-$01E$ as follows.\\\\
\noindent\fbox{
    \parbox{5.20 in}{
\textbf{\hspace{0.7 in}Simultaneous Consecutive Ones Submatrix ($SC1S$) Problems}\\  
\textbf{\textit{Instance:}} 	$<M, d>$- An $ m \times n$ matrix $M$ and an  integer $d \geq 0$.\\
\textbf{\textit{Parameter:}} $d$.\\
\textbf{\textit{d-SC1S-R:}} Does there exist a set $R^{'} \subseteq R(M)$, with $\vert R^{'} \vert \leq d$ such that $M \backslash R^{'}$ satisfies the $SC1P$?\\
\textbf{\textit{d-SC1S-C:}} Does there exist a set $C^{'} \subseteq C(M)$, with $\vert C^{'} \vert \leq d$ such that $M \backslash C^{'}$ satisfies the $SC1P$?\\
\textbf{\textit{d-SC1S-RC:}} Does there exist sets $ R^{'} \subseteq R(M)$, and $ C^{'} \subseteq C(M)$, with $\vert R^{'} \vert + \vert C^{'} \vert \leq d$ such that  $((M \backslash  R^{'}) \backslash C^{'})$ satisfies the $SC1P$?
}
}
\newline
\noindent\fbox{%
    \parbox{5.20 in}{%
\textbf{\hspace{0.7 in} Simultaneous Consecutive Ones Editing ($SC1E$) Problems }  \\
\textbf{\textit{Instance:}} $<M, d>$- An $ m \times n$ matrix $M$ and an  integer $d \geq 0$.\\
\textbf{\textit{Parameter:}} $d$.\\
\textbf{\textit{d-SC1P-1E \textup{\cite{oswald2009simultaneous}}:}} Does there exist a set $A^{'} \subseteq A(M)$,  with $\vert A^{'} \vert\leq d$ such that the resultant  matrix obtained by flipping the entries of  $A^{'}$ in $M$ satisfies the $SC1P$?
\\
\textbf{\textit{d-SC1P-0E:}} Does there exist a set $B^{'} \subseteq B(M)$, with $\vert B^{'} \vert \leq d$ such that the resultant  matrix obtained by flipping the entries of  $B^{'}$ in $M$ satisfies the $SC1P$?    \\
\textbf{\textit{d-SC1P-01E:}} Does there exist a set $I \subseteq A(M) \cup B(M)$, with $\vert I \vert \leq d$ such that the resultant matrix obtained by flipping the entries of $I$ in $M$ satisfies the $SC1P$?
}
}
\paragraph{} \textbf{Complexity Status:} Oswald and Reinelt \cite{oswald2009simultaneous} posed the decision version of the \textsc{Sc$ \mathsmaller 1$p-$1$-Flipping} problem as \textit{$k$-augmented simultaneous consecutive ones property} and showed that it is NP-complete even for $(*,2)$-matrices. To the best of our knowledge, the parameterized problems posed under $SC1S$ and $SC1E$ category are not explicitly mentioned in the literature. Also, the classical complexity and parameterized complexity of $SC1S$ and $SC1E$ problems are not known prior to this work.\\ 

\noindent \textbf{Our Results:} We investigate the classical complexity and fixed-parameter tractability of $SC1S$ and $SC1E$ problems (defined above). We prove the NP-completeness of the decision versions of $SC1S$ and $SC1E$ problems except for the \textsc{Sc$ \mathsmaller 1$p-$1$-Flipping} problem. Using bounded search tree technique, few reduction rules and related results from the literature \cite{cao2016chordal,kaplan1999tractability}, we present fixed-parameter tractable algorithms for $d$-$SC1S$-$R$, $d$-$SC1S$-$C$, $d$-$SC1S$-$RC$ and $d$-$SC1P$-$0E$ problems on \textit{general matrices} (where there is no restriction on the number of ones in rows and columns) with run-times $O^{*}(8^{d})$, $O^{*}(8^{d})$, $O^{*}(2^{O(dlogd)})$ and $O^{*}(18^{d})$ respectively. 

For $(2,2)$-matrices, we observe that $SC1S$ and $SC1E$ problems are solvable in polynomial-time. We also give improved FPT algorithms for $SC1S$ and $SC1E$ problems on certain restricted matrices. We summarize our FPT results in the following table.
\begin{center}
\begin{tabular}{ |p{3.3cm}|p{2.80cm}|p{2.80cm}|p{2.80cm}|}
 \hline
 \color{black}\textbf{Problem} &\color{black}\textbf{($2,*$)-matrix} &\color{black}\textbf{($*,2$)-matrix}&\color{black}\textbf{($*,*$)-matrix}\\
 \hline 
  $d$-$SC1S$-$R/C$ &$O^{*}(4^{d}/3^{d})$ & $O^{*}(3^{d}/4^{d})$ & $O^{*}(8^{d})$\\
 $d$-$SC1S$-$RC$ &$O^{*}(7^{d})$&$O^{*}(7^{d})$&$O^{*}(2^{O(dlogd)})$\\
 $d$-$SC1P$-$0E$ &irrelevant&irrelevant&$O^{*}(18^{d})$\\
 $d$-$SC1P$-$1E$ &$O^{*}(6^{d})$&$O^{*}(6^{d})$&?\\ 
 $d$-$SC1P$-$01E$ &irrelevant&irrelevant&?\\
  \hline
\end{tabular}
\end{center}
\noindent Here, we observe that while defining $d$-$SC1P$-$0E$ and $d$-$SC1P$-$01E$ problems on $(2,*)/(*,2)$-matrix, flipping of $0$-entries, may change the input matrix to one which is not a $(2,*)/(*,2)$-matrix. We also observe that on $(2,*)$-matrices and $(*,2)$-matrices, $SC1S$ and $SC1E$ problems, except \textsc{Sc$ \mathsmaller 1$p-$0$-Flipping} and \textsc{Sc$ \mathsmaller 1$p-$01$-Flipping}, admit constant factor polynomial-time approximation algorithms. \\\\
\textbf{Motivation : } In Bioinformatics \cite{konig2006discovering}, to discover functionally meaningful patterns from a vast amount of gene expression data, one needs to construct the metabolic network of genes using knowledge about their interaction behavior.  A \textit{metabolic network} is made up of all chemical reactions that involve metabolites, and a \textit{metabolite} is the intermediate end product of metabolism. To obtain functional gene expression patterns from this metabolic network, an adjacency matrix of metabolites is created, and clusters of ones are located in the adjacency matrix. One way to find the clusters of ones is to transform the adjacency matrix into a matrix having the $SC1P$ by flipping $0$'s to $1$'s. This practically motivated problem is posed as an instance of the $d$-$SC1P$-$0E$ problem as follows: \\\\
\noindent\fbox{%
    \parbox{5.20 in}{%
\textbf{\hspace{1.6 in} Finding Clusters of Ones}  \\
\textbf{\textit{Instance:}} $<M,d>$, where $M$ is an adjacency matrix of metabolites and $d \geq 0$.\\
\textbf{\textit{Parameter:}} $d$.\\
\textbf{\textit{Question:}} Does there exist a set of $0$-entries of size at most $d$ in $M$, whose flipping results in a matrix with the $SC1P$?
    }
}\\\\
The fixed-parameter tractability of $d$-$SC1P$-$0E$ problem shows that finding clusters of ones from metabolic networks is also FPT.  Another theoretically motivated problem in the area of Graph theory is \textsc{Biconvex Deletion}. An immediate consequence of the fixed-parameter tractability of $d$-$SC1S$-$RC$ problem is that \textsc{Biconvex Deletion} problem is FPT. In addition, the fixed-parameter tractability of $d$-$SC1P$-$0E$ problem shows that \textsc{Biconvex Completion} problem is also FPT. Several practically relevant problems (scheduling, matching, etc \cite{lipski1981efficient, peng2007treewidth}) are polynomial-time solvable on biconvex graphs. \\\\
\noindent\fbox{%
 \label{bicnvx}   \parbox{5.20 in}{%
\textbf{\hspace{1.3 in} Problems on Biconvex Graphs}  \\
\textbf{\textit{Instance:}} $<G,d>$, where $G$=$(V_{\mathsmaller{1}},V_{\mathsmaller{2}},E)$ is a bipartite graph with $\vert V_{1} \vert =n, \vert V_{\mathsmaller{2}} \vert=m$ and $d \geq 0$.\\
\textbf{\textit{Parameter:}} $d$.\\
{\textit{\textsc{Biconvex Deletion}:}} Does there exist a set $D \subseteq V_{\mathsmaller{1}} \cup V_{\mathsmaller{2}}$, with $\vert D \vert \leq d$ such that $G[(V_{\mathsmaller{1}} \cup V_{\mathsmaller{2}}) \backslash D]$ is a biconvex graph ?\\
\textit{\textsc{Biconvex Edge Deletion:}} Does there exist a set $E' \subseteq E$, with $\vert E' \vert \leq d$ such that $G$=$(V_{\mathsmaller{1}},V_{\mathsmaller{2}}, E \backslash E')$ is a biconvex graph ?\\
\textit{\textsc{Biconvex Completion:}} Does there exist a set $E' \subseteq (V_{\mathsmaller{1}} \times V_{\mathsmaller{2}}) \backslash E$, with $\vert E' \vert \leq d$ such that $G$=$(V_{\mathsmaller{1}},V_{\mathsmaller{2}}, E \cup E')$ is a biconvex graph ?
    }
}\\\\ 

In addition, the FPT algorithm for $d$-$SC1S$-$R$ on $(2,*)$-matrices shows that \textsc{Proper Interval Vertex Deletion} (Section \ref{proper}) problem on \textit{triangle-free} graphs is FPT (using Lemma \ref{lem1}) with a run-time of $O^{*}(4^{d})$, where $d$ denotes the number of allowed vertex deletions. The FPT algorithm for $d$-$SC1P$-$1E$ on $(2,*)$-matrices shows that \textsc{Biconvex Edge Deletion} problem is fixed-parameter tractable on certain  bipartite graphs, in which the degree of all vertices in one partition is at most two.\\\\
\textbf{Techniques Used:} Our results rely on the following forbidden submatrix characterization of the $SC1P$ (see Figure \ref{forbidden}) by Tucker \cite{tucker1972structure}.\begin{theorem}\label{thm2} \textup{(\cite [Theorem 11] {tucker1972structure})} A matrix $M$ has the $SC1P$ if and only if no submatrix of $M$, or of the transpose of $M$, is a member of the configuration (see Section \ref{config})  
 of $M_{I_{k}}(k \geq 1)$, $M_{2_{1}}$, $M_{2_{2}}$, $M_{3_{1}}$, $M_{3_{2}}$  and $M_{3_{3}}$.\end{theorem}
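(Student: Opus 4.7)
The plan is to derive the SC1P characterization directly from Tucker's classical C1P characterization applied to both $M$ and $M^T$. The key observation is that $M$ has the SC1P if and only if $M$ has the C1P for columns and $M^T$ has the C1P for columns. By Tucker's original theorem for the C1P, a matrix has the C1P for columns iff it contains none of the matrices $\mathrm{MI}_k$ ($k \geq 1$), $\mathrm{MII}_k$ ($k \geq 1$), $\mathrm{MIII}_k$ ($k \geq 1$), $\mathrm{MIV}$, $\mathrm{MV}$ as a submatrix. Hence $M$ has the SC1P iff neither $M$ nor $M^T$ contains any of these Tucker matrices as a submatrix.

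The forward direction is then straightforward: if $M$ has the SC1P then some row permutation and column permutation simultaneously produce consecutive blocks of ones in every row and column, and this property is inherited by every submatrix. One then directly verifies that none of $M_{I_k}$, $M_{2_1}$, $M_{2_2}$, $M_{3_1}$, $M_{3_2}$, $M_{3_3}$ admits such a simultaneous arrangement, essentially by inspecting the circular obstruction in $M_{I_k}$ and the explicit small patterns in the remaining five matrices.

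For the reverse direction I would contrapose: assume $M$ does not have the SC1P, so either $M$ or $M^T$ fails to have the C1P for columns; by transposing if necessary, assume $M$ itself fails. By Tucker's C1P theorem $M$ contains one of $\mathrm{MI}_k$, $\mathrm{MII}_k$, $\mathrm{MIII}_k$, $\mathrm{MIV}$, $\mathrm{MV}$. The crucial reduction step is to show that each of the ``large'' Tucker matrices $\mathrm{MII}_k$ and $\mathrm{MIII}_k$ (for $k$ beyond the smallest values), as well as $\mathrm{MIV}$ and $\mathrm{MV}$, already contains, either in itself or in its transpose, one of the six listed configurations. The finitely many minimal witnesses that survive this pruning are precisely $M_{I_k}$, $M_{2_1}$, $M_{2_2}$, $M_{3_1}$, $M_{3_2}$, $M_{3_3}$ up to taking transposes.

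The main obstacle is this pruning casework: one has to examine the internal block structure of each Tucker matrix and identify, by choosing appropriate row/column subsets, an embedding of one of the six listed configurations (or their transposes). This is essentially a delicate combinatorial bookkeeping argument carried out in Tucker's 1972 paper; I would therefore follow his proof, organizing the cases by which large Tucker matrix appears and by whether the obstruction to column-C1P in the chosen submatrix lives in $M$ or in $M^T$. Once the six configurations are verified to be both necessary (forward direction) and sufficient (reverse direction via the pruning), the biconditional follows.
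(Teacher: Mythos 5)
This statement is quoted in the paper as a classical result (Tucker 1972, Theorem 11) and the paper supplies no proof of its own, so the only meaningful comparison is with Tucker's original argument --- which is exactly the route you outline: reduce SC1P to C1P for $M$ and for $M^T$, invoke Tucker's five-family forbidden-submatrix characterization of the C1P, and then prune the infinite families $M_{II_k}$, $M_{III_k}$ (and $M_{IV}$, $M_{V}$) down to the finitely many minimal configurations $M_{2_1}$, $M_{2_2}$, $M_{3_1}$, $M_{3_2}$, $M_{3_3}$ together with the self-dual family $M_{I_k}$. Your outline is correct, but be aware that it defers the two genuinely nontrivial steps to Tucker: the pruning casework, and the preliminary fact (stated without proof in the paper's introduction) that having the C1P for rows and the C1P for columns \emph{separately} already yields a single simultaneous pair of permutations --- as written, your proposal is an accurate proof sketch rather than a self-contained proof, which is appropriate for a cited result.
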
 \noindent That is, a matrix $M$ has the $SC1P$ if and only if no submatrix of $M$ is a member of the configuration of $M_{I_{k}} (k \geq 1)$, $M_{2_{1}}$,  $M_{2_{2}}$, $M_{3_{1}}$, $M_{3_{2}}$,  $M_{3_{3}}$ or their transposes. We refer to the set of all forbidden submatrices of the $SC1P$ as $F_{SC1P}$.
 
\noindent $F_{SC1P}=\{M_{\mathsmaller{I_{k}}}, M_{\mathsmaller{2_{1}}}, M_{\mathsmaller{2_{2}}}, M_{\mathsmaller{3_{1}}}, M_{\mathsmaller{3_{2}}}, M_{\mathsmaller{3_{3}}}, M^{T}_{\mathsmaller{I_{k}}}, {M^{T}_{\mathsmaller{2_{1}}}}, {M^{T}_{\mathsmaller{2_{2}}}}, {M^{T}_{\mathsmaller{3_{1}}}}, {M^{T}_{\mathsmaller{3_{2}}}}, {M^{T}_{\mathsmaller{3_{3}}}}\}$, where $k \geq 1$.\label{fsc1p}\\  
 \begin{figure}[t]
  \begin{minipage}{.35\linewidth}
    \centering
    \[\left[\begin{array}{ccccccc}
      1 & 1 & 0 & . & . & . & 0\\   
      0 & 1 & 1 & 0 & . & . & 0\\
      0 & 0 & 1 & 1 & 0 & . & 0\\
      . & . & . & . & . & . & .\\
      0 & . & . & . & 0 & 1 & 1\\
      1 & 0 & . & . & . & 0 & 1\\
    \end{array}\right]\]
   \textbf{ $M_{I_{k}}$, $k \geq 1$ ($k+2$ rows and $ k+2$ columns) }
  \end{minipage}%
  \begin{minipage}{.3\linewidth}
    \centering
    \[\left[\begin{array}{cccc}
      1 & 1 & 0 & 0\\    
      0 & 1 & 1 & 0\\
      0 & 1 & 1 & 1\\
      1 & 1 & 0 & 1\\
    \end{array}\right]\]
    \textbf{\hspace{0.0 in}$M_{2_{1}}$}
  \end{minipage}
  \begin{minipage}{.3\linewidth}
    \centering
    \[\left[\begin{array}{ccccc}
     1 & 1 & 0 & 0 & 0\\   
     0 & 1 & 1 & 0 & 0\\
     0 & 0 & 1 & 1 & 0\\
     0 & 1 & 1 & 1 & 1\\
     1 & 1 & 1 & 0 & 1\\       
    \end{array}\right]\]
    \textbf{$M_{2_{2}}$}
   \end{minipage}
   
   \begin{minipage}{.35\linewidth}
    \centering
    \[\left[\begin{array}{cccc}
      1 & 1 & 0 & 0 \\          
      0 & 1 & 1 & 0 \\
      0 & 1 & 0 & 1 \\
    \end{array}\right]\]
    \textbf{$M_{3_{1}}$}
   \end{minipage}
   \begin{minipage}{.3\linewidth}
    \centering
    \[\left[\begin{array}{ccccc}
      1 & 1 & 0 & 0 & 0\\  
      0 & 1 & 1 & 0 & 0\\  
      0 & 0 & 1 & 1 & 0\\  
      0 & 1 & 1 & 0 & 1\\          
    \end{array}\right]\]
  \noindent \textbf{$M_{3_{2}}$}
   \end{minipage}
   \begin{minipage}{.3\linewidth}
    \centering
    \[\left[\begin{array}{cccccc}
      1 & 1 & 0 & 0 & 0 & 0 \\    
      0 & 1 & 1 & 0 & 0 & 0 \\  
      0 & 0 & 1 & 1 & 0 & 0 \\ 
      0 & 0 & 0 & 1 & 1 & 0 \\ 
      0 & 1 & 1 & 1 & 0 & 1 \\ 
    \end{array}\right]\]
    \textbf{\hspace{0.3 in}$M_{3_{3}}$}
   \end{minipage}
  \caption{A subset of the forbidden submatrices for the $SC1P$  \cite{tucker1972structure}.\label{forbidden}}
\end{figure}\\

\noindent For a given matrix $M$, while solving $SC1S$ and $SC1E$ problems, a recursive branching algorithm first destroys all fixed size forbidden submatrices from $F_{SC1P}$. For $d$-$SC1S$-$R/C/RC$  and $d$-$SC1P$-$0E$ problems, the number of branches for this step will be at most $6/6/11$ and $18$ respectively. If the resultant matrix still does not have the $SC1P$, then the only forbidden submatrices that can remain in $M$ are of type $M_{\mathsmaller{I_{k}}}$ and $M^{T}_{\mathsmaller{I_{k}}}$, where $k \geq 1$. 

In $d$-$SC1S$-$R/C$ and $d$-$SC1S$-$RC$ problems, we reduce the resultant matrix at each leaf node of the bounded search tree to an instance of $d$-$COS$-$R$ (Section \ref{dcosr}) and \textsc{Chordal Vertex Deletion} (Section \ref{chrdldef}) problems respectively. Then, we apply algorithms of \textit{$d$-$COS$-$R$} (Theorem \ref{lemma11}) and \textsc{Chordal Vertex Deletion} (Theorem \ref{thmchrdl}) problems to the reduced instances of $d$-$SC1S$-$R/C$ and $d$-$SC1S$-$RC$ problems respectively. Finally, the output of $d$-$SC1S$-$R/C$ and $d$-$SC1S$-$RC$ problems on $M$, relies on the output of $d$-$COS$-$R$ and Chordal-Vertex-Deletion algorithms respectively on the reduced instances. 

For $d$-$SC1P$-$0E$ problem, we prove in Section \ref{etrowcolumn} that, the presence of a large $M_{\mathsmaller{I_{k}}}/ M^{T}_{\mathsmaller{I_{k}}}$ (where $k > d$) is enough to say that we are dealing with a No instance, but this is not the case, for $d$-$SC1S$-$R/C/RC$ and $d$-$SC1P$-$1E/01E$ problems. Using a result on the number of $4$-cycle decompositions of an even $n$-cycle where $n \geq 6$, from \cite{kaplan1999tractability}, we show in Section \ref{etrowcolumn} that, the number of ways to destroy an $M_{\mathsmaller{I_{k}}}/ M^{T}_{\mathsmaller{I_{k}}}$ (where $k \leq d$) in $d$-$SC1P$-$0E$ is equal to the number of ternary trees with $k$-$1$ internal nodes, which is crucial for our FPT algorithm. We prove in Section \ref{fpt0e} that, the number of ternary trees with $k$-$1$ internal nodes can be improved from $8^{k-1}$ to $6.75^{k-1}$, using Stirlings approximation (Lemma \ref{main2}). \\\\
\noindent \textbf{Organization of the paper: }In Section \ref{prel}, we provide necessary preliminaries and observations. Section \ref{esly} presents polynomial-time algorithms for $SC1S$ and $SC1E$ problems on $(2,2)$-matrices. The classical complexity and fixed-parameter tractability of $SC1S$ and $SC1E$ problems are described in Sections  \ref{estrc} and \ref{etrowcolumn} respectively. Last section draws conclusions and gives an insight to further work.
\section{Preliminaries}\label{prel}
In this section, we present definitions and notations related to binary matrix and graphs associated with binary matrix. We recall the definition of a few graph classes that are related to the $SC1P$. For the sake of completeness, we also define some commonly known matrices that are used to represent graphs. We also state a few results that are used in proving the NP-completeness and fixed-parameter tractability of the problems posed in Section \ref{intro}.

\subsection{\textbf{Graphs}} \noindent A \textit{graph} $G$ is defined as a tuple $G=(V, E)$, where $V=\{v_{1},v_{2}, \ldots ,v_{n}\}$ is a finite set of vertices and $E=\{e_{1},e_{2}, \ldots ,e_{m}\}$ is a finite set of edges. Throughout this paper, we consider $\vert V \vert =n$ and $\vert E \vert =m$ respectively. All graphs discussed in this paper shall always be undirected and simple. We refer the reader to  \cite{west2001introduction} for the standard definitions and notations related to graphs. A sequence of distinct vertices $(u_{1},u_{2}, \ldots ,u_{n})$ with $u_{i}$ adjacent to $u_{i+1}$ for each $1 \leq i < n $ is called a $u_{1}$-$u_{n}$ \textit{path}. A \textit{Hamiltonian path} is a path that visits every vertex exactly once. A \textit{cycle} is a graph consisting of a path $(u_{1},u_{2}, \ldots u_{n})$ and the additional edge $\{u_{n},u_{1}\}$. The \textit{length of a path (cycle)} is the number of edges present in it. A cycle (path) on $n$ vertices is denoted as $C_{n}$ ($P_{n}$). Two vertices $u,w$ in $V$ are \textit{connected}, if there exists a path between $u$ and $w$ in $G$. A graph $G=(V,E)$ is a \textit{connected graph}, if there exists a path between every pair of vertices in $V$. A graph $G^{'}=(V^{'}, E^{'})$  is a \textit{subgraph} of $G$, if $V^{'}\subseteq V$ and $E^{'} \subseteq E$. The subgraph of $G$ \textit{induced} by $V^{'}$, denoted as $G[V^{'}]$, is the graph $G^{'}=(V^{'}, E^{'})$ with $V^{'} \subseteq V$  and  $E^{'} = \{\{v, w\} \in E \mid v \in V^{'}$  and $w \in V^{'} \}$.  A graph $G=(V,E)$ is called a \textit{triangle-free}  (\textit{$C_{3}$-free}) graph, if it does not contain $C_{3}$ as an induced subgraph. A \textit{connected component} of $G$ is a maximal connected subgraph of $G$. \textit{Deletion} of a vertex $v \in V$ means, deleting $v$ and all edges incident on $v$. 

A \textit{chord} in a  cycle is an edge that is not part of the cycle but connects  two non-consecutive vertices in the cycle. A \textit{hole} or \textit{chordless cycle} is a cycle of length at least four, where no chords exist. In other words, a chordless cycle $C$ is a cycle ($u_{1},u_{2}, \ldots u_{n})$ with $n \geq 4$, and the additional constraint that there exists no edges of the form $(u_{i},u_{j})$, where $j\neq i\pm 1$ and $2 \leq i,j \leq n-1$. A graph is \textit{chordal} if it contains no hole. That is, in a chordal graph, every cycle of length at least four contains a chord. A chord $(u,v)$ is an \textit{odd chord} in an even-chordless cycle $C$, if the number of edges in the paths connecting $u$ and $v$ is odd. For an even-chordless cycle $C$, a \textit{4-cycle decomposition} \label{4-cycle} is a minimal set $O$, of odd chords in $C$, such that $C \cup O$ does not have induced even chordless cycles of length at least six. We need the following lemma for our algorithms described in Section \ref{esly}.   \begin{lemma}\label{prop6}\textup{(\cite[Theorem 2] {uno2014efficient})} In a graph $G=(V,E)$, a chordless cycle can be detected in $O(n + m)$-time, where $n$ and $m$ are the number of vertices and edges in $G$ respectively.\end{lemma}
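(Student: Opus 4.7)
The plan is to exploit the classical equivalence that $G$ contains a chordless cycle if and only if $G$ is not chordal. Hence it suffices to decide chordality and, when $G$ is non-chordal, exhibit a hole as a witness; both phases admit $O(n+m)$-time implementations.

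First I would run Lexicographic Breadth-First Search on $G$ to obtain a candidate elimination ordering $\sigma = (v_1, \ldots, v_n)$ in $O(n+m)$ time, then apply the Rose--Tarjan--Lueker routine to check in $O(n+m)$ time whether $\sigma$ is a perfect elimination ordering, i.e., whether for every $i$ the set $N^{+}(v_i) = \{v_j \in N(v_i) : j > i\}$ induces a clique. If the check succeeds, $G$ is chordal and no chordless cycle exists. Otherwise the routine returns a concrete certificate of failure: a vertex $v$ with two higher-numbered neighbours $u, w \in N^{+}(v)$ such that $\{u, w\} \notin E$.

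To convert the certificate into a chordless cycle in linear time, I would run a single BFS from $u$ in $G \setminus \{v\}$ and recover a shortest $u$--$w$ path $P = x_0, x_1, \ldots, x_k$ with $x_0 = u$ and $x_k = w$. Because $P$ is a shortest path, it is induced. Let $S = \{\, l : \{v, x_l\} \in E \,\}$; we have $\{0, k\} \subseteq S$, and in the non-degenerate situation $S$ contains two consecutive indices $a < b$ with $b - a \geq 2$. The cycle $v, x_a, x_{a+1}, \ldots, x_b, v$ is then chordless: the subpath $x_a, \ldots, x_b$ is induced because $P$ is, and by the choice of $a, b$ the only $v$-incidences inside the cycle are the two intended edges $v x_a$ and $v x_b$, so the cycle has length at least four and no chord.

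The hard part is handling the degenerate case in which every internal vertex of $P$ is adjacent to $v$, so that no pair $(a,b)$ with $b - a \geq 2$ exists and the chosen violation does not directly yield a hole through $v$. Resolving this requires a more careful selection of the certificate (for instance, using the specific violating triple produced by LexBFS together with its layer structure, as in the Tarjan--Yannakakis framework) so that one can always locate a suitable $u$--$w$ path whose interior avoids $N(v)$. Granting this, all three phases---LexBFS, PEO verification, and the single BFS used for cycle extraction---run in $O(n+m)$ time, giving the claimed linear-time bound for chordless-cycle detection.
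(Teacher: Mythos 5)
The paper offers no proof of this lemma; it is imported verbatim as Theorem~2 of the cited work of Uno and Satoh, so there is no in-paper argument to measure you against. Your overall route --- reduce detection of a chordless cycle to chordality testing (LexBFS plus perfect-elimination-ordering verification), then convert a failed verification into an explicit hole --- is the standard linear-time strategy and is sound in outline. The problem is that the step you label ``the hard part'' and then grant yourself is precisely where all the difficulty of the statement lives, so as written the argument contains a genuine gap rather than a deferred routine detail.

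Concretely, two things go wrong in your extraction phase. First, an arbitrary violating triple $(v;u,w)$ need not lie on any hole: take the $4$-cycle $(a,b,c,d)$ together with a fifth vertex $v$ adjacent to all of $a,b,c,d$. Then $a,c\in N(v)$ are nonadjacent, every $a$--$c$ path in $G\setminus\{v\}$ has all of its internal vertices inside $N(v)$, and no chordless cycle of $G$ passes through $v$ at all (the unique hole is $(a,b,c,d)$). So your BFS in $G\setminus\{v\}$ always lands in the degenerate case, and no reselection of indices $a<b$ can rescue it; the correct move is to search for a $u$--$w$ path in $G\setminus\bigl(N[v]\setminus\{u,w\}\bigr)$, and for a badly chosen triple such a path may simply not exist (in the example, $a$ and $c$ become isolated). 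Second, even for a well-chosen triple the shortest path returned by a plain BFS in $G\setminus\{v\}$ may pass through $N(v)$ when a good path exists elsewhere, so linearity of the extraction also depends on searching the right subgraph. Establishing that the particular certificate produced by the LexBFS-based check always admits a $u$--$w$ path whose interior avoids $N(v)$ requires the label/layer structure of LexBFS (this is the content of the Tarjan--Yannakakis addendum on exhibiting a chordless cycle), and that argument is exactly what is missing. To close the gap you would need to supply it, or else fall back on the certificate-producing algorithm of the cited source, whose Theorem~2 is this statement.
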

 \noindent Given a graph $G=(V,E)$, and a non-negative integer $k$, \textsc{Chordal Vertex Deletion} problem decides whether there exists a set of vertices of size at most $k$ in $V$, whose deletion results in a chordal graph. We used the following theorem for our FPT \label{chrdldef} algorithm described in Section \ref{fptrc}.
\begin{theorem}\textup{(\cite[Theorem 1.1] {cao2016chordal})} \label{thmchrdl}
\textsc{Chordal Vertex Deletion} problem is fixed-parameter tractable with a run-time of $O^{*}(2^{klogk})$, where $k$ is the number of allowed vertex deletions.
\end{theorem}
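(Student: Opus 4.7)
The plan is to combine iterative compression with bounded search-tree branching on short holes and a structured enumeration phase that handles long holes. Recall that a graph is chordal if and only if it contains no induced cycle of length at least four (no \emph{hole}); thus any solution must contain at least one vertex from every hole of $G$.

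First I would reduce to the \emph{disjoint} version of the problem via iterative compression: process the vertices of $G$ one at a time, maintaining a chordal deletion set $W$ of size at most $k+1$ for the graph processed so far. At each step, I try to compress $W$ to size $k$, which amounts to finding a solution $S$ of size at most $k$ disjoint from $W$, where $G - W$ is already chordal. This standard transformation costs only a factor $2^{k+1} \cdot n$ and reduces the task to a single well-structured instance.

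Second, I would perform bounded-depth branching on short induced cycles. While the current graph contains an induced cycle $C_\ell$ with $\ell \le c$ for some absolute constant $c$, branch by committing each of its $\ell$ vertices in turn to $S$ and decrementing $k$. This destroys all short holes in $O^{*}(c^{k})$ time. Third, for the remaining long holes, I would exploit the chordal structure of $G - W$. Since $|W| \le k+1$, every long induced hole of $G$ traverses $G - W$ as a collection of chordless paths whose endpoints lie in $W$. Using a clique tree of $G - W$, I would enumerate the ways the sought solution $S$ can ``attach'' to $W$: each of the at most $k$ vertices of $S$ is charged to a pair of modulator vertices from $W \times W$ together with a position on the relevant chordless path through the clique tree. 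This yields at most $\bigl(O(k^{2})\bigr)^{k} = 2^{O(k \log k)}$ attachment patterns, and for each pattern the remaining task reduces to a polynomial-time check on the clique tree.

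The hardest step is the third: the structural analysis required to show that, once all short holes have been destroyed, every long hole is ``captured'' by a small-sized interaction with the modulator $W$, and that the number of such interactions realizable by a $k$-vertex solution is bounded by $k^{O(k)}$. The delicate part is controlling how chordless paths through the chordal part $G - W$ can be selected between pairs of modulator vertices while preserving chordality, which is where the $\log k$ in the exponent comes from. Multiplying the branching overhead $c^{k}$, the iterative-compression factor, and the pattern-enumeration cost $2^{O(k \log k)}$ yields the claimed $O^{*}(2^{k \log k})$ bound.
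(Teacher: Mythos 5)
The paper contains no proof of this statement to compare against: Theorem \ref{thmchrdl} is imported verbatim from \cite{cao2016chordal} and is used purely as a black box inside Algorithm \ref{alg6}. Your proposal therefore has to be measured against the actual Cao--Marx argument, and against that standard it is an outline whose decisive step is asserted rather than proved.

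The template (iterative compression, branching on short holes, a $2^{O(k\log k)}$ enumeration for long holes) is the right genre, but two things break down. First, after compression the holes of $G$ pass through $W$, so branching on ``each of the $\ell$ vertices'' of a short hole would place vertices of $W$ into the supposedly disjoint solution $S$; you must branch only on the non-$W$ vertices, and you never pin down what ``short'' means --- in the real algorithm the length threshold is a function of $k$, not an absolute constant $c$, and choosing it is part of the work. Second, and more seriously, the claim that each solution vertex can be charged to a pair of modulator vertices ``together with a position on the relevant chordless path through the clique tree,'' yielding $\bigl(O(k^{2})\bigr)^{k}$ patterns each resolvable by a polynomial-time check, is exactly the content of the theorem and is left unsupported. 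Between a fixed pair of vertices of $W$ there can be exponentially many induced paths through the chordal graph $G-W$, and exponentially many long holes sharing vertices in complicated ways; a single deleted vertex may be needed to destroy many holes that attach to $W$ differently, so the charging is not well defined and the $k^{O(k)}$ bound does not follow from what you wrote. Cao and Marx obtain it by a considerably more involved two-phase decomposition --- branching on a shortest hole and, when that hole is long, guessing a bounded amount of information about how the solution meets it and splitting the instance along clique separators --- rather than by the attachment-pattern enumeration you describe. As it stands, the hardest lemma is named but missing.
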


\noindent Here, we define certain graph classes that are related to the $SC1P$.
\begin{definition} \textup{A graph is an \textit{interval graph} if for every vertex, an interval on the real line can be assigned, such that two vertices share an edge, iff their corresponding intervals intersect. A graph is a \textit{proper interval} graph, if it is an interval graph that has an intersection model, in which no interval properly contains another.}\end{definition} \noindent Given a graph $G=(V,E)$, and a non-negative integer $k$, \textsc{Proper Interval Vertex Deletion} \cite{van2013proper} problem \label{proper} decides whether there exists a set of vertices of size at most $k$ in $V$, whose deletion results in a proper interval graph. 
\begin{definition}\textup{A graph $G = (V, E)$ is \textit{bipartite} if $V$ can be partitioned into two disjoint vertex sets $V_{\mathsmaller{1}}$ and $V_{\mathsmaller{2}}$ such that every edge in $E$ has one endpoint in $V_{\mathsmaller{1}}$ and the other endpoint in $V_{\mathsmaller{2}}$. A bipartite graph is denoted as $G=(V_{\mathsmaller{1}},V_{\mathsmaller{2}},E)$, where $V_{\mathsmaller{1}}$ and $V_{\mathsmaller{2}}$  are the two partitions of $V$.}\end{definition} \begin{definition}\textup{A bipartite graph is \textit{chordal bipartite} if each cycle of length at least six has a chord.}\end{definition}\noindent We observe that a bipartite graph $H$, which is an even chordless cycle of length $2n$, where $n \geq 3$ can be converted to a chordal bipartite graph by adding $n$-$2$ edges. This observation is also mentioned in a different form in \textup{(\cite[Lemma 4.2]{kaplan1999tractability})}. The number of ways to achieve this is given in the following lemma.
\begin{lemma}\textup{(\cite[Lemma 4.3]{kaplan1999tractability})}\label{main1}
Given a bipartite graph $H=(V_{\mathsmaller{1}}, V_{\mathsmaller{2}}, E)$, which is an even chordless cycle of length $2n$ (where $n \geq 3$), the number of ways to make $H$ a chordal bipartite graph  by adding $n$-$2$ edges is equal to the number of ternary trees with $n$-$1$ internal nodes and is no greater than $8^{n-1}$.
\end{lemma}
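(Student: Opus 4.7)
The plan is to prove the lemma in three stages: first translate the edge-addition problem into a combinatorial problem about quadrangulations of a $2n$-gon, then set up a bijection between those quadrangulations and ternary trees with $n-1$ internal nodes, and finally give a crude counting bound.

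First I would reformulate the chord-addition problem geometrically. Since $H$ is bipartite, every chord added must connect a vertex of $V_1$ to a vertex of $V_2$, so chord endpoints lie at odd cycle-distance. Draw $H$ as a convex $2n$-gon in the plane. A set of pairwise non-crossing odd chords partitions the interior into polygonal faces. I would show, via a short Euler-formula argument combined with the constraint that every face is even (bipartiteness), that adding exactly $n-2$ non-crossing odd chords and destroying every induced cycle of length $\geq 6$ is equivalent to partitioning the $2n$-gon into $n-1$ quadrilateral faces. Any two chords that cross can be replaced by non-crossing ones without affecting chordal-bipartiteness, so the non-crossing assumption is free; hence ``ways to make $H$ chordal bipartite with $n-2$ chords'' is exactly ``quadrangulations of the $2n$-gon''.

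Second, I would construct a bijection with ternary trees recursively. Fix once and for all a distinguished boundary edge $e_0$. In any quadrangulation, $e_0$ lies on the boundary of a unique quadrilateral $Q$; the other three edges of $Q$, listed in cyclic order around $Q$ starting from $e_0$, are each either a boundary edge of the $2n$-gon (in which case I attach a leaf) or an interior chord that cuts off a strictly smaller even sub-polygon (in which case I recursively build the ternary tree associated with that sub-polygon, again using the cut chord as the distinguished edge). The resulting rooted tree has three labelled child slots per internal node, yielding a ternary tree. Each quadrilateral contributes one internal node, so the tree has exactly $n-1$ internal nodes. The construction is clearly reversible: given a ternary tree with $n-1$ internal nodes, walk it in the same recursive fashion starting from the fixed edge $e_0$ to produce a unique quadrangulation.

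Third, I would bound the count. The number of ternary trees with $m$ internal nodes is the Fuss--Catalan number $\frac{1}{2m+1}\binom{3m}{m}$, and the trivial bound $\binom{3m}{m} \leq 2^{3m} = 8^m$ gives the desired $8^{n-1}$ upon setting $m = n-1$.

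The main obstacle is the bijection in the second stage: I must verify that fixing the base edge $e_0$ really eliminates the symmetries of the quadrangulation so that the map is injective, and that every ternary tree is realisable (so the map is surjective). The key point to nail down is that the three edges of the root quadrilateral other than $e_0$ have a canonical cyclic order inherited from the planar embedding of the $2n$-gon, which is what forces a genuine ternary (ordered) structure rather than an unordered one; the rest of the recursion then goes through smoothly by induction on $n$.
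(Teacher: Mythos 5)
The paper does not actually prove this lemma: it is imported verbatim from Kaplan, Shamir and Tarjan (Lemma~4.3 of the cited reference), so there is no in-paper proof to compare against. Your outline follows essentially the same route as that source (identify the valid chord sets with $4$-cycle decompositions, i.e.\ quadrangulations of the $2n$-gon, biject those with ternary trees via the root-quadrilateral recursion, and bound the Fuss--Catalan number $\frac{1}{2m+1}\binom{3m}{m}$ by $\binom{3m}{m}\leq 2^{3m}=8^{m}$ with $m=n-1$). Stages two and three are sound.

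The genuine gap is in stage one, in the sentence ``any two chords that cross can be replaced by non-crossing ones without affecting chordal-bipartiteness, so the non-crossing assumption is free.'' A replacement argument is admissible when you only want to exhibit \emph{some} optimal solution, but this is an exact counting statement: every set of $n-2$ added edges whose addition makes $H$ chordal bipartite is a separate ``way'' and must be counted. If a valid solution containing a crossing pair existed, it would not correspond to any quadrangulation and the count would strictly exceed the number of ternary trees, so you must prove that no such solution exists --- i.e.\ that any $n-2$ chords achieving chordal bipartiteness are pairwise non-crossing (they are automatically odd, since an even chord would destroy bipartiteness). This is true but requires an argument; for instance, on $C_{8}$ the crossing pair $\{v_{1}v_{4},\,v_{2}v_{7}\}$ leaves the chordless $6$-cycle $v_{1}v_{4}v_{5}v_{6}v_{7}v_{8}$, and in general one shows via the lower bound ``a chordless $2k$-cycle forces at least $k-2$ chords among its own vertices'' that a crossing pair wastes a chord, which a budget of exactly $n-2$ cannot afford. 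Two smaller points worth tightening: once non-crossing is established, your Euler count already forces all faces to be quadrilaterals without invoking chordal-bipartiteness (so that condition is doing work only in excluding crossings); and the converse direction --- that a quadrangulation really yields a chordal bipartite graph --- needs a sentence ruling out long \emph{induced} cycles that are not faces, e.g.\ by observing that the graph is outerplanar, so any cycle bounding more than one internal face acquires a chord.
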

We used the following lemma to get a tighter upper bound of $6.75^{n-1}$ for the number of ways to make $H$ a chordal bipartite graph.
\begin{lemma} \textup{\cite{stirling}}\label{main2} 
$\lim_{n\to\infty} n! = {\sqrt{2\pi{n}}{(\dfrac{n}{e})}^{n}}$ (This is well known as Stirlings approximation). 
\end{lemma}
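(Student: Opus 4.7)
The plan is to prove the asymptotic $n! \sim \sqrt{2\pi n}(n/e)^n$ (the limit statement as written should be read as the standard Stirling equivalence, since $n!\to\infty$). The natural approach is to pass to logarithms, so that the problem becomes estimating the sum $S_n = \ln(n!) = \sum_{k=1}^n \ln k$, and then to compare $S_n$ with the integral $\int_1^n \ln x\,dx = n\ln n - n + 1$. This reduces Stirling's formula to a careful bookkeeping of the error between a Riemann sum and its integral, followed by pinning down a single universal constant.

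The first main step is to apply the trapezoidal approximation (equivalently, the first two terms of Euler--Maclaurin) to $\ln x$ on $[1,n]$. Writing $\ln k$ as the trapezoidal value on $[k-1,k]$ plus a concavity correction, I obtain
\[
S_n \;=\; \int_1^n \ln x\,dx + \tfrac{1}{2}\ln n + \sum_{k=2}^{n} e_k,
\]
where each $e_k = \int_{k-1}^{k}\ln x\,dx - \tfrac{1}{2}(\ln(k-1)+\ln k)$ is $O(1/k^2)$ by a second-order Taylor expansion of $\ln x$ about the midpoint. Since $\sum_k 1/k^2$ converges, the tail sum $\sum e_k$ converges to a finite constant, giving
\[
S_n \;=\; n\ln n - n + \tfrac{1}{2}\ln n + C + o(1)
\]
for some constant $C$. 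Exponentiating yields $n! = e^{C}\sqrt{n}\,(n/e)^n\,(1+o(1))$, so the whole content of the lemma reduces to identifying $e^{C} = \sqrt{2\pi}$.

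The delicate step, and the one I would expect to occupy most of the work, is evaluating this constant. I would do this via Wallis's product $\prod_{k=1}^{\infty}\frac{2k\cdot 2k}{(2k-1)(2k+1)} = \tfrac{\pi}{2}$. Substituting the asymptotic $n! = e^{C}\sqrt{n}(n/e)^n(1+o(1))$ into the closed-form expression
\[
\frac{2^{4n}(n!)^4}{((2n)!)^2\,(2n+1)}
\]
for the $n$th Wallis partial product and taking $n\to\infty$ produces an equation of the form $e^{2C}/2 = \pi/2$, hence $e^{C} = \sqrt{2\pi}$, completing the proof. The only real obstacle is ensuring the error estimate in the trapezoidal step is uniform enough to survive being squared and ratioed against $((2n)!)^2$ in the Wallis computation; this is handled by keeping the $o(1)$ term and verifying it vanishes in the limit before the constant is isolated.
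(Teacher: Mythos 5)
The paper does not prove this lemma at all: it is stated purely as a citation to the literature (Stirling's approximation), and is used only as a black box in Lemma \ref{obs2} and Lemma \ref{lem2} to bound central binomial-type coefficients. Your proposal therefore goes well beyond what the paper does, and the route you take --- logarithms, trapezoidal/Euler--Maclaurin comparison of $\sum\ln k$ with $\int_1^n\ln x\,dx$ to get $n!=e^{C}\sqrt{n}\,(n/e)^n(1+o(1))$, then Wallis's product to identify $e^{C}=\sqrt{2\pi}$ --- is the classical, correct proof; it also implicitly repairs the paper's garbled statement (``$\lim_{n\to\infty}n!=\sqrt{2\pi n}(n/e)^n$'' only makes sense as $\lim_{n\to\infty} n!/\bigl(\sqrt{2\pi n}(n/e)^n\bigr)=1$, which is how you read it). Two small slips worth fixing: with your definition of $e_k$ the decomposition should read $S_n=\int_1^n\ln x\,dx+\tfrac12\ln n-\sum_{k=2}^n e_k$ (the sign is absorbed into $C$, so nothing downstream breaks); and in the Wallis step the limit of $2^{4n}(n!)^4/\bigl(((2n)!)^2(2n+1)\bigr)$ under your asymptotic is $e^{2C}/4$, not $e^{2C}/2$, which is exactly what is needed to get $e^{2C}=2\pi$ and hence $e^{C}=\sqrt{2\pi}$ (as written, $e^{2C}/2=\pi/2$ would give $e^{C}=\sqrt{\pi}$). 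With those corrections the argument is complete and self-contained, whereas the paper simply defers to the reference.
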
 The following lemma gives the number of ternary trees with $n$ internal nodes.
\begin{lemma} (\cite{graham1989concrete}, p.349)\label{tern}
The number of ternary trees with $n$ internal nodes is equal to $\frac {\binom {3n+1} {n}} {3n+1} = \frac {\binom {3n} {n}} {2n+1}$.
\end{lemma}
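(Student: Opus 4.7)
The plan is to prove this via the standard generating-function argument for Fuss--Catalan numbers. First I would set up a combinatorial recurrence for $T_{n}$, the number of ternary trees with $n$ internal nodes. A nonempty ternary tree is determined by its root together with its three ordered subtrees, so conditioning on how the remaining $n-1$ internal nodes split among the three subtrees gives $T_{0}=1$ and
\[
T_{n} \;=\; \sum_{i+j+k\,=\,n-1} T_{i}\,T_{j}\,T_{k}, \qquad n\geq 1.
\]
Translating this into the ordinary generating function $T(x)=\sum_{n\geq 0} T_{n}\,x^{n}$, the recurrence becomes the functional equation $T(x) = 1 + x\,T(x)^{3}$.

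Next I would apply the Lagrange inversion formula. Setting $w(x)=T(x)-1$ rewrites the functional equation as $w = x\,(1+w)^{3}$, which is of the form $w = x\,\phi(w)$ with $\phi(w)=(1+w)^{3}$ and $\phi(0)=1\neq 0$. Lagrange inversion then yields
\[
T_{n} \;=\; [x^{n}]\,w \;=\; \frac{1}{n}\,[w^{n-1}]\,(1+w)^{3n} \;=\; \frac{1}{n}\binom{3n}{n-1}, \qquad n\geq 1,
\]
and the $n=0$ case is checked directly.

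Finally I would reconcile this expression with the two forms stated in the lemma by straightforward factorial manipulation:
\[
\frac{1}{n}\binom{3n}{n-1} \;=\; \frac{(3n)!}{n!\,(2n+1)!} \;=\; \frac{1}{2n+1}\binom{3n}{n} \;=\; \frac{1}{3n+1}\binom{3n+1}{n},
\]
which gives both asserted closed forms simultaneously. The only delicate step is invoking Lagrange inversion correctly; the recurrence and the factorial rewriting are routine. An alternative, more combinatorial route would use the Dvoretzky--Motzkin cycle lemma applied to sequences that encode ternary-tree traversals, but the generating-function route is shorter and self-contained and is the one I would follow.
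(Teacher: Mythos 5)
Your proof is correct: the recurrence $T_n=\sum_{i+j+k=n-1}T_iT_jT_k$, the functional equation $T(x)=1+x\,T(x)^3$, and the Lagrange-inversion step all check out, and the factorial identities $\frac{1}{n}\binom{3n}{n-1}=\frac{1}{2n+1}\binom{3n}{n}=\frac{1}{3n+1}\binom{3n+1}{n}$ are right. Note that the paper itself offers no proof of this lemma --- it is quoted verbatim from \emph{Concrete Mathematics} --- and your derivation is essentially the standard one given there (via the generalized binomial series $\mathcal{B}_3(z)=1+z\,\mathcal{B}_3(z)^3$), so there is nothing to reconcile.
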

\begin{definition}\label{biconvex} \textup{A bipartite graph $G=(V_{\mathsmaller{1}},V_{\mathsmaller{2}},E)$ is \textit{biconvex} if the vertices of both $V_{\mathsmaller{1}}$ and $V_{\mathsmaller{2}}$ can be ordered, such that for every vertex $v$ in $V_{\mathsmaller{1}} \cup V_{\mathsmaller{2}}$, the neighbors of $v$ occur consecutively in the ordering.} \end{definition} \noindent Given a bipartite graph $G=(V_{\mathsmaller{1}},V_{\mathsmaller{2}},E)$, and a non-negative integer $k$, \textsc{Biconvex Deletion} problem decides whether there exists a set of vertices of size at most $k$ in $V_{1} \cup V_{2}$, whose deletion results in a biconvex graph. \textsc{Biconvex Deletion} problem can be shown to be NP-complete, using the results given by Yannakakis \cite{yannakakis1978node,yannakakis1981node}. \begin{definition} \label{chain}\textup{A bipartite graph $G=(V_{\mathsmaller{1}},V_{\mathsmaller{2}},E)$ is called a \textit{chain graph} \cite{natanzon2001complexity} if there exists an ordering $\pi$ of the vertices in $V_{\mathsmaller{1}}$, $\pi: \{1, 2, \ldots, \vert V_{\mathsmaller{1}} \vert\} \rightarrow V_{\mathsmaller{1}}$ such that $N(\pi(1)) \subseteq N(\pi(2)) \subseteq \ldots \subseteq N(\pi(\vert V_{\mathsmaller{1}} \vert))$, where $N(\pi(i))$ denotes the set of neighbours of $\pi(i)$ in $G$.}\end{definition} \noindent Given a bipartite graph $G=(V_{\mathsmaller{1}}, V_{\mathsmaller{2}}, E)$, and a non-negative integer $k$,  \textsc{$k$-Chain Completion} problem decides whether there exists a set of $k$ non-edges in $G$, whose addition transforms $G$ into a chain graph. Yannakakis \cite{yannakakis1981computing} showed that \textsc{$k$-Chain Completion} problem is NP-complete. He also developed finite forbidden induced subgraph characterization for chain graphs. Accordingly, a bipartite graph $G=(V_{\mathsmaller{1}},V_{\mathsmaller{2}},E)$ is a chain graph iff it does not contain $2K_{2}$ as an induced subgraph, where $K_{2}$ is a complete graph on two vertices. Given a bipartite graph, \textsc{$k$-Chain Editing} problem decides whether there exists a set of $k$ edge additions and deletions, which transforms $G$ into a chain graph. Drange et al. \cite{chainEditing} have shown that $k$-\textsc{Chain Editing} problem belongs to the class NP-complete.
\subsection{\textbf{Matrices}} \noindent Given an  $m \times n$ matrix $M$, the $n \times m$ matrix $M^{'}$ with $m_{ji}^{'} = m_{ij}$ is called the \textit{transpose} of $M$ and is denoted by $M^{T}$. Two matrices $M$ and $M^{'}$ are \textit{isomorphic} if $M$ is a permutation of the rows or$/$and columns of $M^{'}$. We say, a matrix $M$ \textit{contains} $M^{'}$, if $M$ contains a submatrix that  is isomorphic to $M^{'}$. The \textit{configuration} \label{config} of an $m \times n$ matrix $M$ is defined to be the set of all  $m \times n$ matrices which can be obtained from $M$ by row or$/$and column permutations. 

\noindent Here, we define some commonly known matrices that are used to represent graphs.
\begin{definition}\label{defn4}\textup{
The \textit{half adjacency matrix} \cite{dom2009recognition} of a bipartite graph $G=(V_{\mathsmaller{1}},V_{\mathsmaller{2}}$ $E)$ with $V_{\mathsmaller{1}} = \{u_{1}, \ldots, u_{n_{1}}\}$ and $V_{\mathsmaller{2}} = \{v_{1},\ldots, v_{n_{2}}\}$ is an $n_{1} \times n_{2}$ matrix $M_{\mathsmaller{G}}$  with $m_{ij} = 1$ iff $\{u_{i} , v_{j} \} \in  E$, where $1 \leq i \leq n_{1}$ and $1 \leq j \leq n_{2}$.}
\end{definition}
\noindent Every matrix $M$ can be viewed as the half adjacency matrix of a bipartite graph. The corresponding bipartite graph of $M$ is referred to as the \textit{representing graph} of $M$, denoted by $G_{\mathsmaller{M}}$. The \textit{representing graph} $G_{\mathsmaller{M}}$ \cite{dom2009recognition} of a matrix $M_{m \times n}$ is obtained as follows:
\begin{definition}\label{defn5}
\textup{For a matrix $M_{m \times n}$, $G_{\mathsmaller{M}}$ contains a vertex corresponding to every row and every column of $M$, and there is an edge between two vertices corresponding to $i^{th}$ row and $j^{th}$ column of $M$ iff the corresponding entry $m_{ij}= 1$, where $1 \leq i \leq m$ and $1 \leq j \leq n$.}
\end{definition}  
Characterizations of biconvex and chain graphs relating their half adjacency matrices are mentioned in Lemma \ref{biconvexlemm} and \ref{chnrs}.
\begin{lemma}\textup{\cite{tucker1972structure}} \label{biconvexlemm}
A bipartite graph is biconvex iff its half adjacency matrix has the $SC1P$.
\end{lemma}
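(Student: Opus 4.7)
The plan is to treat Lemma \ref{biconvexlemm} as an exercise in unfolding definitions, exhibiting an explicit bijection between (row permutation, column permutation) pairs of $M_G$ and (ordering of $V_1$, ordering of $V_2$) pairs of the bipartite graph $G$. By Definition \ref{defn4}, the rows of $M_G$ are indexed by $V_1=\{u_1,\ldots,u_{n_1}\}$ and the columns by $V_2=\{v_1,\ldots,v_{n_2}\}$, with $m_{ij}=1$ exactly when $\{u_i,v_j\}\in E$; so any row permutation of $M_G$ is tantamount to a linear ordering of $V_1$, and any column permutation is tantamount to a linear ordering of $V_2$.

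For the forward direction I would assume $G$ is biconvex with witnessing orderings $\sigma_1$ of $V_1$ and $\sigma_2$ of $V_2$, set $\pi_R:=\sigma_1$ and $\pi_C:=\sigma_2$, and consider the permuted matrix $M'_G$. In row $i$ of $M'_G$, corresponding to vertex $u=\sigma_1(i)$, the column positions carrying a $1$ are exactly those $j$ with $\sigma_2(j)\in N(u)$; since $\sigma_2$ was chosen so that $N(u)$ is consecutive, these positions form an interval, so row $i$ has consecutive $1$s. The symmetric argument for columns, using consecutiveness of $N(v)$ in the $\sigma_1$-ordering for each $v\in V_2$, yields consecutive $1$s in every column. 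Hence the single pair $(\pi_R,\pi_C)$ witnesses the SC1P of $M_G$.

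For the converse I would take permutations $\pi_R,\pi_C$ witnessing the SC1P of $M_G$, so that the permuted matrix simultaneously has consecutive $1$s in every row and every column, and define orderings $\sigma_1:=\pi_R$ on $V_1$ and $\sigma_2:=\pi_C$ on $V_2$. For any $u\in V_1$, the $1$-positions in the corresponding row of the permuted matrix are exactly the $\pi_C$-images of the neighbors of $u$, and their consecutiveness is literally the statement that $N(u)$ forms a consecutive block in $\sigma_2$. A symmetric reading of the columns verifies that $N(v)$ is consecutive in $\sigma_1$ for every $v\in V_2$, so $G$ is biconvex.

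The main point requiring care, rather than a real obstacle, is ensuring that one uses a single pair of permutations in both directions: SC1P as stated in the introduction guarantees a simultaneous pair of permutations achieving C1P for rows and for columns, which exactly matches biconvexity's demand for a single pair of orderings working for every $v\in V_1\cup V_2$. Beyond this the argument is pure index bookkeeping through Definition \ref{defn4} and Definition \ref{biconvex}.
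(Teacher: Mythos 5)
Your proof is correct. The paper itself gives no argument for this lemma --- it is imported as a known result from Tucker \cite{tucker1972structure} --- so there is nothing internal to compare against; your direct unfolding of Definition \ref{defn4} and Definition \ref{biconvex}, identifying row/column permutations of $M_G$ with orderings of $V_1$ and $V_2$ and matching ``consecutive ones in each row (resp.\ column)'' with ``$N(u)$ consecutive in the $V_2$-ordering (resp.\ $N(v)$ consecutive in the $V_1$-ordering),'' is the standard and complete verification. You are also right that the only point needing care is that the SC1P, as defined via a single simultaneous pair of permutations, lines up exactly with biconvexity's demand for one pair of orderings serving all vertices; your proof uses that formulation consistently in both directions.
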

\begin{lemma} \textup{\cite{yannakakis1981computing}}\label{chnrs}
A bipartite graph $G=(V_{\mathsmaller{1}}, V_{\mathsmaller{2}}, E)$ is a chain graph iff  its half adjacency matrix $M_{\mathsmaller{G}}$ does not contain $\begin{pmatrix}1 & 0\\ 0 & 1\end{pmatrix}$ as a submatrix. \end{lemma}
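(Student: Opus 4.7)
The plan is to prove the equivalence directly from Definition \ref{chain}, using the contrapositive in both directions. The key observation is that the presence of a $\begin{pmatrix}1 & 0\\ 0 & 1\end{pmatrix}$ submatrix in $M_{\mathsmaller{G}}$ is nothing more than the statement that the neighborhoods of two vertices of $V_{\mathsmaller{1}}$ are incomparable under inclusion, and a chain graph is precisely one where all such neighborhoods form a chain under $\subseteq$.

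For the forward direction, I would assume $G$ is a chain graph with the ordering $\pi$ guaranteed by Definition \ref{chain}, and suppose for contradiction that $M_{\mathsmaller{G}}$ contains $\begin{pmatrix}1 & 0\\ 0 & 1\end{pmatrix}$ as a submatrix. Then there exist indices $i_1 < i_2$ and columns $j_1, j_2$ with $m_{i_1 j_1}=1, m_{i_1 j_2}=0, m_{i_2 j_1}=0, m_{i_2 j_2}=1$. Reading this into $G$ via Definition \ref{defn4}, one obtains $v_{j_1} \in N(u_{i_1}) \setminus N(u_{i_2})$ and $v_{j_2} \in N(u_{i_2}) \setminus N(u_{i_1})$, so neither $N(u_{i_1}) \subseteq N(u_{i_2})$ nor $N(u_{i_2}) \subseteq N(u_{i_1})$ holds, contradicting the total order of neighborhoods induced by $\pi$.

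For the reverse direction, suppose $M_{\mathsmaller{G}}$ contains no $\begin{pmatrix}1 & 0\\ 0 & 1\end{pmatrix}$ submatrix. Then for every pair of rows $u_{i_1}, u_{i_2} \in V_{\mathsmaller{1}}$, the neighborhoods must be comparable: otherwise, a witness $v_{j_1} \in N(u_{i_1}) \setminus N(u_{i_2})$ and $v_{j_2} \in N(u_{i_2}) \setminus N(u_{i_1})$ would produce exactly the forbidden $2\times 2$ pattern in the rows $i_1, i_2$ and columns $j_1, j_2$. Hence $\{N(u) : u \in V_{\mathsmaller{1}}\}$ is totally ordered by $\subseteq$, and sorting the vertices of $V_{\mathsmaller{1}}$ in non-decreasing order of $|N(\cdot)|$ (breaking ties consistently with the chain) yields an ordering $\pi$ satisfying $N(\pi(1)) \subseteq N(\pi(2)) \subseteq \cdots \subseteq N(\pi(|V_{\mathsmaller{1}}|))$, so $G$ is a chain graph.

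There is essentially no obstacle here; the only subtlety is to make sure the translation between matrix entries and edges (Definition \ref{defn4}) is handled carefully, and that ``contains $\begin{pmatrix}1 & 0\\ 0 & 1\end{pmatrix}$ as a submatrix'' is read in the sense of choosing any two rows and any two columns, not only contiguous ones. Alternatively, one could obtain the result in one line by observing that such a submatrix in $M_{\mathsmaller{G}}$ is exactly an induced $2K_{\mathsmaller{2}}$ in $G$ and invoking the $2K_{\mathsmaller{2}}$-free characterization of chain graphs already recalled in the paragraph preceding Definition \ref{chain}; I prefer the self-contained argument above since it avoids relying on the external characterization.
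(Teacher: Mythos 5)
The paper offers no proof of this lemma at all: it is imported verbatim from Yannakakis \cite{yannakakis1981computing} as a known characterization, so there is no internal argument to compare yours against. Your self-contained proof is correct. The forward direction correctly translates the $\begin{pmatrix}1 & 0\\ 0 & 1\end{pmatrix}$ pattern (read up to row/column permutation, which matches the paper's notion of ``contains'' via configurations) into a pair of incomparable neighborhoods, contradicting the nested ordering of Definition \ref{chain}; the reverse direction correctly observes that pairwise comparability of a finite family of sets under $\subseteq$ yields a linear order, and the tie-breaking worry is vacuous since two comparable sets of equal cardinality coincide. Your one-line alternative --- that such a submatrix is precisely an induced $2K_{2}$ in the bipartite graph --- is in fact the route the paper implicitly leans on, since it recalls the $2K_{2}$-free forbidden-subgraph characterization immediately after Definition \ref{chain} and reuses it in the NP-completeness proof of Theorem \ref{thm4}; your direct argument has the advantage of not depending on that external characterization, at the cost of a few more lines. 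Either version would serve as a legitimate proof where the paper supplies none.
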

\noindent We remark here that, the half-adjacency matrix of a chain graph satisfies the $SC1P$, however the converse is not true.

\noindent A graph $G$ can also be represented using \textit{edge-vertex incidence matrix}, denoted by $M(G)$, and is defined as follows.
\begin{definition} \label{defn6} \textup{For a graph $G = (V,E)$, the rows and columns of $M(G)$ correspond to edges and vertices of $G$ respectively. The entries $m_{ij}$ of $M(G)$ are defined as follows: $m_{ij}=1$, if edge $e_{i}$ is incident on vertex $v_{j}$, and $m_{ij}=0$ otherwise, where $1 \leq i \leq m$ and $1 \leq j \leq n$.}
\end{definition}
Following Lemma shows that $G$ is a path if $M(G)$ has the $C1P$ for rows.
\begin{lemma}\label{prop4} \textup{(\cite[Theorem 2.2]{dom2009recognition})}
If $G$ is a connected graph and the edge-vertex incidence matrix $M(G)$ of $G$ has the $C1P$ for rows, then $G$ is a path.
\end{lemma}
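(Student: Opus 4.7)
The plan is to unpack the definition of $M(G)$ and exploit the fact that every row has exactly two ones. Since each edge is incident on exactly two vertices, every row of $M(G)$ contains precisely two $1$-entries. The hypothesis that $M(G)$ has the $C1P$ for rows then furnishes a permutation $\pi$ of the columns (vertices) such that in every row the two $1$-entries are consecutive, i.e.\ occupy adjacent positions. Relabeling the vertices as $v_{1},v_{2},\ldots,v_{n}$ according to $\pi$, I would conclude that every edge of $G$ has the form $\{v_{i},v_{i+1}\}$ for some $1 \leq i \leq n-1$.

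The next step is to observe that $G$ is therefore a subgraph of the path $P_{n}$ on the vertex sequence $v_{1},\ldots,v_{n}$. Because graphs in this paper are simple, distinct edges of $G$ correspond to distinct pairs of consecutive vertices in the ordering, so $E(G) \subseteq \{\{v_{i},v_{i+1}\} \mid 1 \leq i \leq n-1\}$.

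Finally, I would use connectedness to force equality. If some edge $\{v_{i},v_{i+1}\}$ were missing, the vertex set would split into the two nonempty blocks $\{v_{1},\ldots,v_{i}\}$ and $\{v_{i+1},\ldots,v_{n}\}$ with no edge between them (since all available edges only connect consecutive vertices in the ordering), contradicting the connectedness of $G$. Hence every such edge must be present, and $G = P_{n}$, a path.

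The argument is essentially syntactic once the $C1P$ is translated via the structure of $M(G)$, so I do not expect a genuine obstacle; the only subtle point to state carefully is that ``two $1$-entries appearing consecutively'' for a row of weight exactly two means the two ones sit in adjacent columns, which is what allows the reduction to the canonical path ordering.
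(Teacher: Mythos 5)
Your argument is correct: with exactly two ones per row, the $C1P$ for rows forces every edge to join two vertices that are adjacent in the column ordering, and connectedness then forces all $n-1$ consecutive pairs to be edges, so $G=P_n$. The paper does not prove this lemma itself (it is quoted from Dom's work), and your proof is the standard argument for that cited result, so there is nothing further to compare.
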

\noindent A graph $G$ can also be represented using \textit{maximal-clique matrix (\textit{vertex-clique incidence matrix})}, and is defined as follows.
\begin{definition}\label{maxcliq}\textup{Let $V=\{v_{1},v_{2},\ldots,v_{n}\}$  and $C=\{c_{1},c_{2},\ldots,c_{m}\}$  be the set of vertices and the set of maximal cliques, respectively, in $G$. The maximal-clique matrix of $G$ is an $n \times m$ matrix $M$, whose rows and columns represent the vertices and maximal cliques, respectively, in $G$, and an entry $m_{ij}=1$ if $v_{i}$ belongs to $c_{j}$, and $m_{ij}=0$ otherwise, where $1 \leq i \leq n$ and $1 \leq j \leq m$}.
\end{definition}
A characterization of proper interval graph relating its maximal-clique matrix is mentioned in Lemma \ref{lem1}.
 \begin{lemma}\textup{\cite{dom2009recognition}}\label{lem1}
 A graph is a proper interval graph iff its maximal-clique matrix has the $SC1P$.
 \end{lemma}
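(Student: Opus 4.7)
The plan is to exploit the well-known Fulkerson--Gross characterization of interval graphs via the $C1P$ for rows of the maximal-clique matrix, and then upgrade to proper interval graphs by analyzing what extra structure the $C1P$ for columns enforces. Throughout, let $M$ denote the maximal-clique matrix of $G$ with rows indexed by $V=\{v_1,\ldots,v_n\}$ and columns indexed by the maximal cliques $C=\{c_1,\ldots,c_m\}$.

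For the forward direction, assume $G$ is a proper interval graph. Since every proper interval graph is an interval graph, by Fulkerson--Gross there is a permutation of the columns (i.e.\ of the maximal cliques) such that, for every vertex $v$, the set of cliques containing $v$ is consecutive; this is the $C1P$ for rows of $M$. To get the $C1P$ for columns, I would invoke the classical Roberts/Deng--Hell--Huang characterization that a proper interval graph admits a linear vertex ordering $v_{\pi(1)}, v_{\pi(2)},\ldots,v_{\pi(n)}$ (a ``straight enumeration'') in which every maximal clique is a contiguous block. Permuting the rows of $M$ according to $\pi$ makes the ones in each column consecutive, giving the $C1P$ for columns. Hence $M$ has the $SC1P$.

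For the backward direction, assume $M$ has the $SC1P$. The $C1P$ for rows immediately gives that $G$ is an interval graph, by Fulkerson--Gross. I would then construct a \emph{proper} interval representation directly from the two orderings witnessing the $SC1P$. Fix a column ordering $c_{\sigma(1)},\ldots,c_{\sigma(m)}$ witnessing the $C1P$ for rows, and for each vertex $v$ let $I(v) = [\ell(v), r(v)]$, where $\ell(v)$ and $r(v)$ are the indices (under $\sigma$) of the leftmost and rightmost maximal cliques containing $v$. Standard arguments show $\{I(v)\}_{v\in V}$ is an interval representation of $G$. To prove this representation is proper, I would use the simultaneous row ordering $\pi$ guaranteed by the $C1P$ for columns: if $I(u) \subsetneq I(v)$, then in every column containing $u$, vertex $v$ is also present; by the consecutive-ones condition on both rows (for $u$ and $v$) and columns, one derives that $u$ and $v$ must be adjacent in $\pi$ and their role is symmetric, contradicting proper containment. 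Equivalently, proper containment would produce a submatrix isomorphic to one of $\bigl(\begin{smallmatrix}1&1\\0&1\end{smallmatrix}\bigr)$-type patterns that is inconsistent with simultaneous row/column consecutiveness, giving the required contradiction.

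The main obstacle is the last step of the backward direction, namely ruling out proper interval containment when one only has the combinatorial $SC1P$. The cleanest route is to avoid case analysis on forbidden submatrices and instead argue structurally via Roberts' claw-free characterization: a claw $K_{1,3}$ in $G$ (center $v$, leaves $x,y,z$) forces three maximal cliques, each containing $v$ together with exactly one of $x,y,z$; because $x,y,z$ are pairwise non-adjacent, no clique contains two of them, and one shows that no row permutation can simultaneously make the rows of $x,y,z$ satisfy the $C1P$ while the columns corresponding to these three cliques keep their ones consecutive. Hence $G$ is claw-free and interval, i.e.\ a proper interval graph, completing the equivalence.
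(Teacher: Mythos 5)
The paper does not actually prove this lemma: it is imported as a known result from \cite{dom2009recognition} and used as a black box, so there is no in-paper argument to compare yours against. Your final argument is correct and self-contained. The forward direction combines the Fulkerson--Gross clique-ordering characterization of interval graphs (giving the $C1P$ for rows) with the straight-enumeration ordering of a proper interval graph in which every maximal clique is a consecutive block of vertices (giving the $C1P$ for columns); since the two permutations act on disjoint index sets, they combine into the $SC1P$ exactly as the paper itself observes in Section \ref{intro}. The backward direction gets ``interval'' from the row $C1P$ and ``claw-free'' from the $SC1P$, then closes with Roberts' theorem that interval plus claw-free equals proper interval. For the claw step, note that the $4\times 3$ vertex--clique incidence pattern of a claw (center in all three cliques, each leaf in exactly one) is precisely the configuration of $M^{T}_{3_{1}}$ from Figure \ref{forbidden}, so claw-freeness follows at once from Theorem \ref{thm2}; alternatively, a short pigeonhole argument on the vertex ordering (two leaves lie on the same side of the center, and the column of the farther leaf, being an interval containing it and the center, must swallow the nearer leaf, forcing a nonexistent adjacency) does it without invoking Tucker.

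One caution: the first route you sketch for the backward direction is genuinely flawed, and you were right to abandon it. In the clique-point representation $I(v)=[\ell(v),r(v)]$, proper containment $I(u)\subsetneq I(v)$ does occur for proper interval graphs: in the path on three vertices $u,v,w$ with center $v$, the two maximal cliques give $I(u)=[1,1]\subsetneq[1,2]=I(v)$, yet this graph is a proper interval graph and its maximal-clique matrix has the $SC1P$. A proper interval graph merely admits \emph{some} proper representation; the canonical clique-point representation need not be one, so no contradiction can be extracted from containment there. The claw-free route is the correct repair and is what your proof should rest on.
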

\noindent Next, we state few results that are used in proving the correctness of our FPT algorithms described in Sections \ref{fptr/c}-\ref{fpt0e}.

\noindent For ease of reference, we refer to the \label{defn1} fixed-size forbidden matrices in the forbidden submatrix characterization of $SC1P$ (Theorem \ref{thm2}) as $X$. i.e 

$X=\{M_{\mathsmaller{2_{1}}}, M_{\mathsmaller{2_{2}}}, M_{\mathsmaller{3_{1}}}, M_{\mathsmaller{3_{2}}}, M_{\mathsmaller{3_{3}}}, {M^{ T}_{\mathsmaller{2_{1}}}}, {M^{T}_{\mathsmaller{2_{2}}}}, {M^{T}_{\mathsmaller{3_{1}}}}, {M^{T}_{\mathsmaller{3_{2}}}}, {M^{T}_{\mathsmaller{3_{3}}}}\}$. 

\noindent Lemma \ref{prop1} and Lemma \ref{prop2} state the run-time to find a forbidden matrix of $X$ and $M_{\mathsmaller I_{k}}/M^{T}_{\mathsmaller{I_{k}}}$ respectively in $M$.
\begin{lemma}\label{prop1}
\textup{} \textit{Let $M$ be a matrix of size $m \times n$. Then, a minimum size submatrix in $M$ that is isomorphic to one of the forbidden matrices of $X$ can be found in $O(m^{6}n)$-time}. \end{lemma}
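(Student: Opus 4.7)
The plan is to detect, for each forbidden matrix $F \in X$ individually, whether $M$ contains a submatrix isomorphic to $F$, and then return the smallest such $F$ that is present. The key structural observation from Figure~\ref{forbidden} is that every $F \in X$ has at most $6$ rows and at most $6$ columns; in particular, $M^{T}_{3_3}$ is the unique element attaining $6$ rows, while all other forbidden matrices in $X$ have $r_F \leq 5$.

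For a fixed $F \in X$ of size $r_F \times c_F$, with $r_F \leq 6$, I would detect the presence of $F$ as a submatrix of $M$ as follows. First, enumerate all $\binom{m}{r_F} = O(m^{r_F})$ unordered $r_F$-subsets of rows of $M$. For each such subset, form the $r_F \times n$ restriction of $M$ to those rows in $O(n)$ time. Next, iterate over the (constant) $r_F!$ orderings of the chosen rows; each ordering fixes the row labels of $F$ under the configuration equivalence, and what remains is to test whether the $n$ columns of the restriction, viewed as a multiset of vectors in $\{0,1\}^{r_F}$, cover the multiset of the $c_F$ column patterns of $F$ (which automatically absorbs the column permutations allowed by the definition of configuration). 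Since $\{0,1\}^{r_F}$ has only $2^{r_F} = O(1)$ elements, I would bucket the $n$ columns of the restriction by pattern in a single linear scan of $O(n)$ time, after which the multiset-domination test takes $O(1)$ time per row ordering.

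Summing over the $O(m^{r_F})$ row choices gives $O(m^{r_F} n)$ per forbidden matrix, and since $r_F \leq 6$ throughout $X$ and $|X|$ is constant, checking each $F \in X$ in turn contributes a total of $O(m^6 n)$. To return a \emph{minimum-size} isomorphic submatrix, I would simply process the elements of $X$ in nondecreasing order of their dimensions (that is, starting with $M_{3_1}$ and $M^{T}_{3_1}$, which have size $3 \times 4$ and $4 \times 3$, then $M_{2_1}, M^{T}_{2_1}$, and so on up to $M_{3_3}$ and $M^{T}_{3_3}$), returning the witness from the first $F$ that succeeds.

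There is no genuine obstacle in the argument; the only point requiring care is to enumerate row subsets rather than row-and-column subsets. A naive algorithm that enumerates both row and column subsets runs in $O(m^{r_F} n^{c_F}) = O(m^6 n^6)$ in the worst case, which is strictly weaker than the claimed bound. The trick that brings this down to $O(m^6 n)$ is to offload the column side of the search onto a bucket-count over the $O(1)$ possible column patterns in $\{0,1\}^{r_F}$, so that column permutations and column selection are jointly handled in $O(n)$ per row choice rather than by explicit enumeration.
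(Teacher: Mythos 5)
Your argument is correct and self-contained, but it is worth noting that the paper does not actually prove this lemma at all: it obtains the bound by directly invoking Proposition~3.2 of Dom's work on recognizing matrices with the consecutive-ones property, instantiated with the largest forbidden matrix in $X$ having dimensions $6\times 5$ (namely $M^{T}_{3_{3}}$). What you have written is essentially a from-scratch reconstruction of the algorithm behind that cited proposition: enumerate the $O(m^{r_F})$ row subsets, and dispose of the column side by bucketing the $n$ restricted columns into the $2^{r_F}=O(1)$ possible patterns and checking multiset domination against the $c_F$ column patterns of $F$ under each of the $r_F!$ row orderings. This correctly absorbs both the column selection and the column permutations of the configuration in $O(n)$ time per row subset, and your observation that naive enumeration of both row and column subsets would only give $O(m^{6}n^{6})$ is exactly the point that makes the citation (and your reconstruction) necessary. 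Your handling of ``minimum size'' by testing the members of $X$ in nondecreasing order of their (constantly many, fixed) dimensions is also fine. The only thing your write-up buys beyond the paper is transparency: a reader gets a verifiable algorithm rather than a pointer; conversely, the paper's one-line derivation makes clear that no new idea is being claimed here.
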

The above Lemma is obtained from (\cite[Proposition 3.2]{dom2009recognition}), by considering the maximum possible size of the forbidden matrix in $X$ as $6 \times 5$ (shown in Figure \ref{forbidden}). By considering the maximum number of ones in each row of $M$ as $n$ in (\cite[Proposition 3.4]{dom2009recognition}), leads to the following Lemma.\begin{lemma}\label{prop2}Let $M$ be a matrix of size $m \times n$. Then, a minimum size submatrix of type $M_{I_{k}}$ or $M^{T}_{\mathsmaller{I_{k}}}$  ($k \geq 1$) in $M$ can be found in $O(n^{3}m^{3})$-time.\end{lemma}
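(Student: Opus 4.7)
The plan is to pass from the matrix to the bipartite representing graph $G_M$ (Definition \ref{defn5}) and reduce the problem to finding a shortest chordless even cycle of length at least $6$. The key observation is the following correspondence: a submatrix of $M$ isomorphic to $M_{I_k}$ with $k \geq 1$ is precisely the edge-vertex incidence matrix of a cycle on $k+2$ vertices, so choosing such a submatrix corresponds to picking $k+2$ row vertices and $k+2$ column vertices of $G_M$ that, together with the edges given by the $1$-entries of the submatrix, form an induced cycle of length $2(k+2) \geq 6$ in $G_M$ whose vertices alternate between the two sides of the bipartition. The zero pattern of $M_{I_k}$ guarantees that the resulting cycle has no chord; conversely, any induced even cycle of length $2\ell \geq 6$ in $G_M$ gives, via its vertex/edge incidences, a submatrix of $M$ isomorphic to $M_{I_{\ell-2}}$. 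The case of $M^T_{I_k}$ is handled by the same bipartite graph (with the role of the two sides swapped), since $G_{M^T}$ and $G_M$ are the same graph. Thus a minimum-size $M_{I_k}$ or $M^T_{I_k}$ in $M$ corresponds exactly to a shortest induced cycle of length $\geq 6$ in $G_M$, and the size parameter $k$ is recovered by halving the cycle length and subtracting $2$.

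Given this reduction, I would invoke Proposition 3.4 of \cite{dom2009recognition}, which supplies an algorithm whose running time is expressed in terms of the matrix dimensions and an auxiliary quantity governing the branching when a partial cycle is extended row by row — namely, the maximum number of $1$-entries in any single row of $M$. In an $m \times n$ matrix this quantity is at most $n$, so substituting $n$ into the runtime expression of that proposition yields the claimed $O(n^3 m^3)$ bound.

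The main obstacle is making the submatrix/cycle correspondence fully rigorous: one must verify that (i) every induced cycle of length $2\ell \geq 6$ in $G_M$ produces, after choosing a starting vertex and an orientation, a submatrix that lies in the configuration of $M_{I_{\ell-2}}$ (or its transpose), and (ii) the ``no chord'' property of the cycle translates exactly into the zero pattern that distinguishes $M_{I_k}$ from a matrix that merely contains it. Once this is in place, plugging the bound $n$ for the maximum row sum into the cited algorithm's analysis is routine, and the transpose case is subsumed because the underlying graph-theoretic object is the same.
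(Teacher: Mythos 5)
Your proposal matches the paper's own argument: the paper obtains this lemma exactly by taking Proposition 3.4 of \cite{dom2009recognition} (which searches for a shortest $M_{I_k}$-type submatrix via induced cycles in the representing graph, with running time parameterized by the maximum number of ones per row) and substituting the trivial bound $n$ for that quantity, with the transpose case subsumed because $G_{M}$ and $G_{M^{T}}$ coincide (Lemma \ref{prop3}). Your additional elaboration of the submatrix/chordless-cycle correspondence is consistent with what the cited proposition already establishes, so the approach is essentially identical.
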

Following result shows that the representing graph $G_{\mathsmaller{M}_{I_{k}}}/G_{\mathsmaller{M}^{T}_{\mathsmaller{I_{k}}}}$ (Definition \ref{defn5}) of $M_{\mathsmaller I_{k}}/M^{T}_{\mathsmaller{I_{k}}}$ (where $k \geq 1$) is a chordless cycle.
\begin{lemma} \label{prop3} \textup{(\cite[Observation 3.1]{dom2009recognition})} The representing graph of $M_{\mathsmaller I_{k}}/ M^{T}_{\mathsmaller{I_{k}}}$, i.e., 
\textup{($G_{\mathsmaller{M}_{I_{k}}}$/$G_{\mathsmaller{M}^{T}_{\mathsmaller{I_{k}}}}$)}, is a chordless cycle of length $2k+4$.
\end{lemma}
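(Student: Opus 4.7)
The plan is to make the structure of $M_{I_k}$ explicit, read off the adjacencies of $G_{M_{I_k}}$ directly, and observe that the degree sequence forces the graph to be a single chordless cycle.

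First I would fix notation for the rows and columns of $M_{I_k}$. Writing it as a $(k+2)\times(k+2)$ matrix, the $i$-th row ($1 \le i \le k+1$) has its two $1$-entries exactly in columns $i$ and $i+1$, while row $k+2$ has its two $1$-entries in columns $1$ and $k+2$. Hence every row of $M_{I_k}$ contains exactly two $1$s, and by a symmetric inspection every column contains exactly two $1$s: column $1$ is hit by rows $1$ and $k+2$, column $j$ (for $2 \le j \le k+1$) is hit by rows $j-1$ and $j$, and column $k+2$ is hit by rows $k+1$ and $k+2$.

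Next, using Definition \ref{defn5}, I would build $G_{M_{I_k}}$ on the $2(k+2) = 2k+4$ vertices $r_1,\dots,r_{k+2},c_1,\dots,c_{k+2}$ and list the edges from the rule $m_{ij}=1 \iff \{r_i,c_j\}\in E$. The row-by-row reading yields the edge list $\{r_i,c_i\},\{r_i,c_{i+1}\}$ for $1\le i\le k+1$ together with $\{r_{k+2},c_1\},\{r_{k+2},c_{k+2}\}$. Concatenating these gives the closed walk
\[
c_1 - r_1 - c_2 - r_2 - c_3 - \cdots - c_{k+1} - r_{k+1} - c_{k+2} - r_{k+2} - c_1,
\]
which visits every one of the $2k+4$ vertices exactly once and uses exactly $2k+4$ edges. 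Therefore $G_{M_{I_k}}$ contains $C_{2k+4}$ as a spanning subgraph.

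Finally, I would argue there are no further edges. By the row/column counts above, every vertex of $G_{M_{I_k}}$ has degree exactly $2$: each $r_i$ has degree $2$ because row $i$ has two $1$s, and each $c_j$ has degree $2$ because column $j$ has two $1$s. A graph in which every vertex has degree $2$ and which contains a spanning cycle $C_{2k+4}$ can have no additional edge, for any chord would force one of its endpoints to have degree at least $3$. Hence $G_{M_{I_k}} = C_{2k+4}$, which is by definition chordless since $2k+4 \ge 6 > 3$. The statement for $M^T_{I_k}$ is immediate: transposition merely exchanges the roles of rows and columns in Definition \ref{defn5}, so $G_{M^T_{I_k}}$ is isomorphic to $G_{M_{I_k}}$ and is therefore also a chordless cycle of length $2k+4$. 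The only delicate point — and it is minor — is the wrap-around row $r_{k+2}$; everything else is a mechanical reading of the matrix, so I do not anticipate a genuine obstacle.
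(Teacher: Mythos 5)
Your proof is correct. Note that the paper itself offers no proof of this lemma --- it is imported verbatim as Observation 3.1 from the cited work of Dom, so there is no in-paper argument to compare against. Your verification is the standard one and is complete: the explicit description of $M_{I_k}$ (row $i$ with ones in columns $i$ and $i+1$ for $1 \le i \le k+1$, and the wrap-around row $k+2$ with ones in columns $1$ and $k+2$) matches Figure \ref{forbidden}, the edge list you extract via Definition \ref{defn5} does trace the Hamiltonian closed walk $c_1 - r_1 - c_2 - \cdots - c_{k+2} - r_{k+2} - c_1$ on all $2k+4$ vertices, and the $2$-regularity argument correctly rules out any chord. The reduction of the $M^T_{I_k}$ case to symmetry of Definition \ref{defn5} under transposition is also exactly right, and is the same observation the paper itself relies on immediately after the lemma when it notes that $M_{I_k}$ and its transpose have the same representing graph.
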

It is clear from Lemma \ref{prop3}, that the representing graph of both $M_{I_{k}}$ and its transpose are same, which simplifies the task of searching for $M_{\mathsmaller I_{k}}/M^{T}_{\mathsmaller{I_{k}}}$. 

Few of our results are based on the forbidden submatrix characterization of the $C1P$ for rows and is given below.
\begin{theorem}\label{forbc1p}\textup{(\cite[Theorem 9]{tucker1972structure})}
A binary matrix $M$ has the $C1P$ for rows if and only if no submatrix of $M$ is a member of the configuration of $M_{I_{k}}$, $M_{II_{k}}$, $M_{III_{k}}$, $M_{IV}$ and $M_{V}$, where $k \geq 1$.
\end{theorem}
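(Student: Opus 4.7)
The plan is to prove the two directions of the characterization separately. The necessity direction (``C1P for rows implies no forbidden submatrix'') is routine, while the sufficiency direction requires a careful structural analysis of a minimal failing submatrix. Throughout, I will use the fact that the C1P for rows is closed under taking submatrices: if a column permutation $\pi$ arranges the 1-entries of $M$ consecutively in every row, then the restriction of $\pi$ arranges the 1-entries consecutively in every row of any row/column submatrix of $M$ as well. Hence, to rule out the C1P for rows in $M$, it suffices to exhibit any fixed submatrix that itself lacks the C1P.

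For necessity, I would verify directly that each of $M_{I_k}$, $M_{II_k}$, $M_{III_k}$, $M_{IV}$, $M_V$ (and matrices in their configurations, i.e.\ under row/column permutations) fails the C1P for rows. For $M_{I_k}$, the rows form a cyclic overlap pattern: translating each row into the interval of columns spanned by its 1-entries, Lemma \ref{prop3} tells us that the representing bipartite graph is a chordless $(2k{+}4)$-cycle, and no linear ordering of the columns can simultaneously realize all of its edges as nested/disjoint intervals. For the remaining five fixed-size patterns, I would do an explicit finite case analysis by fixing the column position of one distinguished column and deriving a contradiction for every extension.

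For sufficiency, the approach is a minimal counterexample argument. Let $M'$ be a submatrix of $M$, minimal with respect to ``number of rows plus number of columns'', that does not have the C1P for rows. The goal is to show $M'$ is isomorphic (up to row/column permutations) to one of the listed forbidden configurations. By minimality, for every row $r$ of $M'$ there exists a column permutation $\pi_r$ that makes $M' \setminus \{r\}$ consecutive; likewise, removing any column from $M'$ gives a consecutive-arrangeable matrix. Comparing the permutations obtained by removing different rows exposes a well-defined ``overlap graph'' on the rows of $M'$, whose vertices are rows and whose edges record pairs of rows whose 1-sets properly overlap (neither contain one another nor are disjoint). A careful case analysis on the structure of this overlap graph, split according to whether it contains an induced cycle of length $\ge 4$ (forcing $M_{I_k}$) or a particular small tree-like obstruction (forcing $M_{II_k}$, $M_{III_k}$), together with the treatment of two small exceptional configurations requiring an extra ``connector'' row or column ($M_{IV}$ and $M_V$), produces the desired classification.

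The main obstacle is ensuring that the case analysis in the sufficiency direction is genuinely exhaustive, i.e.\ that every minimal obstruction falls into one of exactly five families with no escaping residual case. A cleaner alternative that I would try first is to invoke the PQ-tree machinery of Booth and Lueker on the consecutive-arrangeable matrix $M' \setminus \{r\}$ and then analyze how the inclusion of $r$ forces a local failure at a single node of the PQ-tree; each type of local failure can be read off as one of the Tucker patterns, with the unbounded family $M_{I_k}$ arising from long P-nodes and the fixed-size families $M_{II_k}$, $M_{III_k}$, $M_{IV}$, $M_V$ arising from short P-nodes or Q-nodes of bounded arity. Either route, the delicate step is the same: matching structural obstructions in row-overlap / PQ-tree language to the combinatorial shape of the explicit matrices in Figure \ref{forbidden} and their $II_k$, $III_k$, $IV$, $V$ counterparts.
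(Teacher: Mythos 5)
This statement is not proved in the paper at all: it is Tucker's Theorem~9, quoted verbatim with a citation to \cite{tucker1972structure}, and the authors use it as a black box. So there is no in-paper proof to compare your attempt against; the relevant comparison is with Tucker's original argument.

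Measured against that, your proposal is a reasonable outline but not yet a proof, and the gap sits exactly where you flag it. The necessity direction is indeed routine (C1P for rows is hereditary under taking submatrices, and each listed pattern can be checked finitely, or via the chordless-cycle observation for $M_{I_k}$). For sufficiency, however, everything hinges on the claim that a row-plus-column-minimal non-C1P matrix must fall into one of exactly five families, and neither of your two routes establishes this: the ``overlap graph'' route asserts ``a careful case analysis \ldots produces the desired classification'' without exhibiting the cases, and the PQ-tree route asserts that each local failure type ``can be read off'' as a Tucker pattern without the correspondence being worked out. Tucker's own proof does not proceed by either of these routes: he first characterizes the C1P for rows in terms of the absence of \emph{asteroidal triples} of column-vertices in the representing bipartite graph $G_M$, and then classifies the minimal bipartite graphs realizing an asteroidal triple, which is what yields precisely the families $M_{I_k}$, $M_{II_k}$, $M_{III_k}$, $M_{IV}$, $M_V$; the exhaustiveness of that classification is the substantive content of the theorem and occupies most of his argument. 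If you intend to supply a self-contained proof rather than cite Tucker, you must carry out one of these classifications in full; as written, the exhaustiveness step is assumed rather than proved.
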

Given a binary matrix $M$ and a non-negative integer $d$, $d$-$COS$-$R$ (resp. $d$-$COS$-$C$) \label{dcosr}problem decides whether there exists a set of rows (resp. columns), of size at most $d$ in $M$, whose deletion results in a matrix with the $C1P$ for rows.  

\noindent We used the following lemma to obtain an FPT algorithm for $d$-$SC1S$-$R/C$  problems.
\begin{lemma} \textup{(\cite[Theorem 7]{narayanaswamy2015obtaining})} \label{lemma11}
$d$-$COS$-$R$ problem is fixed-parameter tractable with a run-time of $O^{*}(10^{d})$, where $d$ denotes the number of allowed row deletions.
\end{lemma}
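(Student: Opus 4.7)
The plan is to apply the bounded search tree technique driven by Tucker's forbidden submatrix characterization for the $C1P$ for rows (Theorem \ref{forbc1p}). Given an instance $(M,d)$: if $M$ already has the $C1P$ for rows, return Yes; if $d=0$, return No; otherwise locate some forbidden submatrix $F$ in $M$ and branch by recursively solving $(M \setminus \{r\}, d-1)$ for each row $r$ of $F$. Correctness of the branching rule is immediate, since any valid deletion set must hit at least one row of every forbidden submatrix.

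The main obstacle is bounding the per-node branching factor by $10$. Tucker's list contains the fixed-size configurations $M_{IV}$ and $M_V$, but also the three infinite families $M_{I_k}, M_{II_k}, M_{III_k}$ whose row counts grow with $k$. A naive branching on an arbitrary large member of one of these families would destroy the FPT bound. I would handle this through a priority scheme: first search for a forbidden submatrix whose row count is bounded by $10$ (this covers $M_{IV}$, $M_V$, and the smallest instances $M_{I_k}, M_{II_k}, M_{III_k}$ for small $k$). Such a bounded-size search can be carried out in polynomial time by enumerating row and column subsets of bounded size (along the lines of Lemma \ref{prop1}). Branching on these gives a factor of at most $10$.

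The hardest step is to handle configurations in which only \emph{large} members of $M_{I_k}, M_{II_k}, M_{III_k}$ survive. Here the plan is to exploit the structure of the representing graph: by Lemma \ref{prop3}, the representing graph of $M_{I_k}$ is a chordless cycle of length $2k+4$, and analogous structural descriptions exist for $M_{II_k}$ and $M_{III_k}$. Using these, I would isolate a small ``core'' of at most $10$ rows from any large forbidden submatrix such that every valid deletion set must intersect this core, and then branch on the core alone. Combining a per-node branching factor of at most $10$ with a search tree of depth at most $d$ yields the claimed $O^{*}(10^{d})$ running time. The main technical obstacle is establishing this structural ``core'' lemma uniformly across the three infinite families; this requires a careful case analysis exploiting the cyclic structure of $M_{I_k}$ and the near-cyclic structure of $M_{II_k}$ and $M_{III_k}$, and is the step where essentially all the nontrivial work lies.
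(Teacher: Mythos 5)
You should first note that the paper itself offers no proof of Lemma \ref{lemma11}: it is imported verbatim from the literature (Theorem 7 of \cite{narayanaswamy2015obtaining}), so your proposal has to be measured against that external argument rather than anything in this manuscript.

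Your frame (bounded search tree driven by Theorem \ref{forbc1p}, branching on the rows of bounded-size obstructions) is a reasonable start, but the ``core'' step carries all the weight and does not hold up. As you state it --- a core of at most $10$ rows such that \emph{every} valid deletion set must intersect it --- the claim is false: if $M$ is itself a single large $M_{I_k}$ with $k+2>10$ rows, then deleting any one row already yields a matrix with the $C1P$ for rows (Tucker's configurations are minimal forbidden submatrices), so for any $10$-row candidate core $S$ the singleton $\{r\}$ with $r\notin S$ is a valid solution avoiding $S$. The condition you actually need for sound branching is the weaker ``some minimum solution meets the core,'' and establishing such a bounded core uniformly for large members of $M_{I_k}$, $M_{II_k}$, $M_{III_k}$ is exactly the obstacle that is not known to be surmountable by elementary branching: by Lemma \ref{prop3} these obstructions correspond to arbitrarily long chordless cycles, the same obstruction that defeats naive branching for \textsc{Interval Deletion} and \textsc{Chordal Vertex Deletion}. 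The cited proof does not attempt this. After eliminating the bounded-size obstructions it reduces the residual instance to \textsc{Interval Vertex Deletion} and invokes the Cao--Marx $O^{*}(10^{k})$ algorithm for that problem, so the constant $10$ in Lemma \ref{lemma11} is inherited from that subroutine rather than being the arity of a branching rule. This is corroborated by the remark immediately following the lemma in the paper: substituting the improved $O^{*}(8^{k})$ interval-deletion algorithm of \cite{cao2016linear} improves $d$-$COS$-$R$ to $O^{*}(8^{d})$, which would be impossible if $10$ were a combinatorial branching factor over rows of forbidden submatrices. To repair your argument you would either have to actually prove the (corrected) core lemma --- which is the entire difficulty and for which you supply no case analysis --- or replace your second stage with a reduction to a known FPT problem, as the original authors do.
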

 Using the recent improved FPT algorithm for \textsc{Interval Deletion} problem \cite{cao2016linear}, it turns out that $d$-$COS$-$R$ problem has an improved run-time of $O^{*}(8^{d})$.
\section{Our Results}\label{result}
Even though the number of forbidden submatrices to establish the $SC1P$ is less than the number of forbidden submatrices for the $C1P$, the problems posed in this paper, to obtain the $SC1P$ also turn out to be NP-complete. Firstly, we  present polynomial-time algorithms for $SC1S$ and $SC1E$ problems on $(2,2)$-matrices. For a given matrix $M$, while solving $SC1S$ problems, we delete an entire row/column of every forbidden submatrix present in $M$; hence destroying any forbidden submatrix from $F_{SC1P}$ (defined in Section \ref{fsc1p}) in $M$ does not introduce new forbidden submatrices from $F_{SC1P}$ in $M$, which were not originally present in $M$. The same observation, however, is not applicable for $SC1E$ problems. The reason is that flipping an entry ($0/1$) may introduce new forbidden submatrices from $F_{SC1P}$ which were not originally  present in $M$. This motivated us to consider the two categories of problems for establishing the $SC1P$ in a given matrix separately. The classical complexity as well as the parameterized complexity of $SC1S$ and $SC1E$ problems are described in detail in Sections \ref{estrc} and \ref{etrowcolumn} respectively.
\subsection{\textup{\textbf{Easily solvable instances of $SC1S$ and $SC1E$ problems}}} \label{esly}
The problems $SC1S$ and $SC1E$ defined in Section \ref{intro} are solvable in polynomial-time on $(2,2)$-matrices. A $(2, 2)$-matrix can contain only forbidden matrices $M_{\mathsmaller I_{k}}$ and  $M^{T}_{\mathsmaller{I_{k}}}$ (where $k \geq 1$) of unbounded size, because all other forbidden matrices of $F_{SC1P}$ contain either a row or column with more than two ones. Since a matrix can be viewed as the half adjacency matrix of a bipartite graph, the $d$-$SC1S$-$R$, $d$-$SC1S$-$C$, $d$-$SC1S$-$RC$, $d$-$SC1P$-$0E$, $d$-$SC1P$-$1E$, and $d$-$SC1P$-$01E$ problems can be formulated as graph modification problems (Here, modification means deletion of vertex/edge or addition of edge).

 Given a $(2, 2)$-matrix $M$, consider the representing graph $G_{\mathsmaller{M}}$ (Definition \ref{defn5}), of $M$. Since each column and row of $M$ contains at most two ones, the degree of each vertex in $G_{\mathsmaller{M}}$ is at most two. So the connected components of $G_{\mathsmaller{M}}$ are disjoint chordless cycles or paths. It follows from Lemma \ref{prop3} that, to destroy $M_{\mathsmaller I_{k}}$ and $M^{T}_{\mathsmaller{I_{k}}}$, it is sufficient to destroy chordless cycles of length greater than four in $G_{\mathsmaller{M}}$.
\begin{theorem}\label{easily}
On $(2,2)$-matrices, $d$-$SC1S$-$R$ is polynomial-time solvable.
\end{theorem}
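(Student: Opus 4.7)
The plan is to exploit the very restricted structure of the representing graph $G_{M}$ when $M$ is a $(2,2)$-matrix, reducing the problem to counting certain connected components.

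First I would make the structural observation already flagged in the preceding paragraph: since every row and every column of $M$ contains at most two ones, every vertex of $G_{M}$ has degree at most two, so each connected component of $G_{M}$ is either an isolated vertex, a path, or a (chordless) cycle. By Lemma \ref{prop3}, a submatrix of type $M_{I_{k}}$ or $M^{T}_{I_{k}}$ with $k\ge 1$ corresponds to a chordless cycle of length $2k+4\ge 6$ in $G_{M}$, and by the discussion preceding the theorem these are the only forbidden configurations that can occur in a $(2,2)$-matrix. So $M\setminus R'$ has the $SC1P$ if and only if the deletion (from $G_M$) of the row-vertices corresponding to $R'$ destroys every cycle of length at least six in $G_M$; cycles of length four are harmless, and paths contribute nothing.

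Next I would argue that the minimum number of row deletions equals the number of connected components of $G_{M}$ that are cycles of length at least six. Because $G_M$ is bipartite with parts (rows, columns), every cycle has even length and alternates between row-vertices and column-vertices; in particular each long cycle contains at least one row-vertex. Removing any single row-vertex from a cycle of length $2\ell$ turns that component into a path on $2\ell-1$ vertices, which is acyclic and therefore forbidden-submatrix-free. Conversely, long cycles are pairwise vertex-disjoint (since each component has maximum degree two), so each one independently requires at least one row-vertex to be removed. Hence
\[
\mathrm{opt}(M)\;=\;\bigl|\{\,C\text{ a connected component of }G_{M}\mid C\text{ is a cycle of length }\ge 6\,\}\bigr|.
\]

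Finally I would describe the algorithm: build $G_{M}$ from $M$, run BFS/DFS to find the connected components, classify each as path or cycle and (in the latter case) record the length, and output YES iff the count above is at most $d$. The construction of $G_{M}$ is $O(mn)$ and the component analysis is linear in the size of $G_{M}$, giving an overall polynomial-time algorithm. I do not expect a genuine obstacle here; the only point that needs a sentence of care is the bipartite parity argument ensuring that every long cycle actually contains a row-vertex, so that row-only deletions suffice to match the obvious lower bound.
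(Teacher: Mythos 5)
Your proposal is correct and follows essentially the same route as the paper: form the representing graph $G_M$, note that its components are paths and vertex-disjoint chordless cycles, and destroy each cycle of length at least six by deleting one row-vertex. You are in fact slightly more careful than the paper in two places it leaves implicit, namely the bipartite-parity argument that every long cycle contains a row-vertex and the matching lower bound showing that one deletion per long cycle is optimal.
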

\begin{proof}
  For each chordless cycle $C$ of length greater than four in $G_{\mathsmaller{M}}$, consider the submatrix $M'$ induced by the vertices of $C$. To destroy $C$, delete a vertex $v$ in $C$, that corresponds to a row $r$ in $M'$. Decrement the parameter $d$ by one and delete $r$ from $M$. The input is an Yes-instance, if the total number of rows removed from $M$ is at most $d$, otherwise it is a No-instance.
  
The representing graph $G_{\mathsmaller{M}}$ of $M$ can be constructed in polynomial time. Since the degree of each vertex in $G_{\mathsmaller{M}}$ is at most two, every pair of chordless cycles in $G_{\mathsmaller{M}}$ will be disjoint. We also know that $G_{\mathsmaller{M}}$ contains only finite number of vertices. The above two facts imply that $G_{\mathsmaller{M}}$ contains only finite number of cycles. Using Lemma \ref{prop6}, each chordless cycle can be detected in $O(m + n)$-time. Therefore for $(2,2)$-matrices, $d$-$SC1S$-$R$ can be solved in $O(d(m + n))$-time.
\end{proof}
 Algorithms for solving $d$-$SC1S$-$C$, $d$-$SC1S$-$RC$, $d$-$SC1P$-$0E$, $d$-$SC1P$-$1E$ and $d$-$SC1P$-$01E$ problems on $(2,2)$-matrices are similar to the algorithm for solving $d$-$SC1S$-$R$ (Theorem \ref{easily}), except that they differ only in the way the chordless cycles are destroyed. Therefore the run-time of all these problems on $(2,2)$-matrices is $O(d(m + n))$. Let $C$ be a chordless cycle of length greater than four in $G_{\mathsmaller{M}}$. In the following corollaries, we describe how the chordless cycles are destroyed in each of the problems.
\begin{corollary} 
For $(2,2)$-matrices, $d$-$SC1S$-$C$ problem is polynomial-time solvable.
\end{corollary}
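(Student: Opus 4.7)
The plan is to mirror the argument in Theorem \ref{easily} essentially verbatim, only swapping the role of rows for columns when the vertex of the chordless cycle to be removed is selected. First I would form the representing graph $G_{\mathsmaller{M}}$ of the input $(2,2)$-matrix $M$. Since every row and every column of $M$ contains at most two ones, each vertex of $G_{\mathsmaller{M}}$ has degree at most two, so the connected components of $G_{\mathsmaller{M}}$ are disjoint paths and chordless cycles. Because $(2,2)$-matrices admit only the forbidden submatrices $M_{\mathsmaller{I_{k}}}$ and $M^{T}_{\mathsmaller{I_{k}}}$ (with $k \geq 1$) from $F_{SC1P}$, Lemma \ref{prop3} reduces establishing the $SC1P$ to destroying all chordless cycles of length greater than four in $G_{\mathsmaller{M}}$ (length $2k+4 \geq 6$).

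Next I would enumerate those chordless cycles using Lemma \ref{prop6}, and for each such cycle $C$ pick any vertex $v \in C$ that corresponds to a column $c$ of $M$; since $G_{\mathsmaller{M}}$ is bipartite and $C$ has even length at least six, its vertices alternate between row-vertices and column-vertices, so such a $v$ certainly exists. I would then delete column $c$ from $M$ and decrement the parameter $d$ by one. As in Theorem \ref{easily}, the maximum-degree-two structure of $G_{\mathsmaller{M}}$ guarantees that distinct chordless cycles are vertex-disjoint, so the deletion of $c$ does not interfere with any other cycle and certainly does not create new forbidden submatrices (deleting a column can only destroy submatrices in an $SC1S$ instance). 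The algorithm answers Yes if and only if the total number of columns removed is at most $d$.

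For correctness I would note that if the algorithm removes more than $d$ columns then, because the chordless cycles of length greater than four are pairwise vertex-disjoint, any column-deletion set that destroys all of them must have size greater than $d$, giving a No-instance; conversely, any set of at most $d$ columns that the algorithm outputs obviously yields a matrix whose representing graph has no chordless cycle of length $\geq 6$, hence no $M_{\mathsmaller{I_{k}}}/M^{T}_{\mathsmaller{I_{k}}}$ submatrix, hence the $SC1P$ by Theorem \ref{thm2}. For the running time, $G_{\mathsmaller{M}}$ is built in polynomial time, the chordless cycles are found in $O(m+n)$-time per cycle by Lemma \ref{prop6}, and at most $d$ deletions are performed, yielding an overall $O(d(m+n))$ bound as in Theorem \ref{easily}. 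I do not anticipate a serious obstacle; the only subtlety worth stating explicitly is the bipartite-alternation argument that guarantees a column-vertex always exists inside each offending cycle, which is immediate from the structure of $G_{\mathsmaller{M}}$.
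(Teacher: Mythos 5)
Your proposal is correct and follows essentially the same route as the paper: the paper's own proof simply adapts Theorem \ref{easily} by deleting, from each offending chordless cycle in $G_{\mathsmaller{M}}$, a vertex corresponding to a column rather than a row. Your added remark that bipartite alternation guarantees such a column-vertex exists in every cycle is a detail the paper leaves implicit, but it does not change the argument.
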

\begin{proof}
 In $d$-$SC1S$-$C$ problem, deletion of a column in $M$ corresponds to a vertex deletion in the representing graph $G_{\mathsmaller M}$. For each chordless cycle $C$ in $G_{\mathsmaller{M}}$, consider the submatrix $M'$ induced by the vertices of $C$. To destroy $C$, delete a vertex $v$ in $C$, that corresponds to a column in $M'$.   
 \end{proof}
 \begin{corollary}
For $(2,2)$-matrices, $d$-$SC1S$-$RC$ problem is polynomial-time solvable.
\end{corollary}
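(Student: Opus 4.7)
The plan is to reuse the scheme of Theorem \ref{easily} and the preceding corollary, with only the realisation of each deletion modified. First I would construct the representing graph $G_{\mathsmaller{M}}$ of $M$; since $M$ is a $(2,2)$-matrix, every vertex of $G_{\mathsmaller{M}}$ has degree at most two, so its connected components are disjoint paths and disjoint chordless cycles. By Lemma \ref{prop3}, destroying all forbidden submatrices of type $M_{\mathsmaller{I_{k}}}$ and $M^{T}_{\mathsmaller{I_{k}}}$ is equivalent to destroying every chordless cycle of length greater than four in $G_{\mathsmaller{M}}$, and these cycles are automatically pairwise vertex-disjoint in $G_{\mathsmaller{M}}$.

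Next, for each such chordless cycle $C$, I would pick any vertex $v$ of $C$. Since $G_{\mathsmaller{M}}$ is bipartite, $C$ contains both row-vertices and column-vertices, but in $d$-$SC1S$-$RC$ we are free to delete either kind; so I simply delete the row or column of $M$ corresponding to $v$, which destroys $C$, and decrement the budget $d$ by one. Because the chordless cycles of $G_{\mathsmaller{M}}$ are pairwise vertex-disjoint, the deletion does not affect any other cycle, and consequently the algorithm is greedy-optimal: the minimum number of row-plus-column deletions needed equals the number of chordless cycles of length greater than four in $G_{\mathsmaller{M}}$.

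The instance is a Yes-instance iff this count is at most $d$. By Lemma \ref{prop6} each chordless cycle can be detected in $O(m+n)$ time, yielding a total run-time of $O(d(m+n))$, matching Theorem \ref{easily}. The only substantive observation, and the main (though mild) point to justify, is that the additional freedom to delete either rows or columns does not reduce the solution size below that of $d$-$SC1S$-$R$: each disjoint cycle still requires at least one deletion, and one deletion of either type suffices; hence the equivalence between the optimum and the cycle count is preserved.
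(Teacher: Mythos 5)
Your proposal is correct and follows essentially the same route as the paper: both reduce to the representing graph $G_{\mathsmaller{M}}$, observe that the chordless cycles of length greater than four are pairwise disjoint because every vertex has degree at most two, and destroy each such cycle by deleting one vertex, which in the $RC$ setting may correspond to either a row or a column. Your added remark that disjointness makes the greedy choice optimal (one deletion per cycle is both necessary and sufficient) is a welcome explicit justification that the paper leaves implicit.
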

\begin{proof}
In $d$-$SC1S$-$RC$ problem, deletion of a row as well as column in $M$ corresponds to a vertex deletion in the representing graph $G_{\mathsmaller M}$. For each chordless cycle $C$ in $G_{\mathsmaller{M}}$, consider the submatrix $M'$ induced by the vertices of $C$. To destroy $C$, delete a vertex $v$ in $C$, that corresponds to a row or column in $M'$. 
\end{proof}
\begin{corollary}
For $(2,2)$-matrices, $d$-$SC1P$-$0E$ problem is polynomial-time solvable.
\end{corollary}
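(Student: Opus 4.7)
My plan is to follow the representing-graph template of the preceding corollaries. For a $(2,2)$-matrix, $G_{\mathsmaller M}$ has maximum degree $2$, so its components are vertex-disjoint chordless paths and cycles, and flipping a $0$-entry of $M$ corresponds to inserting an edge (a chord) in $G_{\mathsmaller M}$. By Lemma \ref{prop3}, and since every other forbidden submatrix of $F_{SC1P}$ has a row or column with more than two ones, the only forbidden configurations that can appear in $M$ are the $M_{\mathsmaller{I_{k}}}/M^{T}_{\mathsmaller{I_{k}}}$ arising from chordless cycles of length $2n=2k+4\geq 6$ in $G_{\mathsmaller M}$. Because the components are vertex-disjoint, any chord placed within one cycle shares no row or column with another component, so the SC1P-editing problem decomposes into independent subproblems, one per chordless cycle $C$ of length at least $6$.

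For each such cycle $C$ of length $2n$, I plan to establish that the minimum number of $0$-flips that force the induced submatrix (a permutation of $M_{\mathsmaller{I_{n-2}}}$) to satisfy the $SC1P$ is exactly $n-2$. The lower bound is immediate from Lemma \ref{main1}: the $SC1P$ forbids every chordless cycle of length $\geq 6$, so the chord set must in particular make $C$ chordal bipartite, which needs at least $n-2$ chord additions. For the matching upper bound I will exhibit a ``chain of $C_{4}$s'' construction: labeling the vertices of $C$ cyclically as $v_0, v_1, \ldots, v_{2n-1}$, I add the chords $v_0 v_3,\ v_4 v_{2n-1},\ v_5 v_{2n-2},\ v_6 v_{2n-3},\ldots$ so that $C$ is split into $n-1$ four-cycles that share consecutive edges linearly; I will then give explicit row and column orderings that turn the induced submatrix into staircase form, certifying the $SC1P$ through Lemma \ref{biconvexlemm}.

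Combining the per-cycle minima, the minimum total number of $0$-flips equals $\sum_{C}(|C|/2-2)$ summed over the chordless cycles $C$ of length $\geq 6$ in $G_{\mathsmaller M}$, and the instance is a Yes-instance iff this sum is at most $d$. Constructing $G_{\mathsmaller M}$ and detecting chordless cycles via Lemma \ref{prop6} are both polynomial, so the whole procedure runs in time polynomial in $m+n$, matching the claim. The step I expect to be the main obstacle is the upper bound: Lemma \ref{main1} guarantees only chordal bipartiteness for $n-2$ chord additions, which is strictly weaker than biconvexity, so the chain-of-$C_{4}$ chord pattern and its accompanying staircase permutations must be chosen and verified concretely rather than appealed to through Lemma \ref{main1} alone.
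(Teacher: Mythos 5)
Your proposal follows essentially the same route as the paper: decompose $G_{\mathsmaller M}$ into vertex-disjoint chordless cycles, charge each cycle of length $2n\geq 6$ exactly $n-2$ zero-flips, sum over cycles, and compare with $d$. The one place you diverge is the upper bound, and your instinct there is right: the paper's proof simply invokes Lemma \ref{main1} and flips ``the $0$-entries corresponding to the newly added edges,'' implicitly assuming that any $4$-cycle decomposition of $C$ yields a matrix with the $SC1P$. That assumption is actually false as stated --- for the $8$-cycle, the ``fan'' decomposition (chords $x_{1}y_{2}$ and $x_{1}y_{3}$) produces a chordal bipartite graph whose matrix is a configuration of $M^{T}_{\mathsmaller{2_{1}}}$ and hence violates the $SC1P$ --- so chordal bipartiteness alone does not suffice, exactly as you suspected. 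Your chain-of-$C_{4}$s pattern does work (for the $8$-cycle it reduces to chords $x_{1}y_{2}$, $x_{3}y_{4}$, whose matrix admits a staircase row/column ordering), and verifying the staircase form for general $n$ is routine; the lower bound via Lemma \ref{main1} is sound since the $SC1P$ in particular forbids all chordless cycles of length at least six in the representing graph. So your proof is correct, matches the paper's decomposition and count, and in fact supplies a justification for the achievability of $n-2$ flips that the paper's own one-line appeal to Lemma \ref{main1} does not.
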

\begin{proof}
In $d$-$SC1P$-$0E$ problem, flipping a $0$-entry in $M$ corresponds to an edge addition in the representing graph $G_{\mathsmaller{M}}$. For each chordless cycle $C$ of length, say $k$, in $G_{\mathsmaller{M}}$, consider the submatrix $M'$ induced by the vertices of $C$. From Lemma \ref{main1}, to make $C$ a chordal bipartite graph, we have to add $\frac {k} {2}$-$2$ edges. Hence, check whether the parameter $d \geq \frac {k} {2}$-$2$ or not. If so, decrement $d$ by $\frac {k} {2}$-$2$ and flip the $0$-entries corresponding to the newly added edges in $M$. The input is an Yes-instance if the total number of $0$-entries flipped in $M$ (edges added in $G_{\mathsmaller{M}}$) is at most $d$, otherwise it is a No-instance. 
\end{proof}
\begin{corollary}\label{cor1}
For $(2,2)$-matrices, $d$-$SC1P$-$1E$ problem is polynomial-time solvable.
\end{corollary}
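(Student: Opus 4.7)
The plan mirrors the proof of Theorem~\ref{easily}, differing only in how each offending chordless cycle is destroyed. First I would translate the operation to the graph setting: flipping a $1$-entry $m_{ij}$ in $M$ is exactly the deletion of the edge between the row-vertex $r_i$ and the column-vertex $c_j$ in the representing graph $G_M$. Since $M$ is a $(2,2)$-matrix, every vertex of $G_M$ has degree at most $2$, so each connected component of $G_M$ is either a path or a (necessarily chordless) cycle. By Lemma~\ref{prop3} and the discussion at the start of Section~\ref{esly}, the only forbidden submatrices that can occur in a $(2,2)$-matrix are $M_{I_k}$ and $M^T_{I_k}$ with $k\geq 1$, and these correspond exactly to chordless cycles of length $2k+4\geq 6$ in $G_M$.

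Next I would argue that each such bad cycle can be, and must be, destroyed by exactly one $1$-flip. For sufficiency, deleting any one edge of a chordless cycle $C$ of length at least $6$ turns that component into a path, which, being acyclic, contains no induced cycle of length at least $6$; and because the components of $G_M$ are vertex-disjoint, this local deletion cannot introduce any new forbidden submatrix in another component. Moreover, flipping $1$-entries to $0$-entries only decreases the number of ones per row/column, so the matrix remains a $(2,2)$-matrix throughout. For necessity, disjointness of the components means no single flip can touch two distinct bad cycles simultaneously, so at least one flip per bad cycle is unavoidable.

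Combining these observations yields the algorithm: build $G_M$ in polynomial time, use Lemma~\ref{prop6} to enumerate its chordless cycles, and let $\ell$ be the number of those with length greater than $4$. Answer YES iff $\ell\leq d$, in which case a concrete solution flips, for each bad cycle, the $1$-entry corresponding to any one of its edges. The run-time is $O(d(m+n))$, by the same counting argument used in Theorem~\ref{easily}.

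I do not foresee a substantive obstacle; the only point that requires a moment's care is the claim that \emph{one} edge deletion per bad cycle is both necessary and sufficient. Sufficiency follows from the ``cycle becomes a path'' observation together with the independence of components; necessity follows from the vertex-disjointness of the components in a max-degree-$2$ graph, which prevents any single flip from helping more than one bad cycle.
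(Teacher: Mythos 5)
Your proposal is correct and follows essentially the same route as the paper: interpret a $1$-flip as an edge deletion in the representing graph $G_M$, observe that the bad structures are exactly the pairwise-disjoint chordless cycles of length at least six, and destroy each with a single edge deletion, answering Yes iff the number of such cycles is at most $d$. You supply the necessity/sufficiency justification more explicitly than the paper does, but the underlying argument and the $O(d(m+n))$ run-time analysis are the same.
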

\begin{proof}
In $d$-$SC1P$-$1E$ problem, flipping a $1$-entry in $M$ corresponds to an edge deletion in the representing graph $G_{\mathsmaller{M}}$. To destroy $C$, delete an edge, say $e$ in $C$. Decrement the parameter $d$ by one and flip the corresponding $1$-entry in $M$. The input is an Yes-instance if the total number of $1$-entries flipped in $M$ (edges deleted in $G_{\mathsmaller{M}}$) is at most $d$, otherwise it is a No-instance. 
\end{proof}
\begin{corollary}
For $(2,2)$-matrices, $d$-$SC1P$-$01E$ problem is polynomial-time solvable.
\end{corollary}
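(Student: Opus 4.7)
The plan is to mirror the argument already established for the previous corollaries in this subsection: reduce the problem to handling disjoint chordless cycles of length greater than four in the representing graph $G_{\mathsmaller{M}}$, and then observe that in the $01$-flipping setting we may always opt for the cheaper of the two local strategies. Concretely, I would begin by recalling that since $M$ is a $(2,2)$-matrix, every vertex of $G_{\mathsmaller{M}}$ has degree at most two, so the connected components of $G_{\mathsmaller{M}}$ are vertex-disjoint paths and chordless cycles, and by Lemma \ref{prop3} the only forbidden submatrices that $M$ can contain are $M_{\mathsmaller{I_{k}}}$ and $M^{T}_{\mathsmaller{I_{k}}}$ ($k \ge 1$), which correspond precisely to the chordless cycles of length at least six in $G_{\mathsmaller{M}}$.

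Next, for each such chordless cycle $C$ of length $2n \ge 6$, flipping a $1$-entry corresponds to deleting an edge of $C$ (turning $C$ into a path and thereby destroying the forbidden submatrix at cost $1$), while flipping $0$-entries corresponds to adding chords to $C$ to make it chordal bipartite, which by Lemma \ref{main1} requires $n-2 \ge 1$ flips. Since the two choices cost $1$ and $n-2$ respectively, an optimal strategy for destroying a single cycle is to flip exactly one $1$-entry on it; the two options only tie when $n=3$.

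I would then argue that, because the chordless cycles of length greater than four in $G_{\mathsmaller{M}}$ are pairwise vertex-disjoint (again from the degree-at-most-two property), destroying them can be treated independently: no single flip can be re-used across two distinct cycles, and performing a flip inside one cycle cannot re-create a forbidden submatrix elsewhere in a component that was already a path or a $4$-cycle. Consequently the minimum number of $0$/$1$-flips needed to give $M$ the $SC1P$ equals the number $\ell$ of chordless cycles of length greater than four in $G_{\mathsmaller{M}}$, and the algorithm simply builds $G_{\mathsmaller{M}}$, enumerates those cycles via Lemma \ref{prop6}, flips one $1$-entry per cycle, and answers Yes iff $\ell \le d$.

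The only real subtlety, and the step I would flag as the main (minor) obstacle, is justifying that mixing the two flip-types across different cycles never beats the per-cycle edge-deletion strategy; this follows directly from the disjointness of the components together with the inequality $n-2 \ge 1$ for $n \ge 3$, so the argument is essentially a component-wise lower bound matching the upper bound achieved by the algorithm. The overall running time is dominated by detecting the chordless cycles and is $O(d(m+n))$, matching the bound obtained for the preceding corollaries.
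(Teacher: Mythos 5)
Your proposal is correct and follows essentially the same route as the paper: both observe that for a chordless cycle of length $2k+4$ the $0$-flipping option costs $k\ge 1$ while deleting a single edge ($1$-flip) always suffices, so one edge is deleted per cycle exactly as in the $d$-$SC1P$-$1E$ case, with the vertex-disjointness of the cycles (degree at most two) giving the matching lower bound and the $O(d(m+n))$ running time. Your additional remarks on the component-wise optimality of never mixing flip types are a slightly more explicit justification of what the paper leaves implicit, but the argument is the same.
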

\begin{proof}
 In $d$-$SC1P$-$01E$ problem, the allowed operations are edge additions and edge deletions. In a chordless cycle $C$ of length $2k+4$, the number of edges to be added to destroy $C$ is $k$ where $k \geq 1$, but deletion of any edge in $C$ destroys $C$. Hence, we always delete an edge from each of the chordless cycles in $G_{\mathsmaller{M}}$ for destroying it. This proof is same as the proof of Corollary \ref{cor1}. 
\end{proof}
\subsection{\textup{\textbf{Establishing $SC1P$ by Deletion of Rows/Columns}}} \label{estrc}
This section considers the classical complexity and fixed-parameter tractability of $SC1S$ problems by row, column and row as well as column deletion. We refer to the decision versions of the optimization problems \textsc{Sc$\mathsmaller1$s-Row Deletion}, \textsc{Sc$\mathsmaller 1$s-Column Deletion} and \textsc{Sc$\mathsmaller 1$s-Row-Column Deletion} defined in Section \ref{opt} as $k$-$SC1S$-$R$, $k$-$SC1S$-$C$, and $k$-$SC1S$-$RC$ respectively, where $k$ denotes the number of allowed deletions. First, we show that these problems are NP-complete. Then, we give FPT algorithms for these problems on general matrices. For each of these  problems, we also give improved FPT algorithms on certain restricted matrices.
\subsubsection{\textup{\textbf{NP-Completeness}}}\label{hard1}
The following theorem proves the NP-completeness of $k$-$SC1S$-$R$ problem using \textsc{Hamiltonian-Path} as a candidate problem. 
\begin{theorem}\label{thm14}
Given an $m \times n$ matrix $M$, deciding if there exists a set $R^{'} \subseteq R(M)$, of rows such that $\vert R^{'} \vert \leq k$ and $M \backslash R^{'}$ have the $SC1P$ is NP-complete.
\end{theorem}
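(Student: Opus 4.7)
The plan is to prove NP-completeness by a reduction from the classical \textsc{Hamiltonian-Path} problem on undirected graphs, hinted at by the problem formulation itself: we want to destroy every forbidden submatrix by deleting rows only, which, for a sparsely structured matrix, should translate into a ``sparsifying'' condition on an underlying graph.

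Membership in NP is immediate: a candidate deletion set $R^{'} \subseteq R(M)$ with $\vert R^{'} \vert \leq k$ serves as a polynomial-size certificate, and $M \backslash R^{'}$ can be tested for the $SC1P$ in linear time using the $C1P$ testers cited in Section \ref{intro}.

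For hardness, given $G=(V,E)$ with $\vert V \vert = n$ and $\vert E \vert = m$, I would take the edge-vertex incidence matrix $M(G)$ from Definition \ref{defn6} as the matrix instance and set $k := m-(n-1)$. The rows of $M(G)$ correspond to edges (with exactly two ones each) and the columns to vertices; the construction is clearly polynomial. In the forward direction, if $v_{1}v_{2}\ldots v_{n}$ is a Hamiltonian path of $G$, I would delete all rows corresponding to edges outside this path. What remains is precisely the edge-vertex incidence matrix of $P_{n}$; ordering the columns in the order $v_{1},\ldots,v_{n}$ and the rows in the order of the path edges places two consecutive ones in every row and at most two consecutive ones in every column, establishing the $SC1P$.

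The main obstacle is the backward direction, because I must argue that a budget of $k$ row-deletions is not only sufficient but also forces a single spanning path in $G$. Suppose $\vert R^{'} \vert \leq k$ and $M(G) \backslash R^{'}$ has the $SC1P$; let $H$ be the subgraph of $G$ whose edge set is $E$ minus the edges corresponding to $R^{'}$. Since the $SC1P$ implies the $C1P$ for rows, Lemma \ref{prop4}, applied to each connected component of $H$ that contains at least one edge, forces every such component to be a simple path. Letting $c$ be the number of non-trivial path components and $v$ the number of vertices they cover, one has $\vert E(H) \vert = v-c \geq m-k = n-1$, while $v \leq n$; these two inequalities together with $c \geq 1$ (non-emptiness of $H$, using $n \geq 2$) pin down $c=1$ and $v=n$, so $H$ is a Hamiltonian path of $G$. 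This tight counting step, in combination with Lemma \ref{prop4}, is the crux of the proof; it also rules out degenerate configurations such as a disjoint union of smaller paths together with isolated vertices. Finally, since $M(G)$ is a $(*,2)$-matrix, the same reduction incidentally shows that $k$-$SC1S$-$R$ remains NP-complete already on $(*,2)$-matrices.
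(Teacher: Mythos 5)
Your proposal is correct and follows essentially the same route as the paper: the identical reduction from \textsc{Hamiltonian-Path} via the edge-vertex incidence matrix with budget $k=m-n+1$, the same forward direction, and the same reliance on Lemma \ref{prop4} for the converse. The only difference is cosmetic --- where the paper establishes connectivity of the surviving subgraph by deriving a cycle (contradicting the $C1P$ for rows) when it is disconnected, you count edges over path components, which also handles the case $\vert R^{'} \vert < k$ explicitly rather than assuming exactly $k$ deletions.
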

\begin{proof}
We first show that $k$-$SC1S$-$R$ $\in$ NP. Given a matrix $M$ and an integer $k$, the certificate chosen is a set of rows $R^{'} \subseteq R(M)$. The verification algorithm affirms that $\vert R^{'} \vert \leq k $, and then it checks  whether deletion of these $k$ rows from $M$ yields a matrix with the $SC1P$. This certificate can be verified in polynomial-time.\\
\indent We prove that $k$-$SC1S$-$R$ problem is NP-hard by showing that \textsc{Hamiltonian-Path} $\leq_{\mathsmaller {P}}$  $k$-$SC1S$-$R$. Let $G=(V,E)$ be a graph with $\vert V \vert = n$ and $\vert E \vert =m$, and $M(G)_{m \times n}$ be the edge-vertex incidence matrix (see Definition \ref{defn6}) obtained from $G$. Without loss of generality, assume that $G$ is connected and let $k$ be $m$-$n$+$1$. We show that $G$ has a Hamiltonian path if and only if there exists a set of rows of size $k$ in $M(G)$ whose deletion results in a matrix $M^{'}(G)$, that satisfy the $SC1P$.

Assume that $G$ contains a Hamiltonian path. In $M(G)$, delete the rows that correspond to edges which are not part of the Hamiltonian path in $G$. Since Hamiltonian path contains $n$-$1$ edges, the number of rows remaining in $M(G)$ will be $n$-$1$ which is equal to $m$-$k$ and hence the number of rows deleted will be $k$. Now, order the columns and rows of $M(G)$ with respect to the sequence of  vertices and edges, respectively in the Hamiltonian path. Clearly, the resulting matrix has the $SC1P$.

To prove the other direction, let $M^{'}(G)$ be the matrix obtained by deleting $k$ rows from $M(G)$ and assume that $M^{'}(G)$ has the $SC1P$. Now, the number of rows in $M^{'}(G)$ is  $m$-$k$, which is equal to $n$-$1$. Let $G^{'}$ be the subgraph obtained from $M^{'}(G)$, by considering $M^{'}(G)$ as an edge-vertex incidence matrix of $G^{'}$. Since $M^{'}(G)$ has the $SC1P$; it has the $C1P$ for rows. Also, note that $M^{'}(G)$ has $n$-$1$ rows. We claim that the subgraph $G^{'}$ is connected. Otherwise one of the connected components of $G^{'}$ must contain a cycle which contradicts the fact that $M^{'}(G)$ has the $C1P$ for rows. This implies that $G^{'}$ is a path (see Lemma \ref{prop4}) of length $n$-$1$, which clearly indicates that $G$ has a Hamiltonian path. The column permutation needed to convert $M^{'}(G)$ into a matrix that has the $C1P$ for rows gives the relative order of vertices of $G$'s Hamiltonian path.
 This proves the NP-completeness of $k$-$SC1S$-$R$. 
\end{proof}
\begin{corollary}
The problem $k$-$SC1S$-$C$ is NP-complete.
\end{corollary}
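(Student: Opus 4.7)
The plan is to reduce from $k$-$SC1S$-$R$, which is NP-complete by Theorem~\ref{thm14}, using the transpose operation. First I would argue $k$-$SC1S$-$C \in \mathrm{NP}$ in the standard way: the certificate is a set $C' \subseteq C(M)$ of size at most $k$, and verifying that $M \setminus C'$ satisfies the $SC1P$ can be carried out in linear time by the testing algorithms referenced in the introduction.

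For NP-hardness, I would exploit the symmetry built into the definition of the $SC1P$. The key observation is that a matrix $N$ has the $SC1P$ if and only if $N^{T}$ has the $SC1P$, because the property requires the $C1P$ for both rows and columns, and transposition merely swaps these two conditions. Given an instance $\langle M, k \rangle$ of $k$-$SC1S$-$R$, the reduction produces $\langle M^{T}, k \rangle$ as an instance of $k$-$SC1S$-$C$; this is clearly computable in polynomial time.

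To verify correctness, note that the rows of $M$ are in one-to-one correspondence with the columns of $M^{T}$. If $R' \subseteq R(M)$ with $|R'| \le k$ is such that $M \setminus R'$ has the $SC1P$, let $C'$ be the set of columns of $M^{T}$ corresponding to $R'$; then $M^{T} \setminus C' = (M \setminus R')^{T}$, which has the $SC1P$ by the key observation, and $|C'| \le k$. Conversely, any column-deletion witness for $\langle M^{T}, k \rangle$ transposes back into a row-deletion witness for $\langle M, k \rangle$. Thus $\langle M, k \rangle$ is a yes-instance of $k$-$SC1S$-$R$ iff $\langle M^{T}, k \rangle$ is a yes-instance of $k$-$SC1S$-$C$, completing the reduction.

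There is no real obstacle here; the only point that deserves explicit mention is the symmetry of the $SC1P$ under transposition, which is immediate from the definition. An alternative (but longer) route would be to mimic the Hamiltonian path argument of Theorem~\ref{thm14} directly, using the vertex-edge incidence matrix in place of the edge-vertex incidence matrix; I would prefer the transpose reduction for brevity.
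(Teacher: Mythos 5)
Your proof is correct. The paper handles this corollary differently on the surface: it re-runs the Hamiltonian-path reduction of Theorem~\ref{thm14} directly, replacing the edge-vertex incidence matrix by the vertex-edge incidence matrix and counting deleted columns instead of rows. Your route instead reduces from the already-established NP-complete problem $k$-$SC1S$-$R$ via transposition, using the fact that a matrix has the $SC1P$ iff its transpose does (immediate from the definition, or from the transpose-closedness of Tucker's forbidden family $F_{SC1P}$). The two arguments are close relatives — the vertex-edge incidence matrix is exactly the transpose of the edge-vertex incidence matrix, so the paper's proof is in effect your reduction unrolled — but your formulation is more modular and makes the underlying symmetry explicit, and indeed the paper itself invokes precisely this equivalence later when deriving the FPT result for $d$-$SC1S$-$C$ from that for $d$-$SC1S$-$R$. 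One further point in favor of your version: since the hard instances of $k$-$SC1S$-$R$ are $(*,2)$-matrices, their transposes are $(2,*)$-matrices, so your reduction also delivers the paper's strengthened corollary that $k$-$SC1S$-$C$ remains NP-complete on $(2,*)$-matrices, with no extra work.
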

\begin{proof}
The NP-completeness of $k$-$SC1S$-$C$ can be proved similar to Theorem \ref{thm14} (NP-completeness of $k$-$SC1S$-$R$) by considering $M$ as the vertex-edge incidence matrix and $k$ as the number of columns to be deleted. 
\end{proof}
\noindent Since the edge-vertex incidence matrix (resp. vertex-edge incidence matrix) is a $(*,2)$-matrix (resp. $(2,*)$-matrix), in fact, the following stronger result holds: \begin{corollary}
$k$-$SC1S$-$R$ (resp. $k$-$SC1S$-$C$) problem is NP-complete even for $(*,2)$-matrices (resp. $(2,*)$-matrices).
\end{corollary}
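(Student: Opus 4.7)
The plan is to observe that the NP-hardness reductions already carried out for Theorem~\ref{thm14} and its preceding corollary in fact produce matrices with the claimed restricted structure, so no new reduction is needed. First, I would recall the definitions: an edge-vertex incidence matrix $M(G)$ (Definition~\ref{defn6}) has rows indexed by edges and columns indexed by vertices, with exactly two ones per row (one for each endpoint of an edge). By the definition of $(*,2)$-matrix, this means $M(G)$ is a $(*,2)$-matrix. Symmetrically, the vertex-edge incidence matrix used in the corollary for $k$-$SC1S$-$C$ is the transpose, with exactly two ones per column, so it is a $(2,*)$-matrix.

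Given this observation, the NP-hardness part follows immediately: the same polynomial-time reduction from \textsc{Hamiltonian-Path} used in Theorem~\ref{thm14} maps a connected graph $G$ to the $(*,2)$-matrix $M(G)$ together with the parameter $k=m-n+1$, and the correctness proof (that $G$ has a Hamiltonian path iff $k$ rows can be deleted from $M(G)$ to yield a matrix with the $SC1P$) goes through verbatim, since it never used anything beyond the incidence-matrix structure. Hence $k$-$SC1S$-$R$ remains NP-hard when the input is promised to be a $(*,2)$-matrix. The analogous reduction using the vertex-edge incidence matrix establishes NP-hardness of $k$-$SC1S$-$C$ even on $(2,*)$-matrices.

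For NP-membership on the restricted instances, I would simply note that it is inherited from the unrestricted versions: the certificate (a set $R'\subseteq R(M)$, respectively $C'\subseteq C(M)$, of size at most $k$) is verified in polynomial time by checking $|R'|\le k$ and then testing whether $M\setminus R'$ has the $SC1P$ in linear time via the classical $C1P$-testing algorithms discussed in the introduction. Restricting the input class can only make verification easier, so membership in NP is preserved.

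The main (and only) obstacle to flag is verifying that the forbidden-submatrix/row-count bookkeeping in the proof of Theorem~\ref{thm14} is undisturbed by the restriction: specifically, in the forward direction the surviving rows are a subset of the rows of $M(G)$, hence still have at most two ones each, so the reduced matrix is still a $(*,2)$-matrix; and in the backward direction, the graph $G'$ recovered from $M'(G)$ is well-defined precisely because $M'(G)$ is a $(*,2)$-matrix with exactly two ones per row. Both checks are immediate from the structure of $M(G)$, so the two corollaries follow without additional machinery.
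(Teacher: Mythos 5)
Your proposal is correct and matches the paper's own justification: the paper derives this corollary from the single observation that the edge-vertex (resp.\ vertex-edge) incidence matrix used in the reduction of Theorem~\ref{thm14} is already a $(*,2)$-matrix (resp.\ $(2,*)$-matrix), so the Hamiltonian-Path reduction lands inside the restricted class without modification. Your additional checks (NP-membership is inherited, and the correctness argument never uses anything beyond the incidence structure) are just a more explicit spelling-out of the same one-line argument.
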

\noindent To prove the NP-completeness of the $k$-$SC1S$-$RC$ problem, we use the \textsc{Biconvex Deletion} problem (Definition \ref{biconvex}) as a candidate problem. The following theorem proves the NP-completeness of $k$-$SC1S$-$RC$. 
\begin{theorem}
The $k$-$SC1S$-$RC$ problem is NP-complete.
\end{theorem}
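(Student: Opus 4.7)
The plan is to reduce from \textsc{Biconvex Deletion}, which the excerpt notes is NP-complete via Yannakakis' results, leveraging Lemma~\ref{biconvexlemm} that says a bipartite graph is biconvex if and only if its half adjacency matrix has the $SC1P$. Membership in NP is immediate: given a candidate pair $(R', C')$ with $|R'|+|C'|\le k$, verifying $\lvert R'\rvert + \lvert C'\rvert \le k$ and testing the $SC1P$ of $(M\setminus R')\setminus C'$ can both be done in polynomial time (the $SC1P$ test is linear-time, as noted in the introduction).

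For NP-hardness, I would take an instance $\langle G=(V_1,V_2,E),\,k\rangle$ of \textsc{Biconvex Deletion}, construct the half adjacency matrix $M_G$ of $G$ (with rows indexed by $V_1$ and columns indexed by $V_2$), and output $\langle M_G,\,k\rangle$ as an instance of $k$-$SC1S$-$RC$. The construction is clearly polynomial. The key observation is that deleting a vertex $v\in V_1$ from $G$ corresponds exactly to deleting the corresponding row of $M_G$, and deleting a vertex $u\in V_2$ corresponds exactly to deleting the corresponding column; hence for any $D\subseteq V_1\cup V_2$, the matrix $(M_G\setminus (D\cap V_1))\setminus (D\cap V_2)$ is precisely the half adjacency matrix of $G[(V_1\cup V_2)\setminus D]$.

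With this correspondence in place, both directions follow from Lemma~\ref{biconvexlemm}. For the forward direction, given a deletion set $D$ of size at most $k$ such that $G[(V_1\cup V_2)\setminus D]$ is biconvex, let $R'=D\cap V_1$ and $C'=D\cap V_2$; then $|R'|+|C'|\le k$ and by Lemma~\ref{biconvexlemm} the resulting matrix has the $SC1P$. For the reverse direction, given $(R',C')$ with $|R'|+|C'|\le k$ such that $(M_G\setminus R')\setminus C'$ has the $SC1P$, set $D=R'\cup C'$ (viewed as vertices in $V_1\cup V_2$); again by Lemma~\ref{biconvexlemm} this yields a biconvex deletion set of size at most $k$.

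The whole argument is straightforward once the half-adjacency correspondence is invoked, so there is no real obstacle; the only subtle point to verify carefully is that the reduction preserves the parameter exactly (the bound $k$ is the same on both sides, since each deleted vertex maps to exactly one deleted row or column), which is why the count $|R'|+|C'|$ matches $|D|$ and the equivalence carries through.
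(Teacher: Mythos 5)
Your proposal is correct and follows essentially the same route as the paper: the authors also reduce from \textsc{Biconvex Deletion} via the half adjacency matrix and invoke Lemma~\ref{biconvexlemm} to get the equivalence, merely stating the correspondence more tersely than you do. Your write-up simply makes explicit the row/column-versus-vertex correspondence and the parameter preservation that the paper leaves implicit.
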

\begin{proof}
It is easy to show that $k$-$SC1S$-$RC \in NP$. We prove that $k$-$SC1S$-$RC$ problem is NP-hard by showing that \textsc{Biconvex Deletion} problem $\leq_{\mathsmaller {P}}$  $k$-$SC1S$-$RC$. Let $G=(V_{\mathsmaller{1}}, V_{\mathsmaller{2}}, E)$ be a bipartite graph and $M$ be a half adjacency matrix (see Definition \ref{defn5}) of $G$. Using Lemma \ref{biconvexlemm}, it can be shown that $G$ has a set of vertices, $V'_{1} \subseteq V_{\mathsmaller{1}}$ and $V'_{2} \subseteq V_{\mathsmaller{2}}$, with $\vert V'_{1} \vert + \vert V'_{2} \vert \leq k$, whose deletion results in a biconvex graph if and only if there exists a set of rows $R' \subseteq R(M)$ and columns $C' \subseteq C(M)$, with $\vert R' \vert + \vert C' \vert \leq k$ in $M$ whose deletion results in a matrix $M'$, that satisfy the $SC1P$. Therefore $k$-$SC1S$-$RC$ is NP-complete.  
\end{proof}
\subsubsection{\textup{\textbf{An FPT algorithm for $d$-$SC1S$-$R/d$-$SC1S$-$C$ problem}} \label{fptr/c}} 
\noindent Here, we present an FPT algorithm \textit{$d$-$SC1S$-$Row$-Deletion} (Algorithm \ref{alg5}), for $d$-$SC1S$-$R$ problem on general matrices. Given a binary matrix $M$ and a non-negative integer $d$, Algorithm \hyperref[alg5]{1} first destroys the fixed size forbidden submatrices from $X$ in $M$, using a simple search tree based branching algorithm. If $M$ contains a forbidden matrix from $X$ (see Section \ref{defn1}), then the algorithm recursively branches into at most six subcases, since the largest forbidden matrix of $X$ has six rows. In each subcase, delete one of the rows of the forbidden submatrix of $X$ found in $M$ and decrement the parameter $d$ by one. This process is continued in each subcase until its $d$ value becomes zero or until it does not contain any matrix from $X$ as its submatrix. If any of the leaf instances satisfy the $SC1P$, then algorithm returns Yes, indicating that input is an Yes instance. Otherwise, for each  valid leaf instance (leaf instances with $d_{i} >0$), say $\langle M_{i}, d_{i} \rangle$ (where $1 \leq i \leq 6^{d}$) of the above depth bounded search tree, if $M_{i}$ still does not have the $SC1P$, then destroy $M_{I_{k}}$ and $M^{T}_{\mathsmaller{I_{k}}}$ (where $k \geq 1$) in $M_{i}$, using the algorithm for $d$-$COS$-$R$ (see Lemma \ref{lemma11}) on $M_{i}$. The following claim holds true for any leaf instance $\langle M_{i}, d_{i} \rangle$, where $1 \leq i \leq 6^{d}$.
  \begin{algorithm}[h]
\caption{Algorithm  \textit{$d$-$SC1S$-$Row$-Deletion}$(M,d)$ \label{alg5}}
\begin{algorithmic}[1] 
 \Require An instance $\langle M_{m \times n},d \rangle$, where $M$ is a binary matrix and $d \geq 0$. \vspace{-0.1 in} \[ \hspace{-0.48 in}\textbf{Output} = \begin{dcases*} Yes,  & if there exists a set  $R^{'} \subseteq R(M)$, with $\vert R^{'} \vert \leq d$, such that $M \backslash R'$ \\  
        & has the $SC1P$.  
   \\  
   No,    & otherwise
\end{dcases*}
\] \vspace{-0.2 in}
 \State \textbf{if} {$M$ has the $SC1P$ and $d \geq 0$} \textbf{then} return Yes. \vspace{-0.05 in}
 \State \textbf{if}  {$d < 0$} \textbf{then} return No. \vspace{-0.05 in} \newline \vspace{-0.05 in} 
 \noindent \textit{\textbf{Branching Step:}}
\State \textbf{if} {$M$ contains a forbidden submatrix $M'$ from $X$}, \vspace{-0.05 in}
    \newline \vspace{-0.05 in}
  \indent Branch into at most $6$  instances $I_{i}=\langle M_{i}, d_{i}\rangle$ where $i \in \{1,2,\ldots,6\}$ \vspace{-0.03 in} \newline \vspace{-0.03in}
  \indent $M_{i} = M \backslash r_{i}$, where $r_{i} \in R(M')$ \vspace{-0.03 in} \newline  \vspace{-0.03 in}
 \indent Update $d_{i} = d -1$  \hspace{1.0 in}// Decrement parameter by 1. \vspace{-0.03 in} \newline \vspace{-0.03 in} 
\noindent For some $i \in \{1,2,\ldots,6\}$, if \textit{$d$-$SC1S$-$Row$-Deletion}$(M_{i},d_{i})$ return Yes, then return Yes, else if all instances return No, then return No.  \vspace{-0.05 in}
\State \textbf{else}\vspace{-0.05 in}
\State \indent \textbf{if} {$M$ contains either $M_{I_{k}}$ or $M^{T}_{\mathsmaller{I_{k}}}$},\vspace{-0.05 in}
\State \indent \indent $D=$ $d$-$COS$-$R(M,d)$ (Using Lemma \ref{lemma11})\vspace{-0.05 in}
\State \indent \indent \textbf{if} {$\vert D \vert > 0$ }\vspace{-0.05 in}
\State \indent \indent \indent return Yes\vspace{-0.05 in}
\State \indent \indent \textbf{else}\vspace{-0.05 in}
\State  \indent \indent \indent return No\vspace{-0.05 in}
\State \indent \indent \textbf{end if}\vspace{-0.05 in}
\State \indent \textbf{end if}\vspace{-0.05 in}
\State \textbf{end if} \vspace{-0.05 in}
\end{algorithmic}
\end{algorithm}
\begin{claim}
Let $M$ be a matrix that does not contain any fixed size forbidden matrices from $X$, then $d$-$COS$-$R(M,d)$ would destroy only forbidden matrices of the form $M_{I_{k}}$ and $M^{T}_{\mathsmaller{I_{k}}}$ in $M$, where $k \geq 1$.
\end{claim}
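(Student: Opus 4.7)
The plan is to show, by contrapositive, that under the hypothesis ``$M$ contains no submatrix from $X$,'' the only forbidden submatrices for the $C1P$ for rows that can appear in $M$ are of type $M_{\mathsmaller{I_{k}}}$ (equivalently $M^T_{\mathsmaller{I_{k}}}$, since by Lemma \ref{prop3} the two have the same representing graph and hence are isomorphic up to row and column permutations). Since $d$-$COS$-$R(M,d)$ operates by repeatedly locating a forbidden submatrix for the $C1P$ for rows --- one of $M_{\mathsmaller{I_{k}}}, M_{\mathsmaller{II_{k}}}, M_{\mathsmaller{III_{k}}}, M_{\mathsmaller{IV}}, M_{\mathsmaller{V}}$ listed in Theorem \ref{forbc1p} --- and deleting a row from it, establishing the above will immediately yield the claim.

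The crux is therefore a subclaim: every copy of $M_{\mathsmaller{II_{k}}}$ ($k \geq 1$), $M_{\mathsmaller{III_{k}}}$ ($k \geq 1$), $M_{\mathsmaller{IV}}$, or $M_{\mathsmaller{V}}$ contains a submatrix from $X$. For the two fixed-size matrices $M_{\mathsmaller{IV}}$ and $M_{\mathsmaller{V}}$, I would verify this by direct inspection, exhibiting explicit row/column indices that reproduce one of $M_{\mathsmaller{2_{1}}}, M_{\mathsmaller{2_{2}}}, M_{\mathsmaller{3_{1}}}, M_{\mathsmaller{3_{2}}}, M_{\mathsmaller{3_{3}}}$ or its transpose. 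For the parameterized families $M_{\mathsmaller{II_{k}}}$ and $M_{\mathsmaller{III_{k}}}$, I would exploit their explicit Tucker description: each is an $M_{\mathsmaller{I_{k}}}$-style staircase augmented by a distinguished row (or column) whose pattern of ones is exactly what violates the $C1P$. Selecting this augmenting row together with two suitably chosen staircase rows, restricted to the columns where their ones interact, yields a bounded-size submatrix isomorphic to some member of $X$. Because the non-staircase part of the Tucker configuration is fixed, the same selection works uniformly for every $k \geq 1$. Consequently, if $M$ has no submatrix from $X$, it has none of $M_{\mathsmaller{II_{k}}}, M_{\mathsmaller{III_{k}}}, M_{\mathsmaller{IV}}, M_{\mathsmaller{V}}$ either.

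The main obstacle I anticipate lies in rigorously pinning down this uniform row/column selection for $M_{\mathsmaller{II_{k}}}$ and $M_{\mathsmaller{III_{k}}}$: the containing $X$-element must come from the non-staircase portion, and one has to check that no new forbidden configuration is introduced or missed as $k$ varies. If a direct selection turns out to be awkward, a cleaner indirect route is available: since $M_{\mathsmaller{II_{k}}}$ and $M_{\mathsmaller{III_{k}}}$ fail the $C1P$ for rows, they fail the $SC1P$, so by Tucker's characterization of the $SC1P$ (Theorem \ref{thm2}) they must contain some member of $F_{SC1P}$. One then argues on size and structural grounds that a full copy of $M_{\mathsmaller{I_{k'}}}$ or $M^T_{\mathsmaller{I_{k'}}}$ cannot fit inside the slightly larger $M_{\mathsmaller{II_{k}}}/M_{\mathsmaller{III_{k}}}$ without already using the augmenting row/column (thereby producing an $X$-submatrix from the remaining entries), which forces the witness to come from $X$. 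Either route closes the argument.
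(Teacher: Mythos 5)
Your proposal follows essentially the same route as the paper: both reduce the claim to Tucker's forbidden-submatrix characterization of the $C1P$ for rows (Theorem \ref{forbc1p}) and rest on the key fact that every member of $\{M_{II_{k}}, M_{III_{k}}, M_{IV}, M_{V}\}$ (and transposes) contains some matrix of $X$ as a submatrix, so that a matrix free of $X$ can only harbor $M_{I_{k}}/M^{T}_{I_{k}}$. The paper simply asserts this containment fact without the case analysis you sketch, so your write-up is, if anything, more explicit about the one step that actually requires verification.
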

\begin{proof}
Let $F_{\mathsmaller {C1PR}}$, $F_{\mathsmaller {C1PC}}$, and $F_{\mathsmaller {SC1P}}$ represent the set of forbidden submatrices of $C1P$ for rows, $C1P$ for columns and $SC1P$ respectively.

Let $F_{\mathsmaller {C1PR}}$= $X_{1} \cup \{M_{I_{k}}\}$, where $X_{1}=\{M_{II_{k}}, M_{III_{k}}, M_{IV}, M_{V}\}$ (see Theorem \ref{forbc1p})

Then, $F_{\mathsmaller {C1PC}}$= $X^{T}_{1} \cup \{M^{T}_{\mathsmaller{I_{k}}}\}$, where $X^{T}_{1}=\{M^{T}_{II_{k}}, M^{T}_{\tiny {III_{k}}}, M^{T}_{IV}, M^{T}_{V}\}$

Now, $F_{\mathsmaller {SC1P}}= X_{1} \cup X^{T}_{1} \cup \{M_{I_{k}}, M^{T}_{\mathsmaller{I_{k}}}\}$

\noindent From Lemma \ref{prop3}, it is clear that searching for both $M_{I_{k}}$ and its transpose is equivalent to searching for $M_{I_{k}}$ alone. 

This implies, $F_{\mathsmaller {SC1P}}= X_{1} \cup X^{T}_{1} \cup \{M_{I_{k}}\}$, where $k \geq 1$

Now, one of the matrices from $X$ occurs as a submatrix of every matrix in $X_{1} \cup X^{T}_{1}$. Since $M$ being a matrix not containing any matrices from $X$, $M$ will not have any matrix from $X_{1} \cup X^{T}_{1}$ as a submatrix. Hence, employing $d$-$COS$-$R$ on $M$ would destroy only forbidden submatrices of the form $M_{I_{k}}$ in $M$.
\end{proof}
If any of the valid leaf instances $\langle M_{i}, d_{i} \rangle$ (where $1 \leq i \leq 6^{d}$) return Yes after employing $d$-$COS$-$R$ algorithm, then Algorithm \ref{alg5} returns Yes indicating that $M$ is an Yes instance, otherwise it returns No. 

\begin{theorem}\label{thmr}
$d$-$SC1S$-$R$ is fixed-parameter tractable on general matrices with a run-time of $O^{*}(8^{d})$. 
\end{theorem}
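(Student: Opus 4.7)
The plan is to analyze Algorithm \ref{alg5} in two pieces: correctness, and then a running-time bound via a bounded search tree whose branching cost is absorbed into the cost of the $d$-$COS$-$R$ call at the leaves by a geometric-series trick.

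For correctness, I would first argue that the branching phase exhaustively handles every fixed-size forbidden submatrix from $X$. If $M$ contains some $M'\in X$, then any valid solution must delete at least one row of $M'$, so branching on the (at most six) rows of $M'$ is sound and complete. At a node where the algorithm exits the branching step, the surviving matrix $M_i$ contains no submatrix from $X$, and by the Claim the only remaining obstructions to the $SC1P$ in $M_i$ are of type $M_{I_k}$ or $M^{T}_{\mathsmaller{I_{k}}}$, $k\geq 1$. Here I would invoke the key structural fact implicit in Lemma \ref{prop3} and in the cyclic structure of the $M_{I_k}$ family: a submatrix isomorphic to $M^{T}_{\mathsmaller{I_{k}}}$ is also isomorphic (after row/column permutation) to $M_{I_k}$, so the two lie in the same configuration. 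Consequently, a set of row deletions that kills every $M_{I_k}$-submatrix simultaneously kills every $M^{T}_{\mathsmaller{I_{k}}}$-submatrix. Therefore the call to $d$-$COS$-$R$ on $M_i$ (Lemma \ref{lemma11}), whose guarantee is only about destroying $C1P$-for-rows obstructions, in fact removes all remaining $F_{SC1P}$-obstructions, yielding a matrix with the $SC1P$.

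For the running-time bound I would set up the recurrence
\[
T(d)\;\leq\;6\,T(d-1)+f(d),
\]
where $6T(d-1)$ accounts for the at-most-six branches, each decreasing the budget by one, and $f(d)=O^{*}(8^{d})$ is the cost of a single $d$-$COS$-$R$ call at a leaf with remaining budget at most $d$. The polynomial costs of locating a member of $X$ (Lemma \ref{prop1}) and of $M_{I_k}/M^{T}_{\mathsmaller{I_{k}}}$ (Lemma \ref{prop2}) are absorbed into the $O^{*}$ notation. Unrolling gives
\[
T(d)\;\leq\;\sum_{j=0}^{d}6^{j}\cdot O^{*}\bigl(8^{d-j}\bigr)\;=\;O^{*}(8^{d})\sum_{j=0}^{d}\bigl(\tfrac{3}{4}\bigr)^{j}\;=\;O^{*}(8^{d}),
\]
since the geometric tail collapses to a constant. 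The apparent $6^{d}\cdot 8^{d}$ blow-up therefore degenerates to $O^{*}(8^{d})$ because the branching budget and the $d$-$COS$-$R$ budget are charged to the same parameter $d$.

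The main obstacle I foresee is not the running-time analysis, which is routine once the recurrence is set up, but the correctness subtlety at the leaves: I must carefully justify that invoking $d$-$COS$-$R$ --- a tool tailored to the $C1P$ for rows --- actually suffices to eliminate both $M_{I_k}$ and $M^{T}_{\mathsmaller{I_{k}}}$ obstructions of the $SC1P$. The geometric-sum collapse, the polynomial-time forbidden-submatrix detection, and the standard search-tree bookkeeping are then mechanical.
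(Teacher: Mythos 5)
Your proposal is correct and follows essentially the same route as the paper: a bounded search tree with branching factor six on the fixed-size obstructions in $X$, followed by a $d$-$COS$-$R$ call at each leaf, with correctness resting on the facts that only $M_{I_{k}}/M^{T}_{\mathsmaller{I_{k}}}$ obstructions survive the branching phase and that these lie in a single configuration (Lemma \ref{prop3}), so eliminating the $C1P$-for-rows obstructions suffices. Your geometric-sum unrolling of $T(d)\leq 6\,T(d-1)+O^{*}(8^{d})$ is just a slightly more explicit rendering of the paper's $O^{*}(6^{d_{1}}\cdot 8^{d_{2}})=O^{*}(8^{d})$ budget-splitting argument.
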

\begin{proof}
Algorithm \hyperref[alg5]{1} employs a search tree, in which each node in the tree has at most six subproblems. Let us assume that out of the $d$ row-deletions that are allowed, $d_1$ are used for destroying the finite size forbidden matrices, and $d_2$ are used to destroy the remaining non-finite forbidden matrices. Therefore, the tree has at most $6^{d_{1}}$ leaves. A submatrix $M^{'}$ of $M$, that is isomorphic to one of the forbidden matrices in $X$ can be found in $O(m^{6}n)$-time (using Lemma \ref{prop1}). Therefore, the time taken to destroy the finite size forbidden matrices is $O^{*}$($6^{d_1}$). For each leaf instance, destroying all $M_{\mathsmaller I_{k}}$ and $M^{T}_{\mathsmaller{I_{k}}}$ (where $k \geq 1$) using $d$-$COS$-$R$ subroutine (Lemma \ref{lemma11}) takes $O^{*}(8^{d})$-time.    Therefore, the time taken to destroy the non-finite size forbidden matrices is $O^{*}$($8^{d_2}$). So, the total run-time of the algorithm would be $O^{*}$($6^{d_1}$.$8^{d_2}$)=$O^{*}$($8^d$).
\end{proof}
\noindent Since $d$-$SC1S$-$C$ on $M$ is equivalent to $d$-$SC1S$-$R$ on $M^{T}$, we obtain the following corollary.
\begin{corollary}
$d$-$SC1S$-$C$ is fixed-parameter tractable on general matrices with a run-time of $O^{*}(8^{d})$.
\end{corollary}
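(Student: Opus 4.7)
The plan is to reduce $d$-$SC1S$-$C$ on an input matrix $M$ directly to $d$-$SC1S$-$R$ on the transpose $M^T$, and then invoke Theorem \ref{thmr}. First I would note that transposition induces a bijection between the columns of $M$ and the rows of $M^T$: a subset $C' \subseteq C(M)$ corresponds to a subset $R' \subseteq R(M^T)$ with $|R'| = |C'|$, and $(M \setminus C')^T = M^T \setminus R'$. Hence, modulo this bijection, deleting at most $d$ columns of $M$ is the same operation as deleting at most $d$ rows of $M^T$.

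Next I would observe that the $SC1P$ is closed under transposition. Indeed, by definition a binary matrix satisfies the $SC1P$ iff it has the $C1P$ for rows and the $C1P$ for columns; transposition simply interchanges these two conditions, so $N$ has the $SC1P$ iff $N^T$ does. Combining this with the previous step gives the equivalence: $\langle M,d \rangle$ is a yes-instance of $d$-$SC1S$-$C$ iff $\langle M^T,d \rangle$ is a yes-instance of $d$-$SC1S$-$R$.

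Finally, I would apply Algorithm \ref{alg5} to the instance $\langle M^T, d \rangle$. By Theorem \ref{thmr} this runs in $O^{*}(8^d)$-time, and computing $M^T$ from $M$ is a polynomial-time preprocessing step that is absorbed into the $O^{*}$ notation. The output of the subroutine on $\langle M^T, d \rangle$ is returned as the answer for $\langle M, d \rangle$, yielding the claimed $O^{*}(8^d)$ algorithm for $d$-$SC1S$-$C$. There is no real obstacle here; the only point to be careful about is verifying that $SC1P$ is genuinely symmetric under transposition, which is immediate from its symmetric definition.
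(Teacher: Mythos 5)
Your proposal is correct and matches the paper exactly: the paper justifies this corollary by the one-line observation that $d$-$SC1S$-$C$ on $M$ is equivalent to $d$-$SC1S$-$R$ on $M^{T}$, which is precisely your reduction. Your additional verification that the $SC1P$ is preserved under transposition (since it is the conjunction of the $C1P$ for rows and for columns) is the right point to check and is implicit in the paper's argument.
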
 
\subsubsection{\textup{\textbf{An FPT algorithm for $d$-$SC1S$-$RC$ problem}}} \label{fptrc}
Here, we present an $FPT$ algorithm \textit{$d$-$SC1S$-$RC$-Deletion} (Algorithm \ref{alg6}), for the problem \begin{algorithm}[h]
\caption{Algorithm  \textit{$d$-$SC1S$-$RC$-Deletion}$(M,d)$\label{alg6}}
\begin{algorithmic}[1] 
 \Require An instance $\langle M_{m \times n},d \rangle$, where $M$ is a binary matrix and $d \geq 0$.\vspace{-0.1 in} \[ \hspace{-0.26 in}\textbf{Output} = \begin{dcases*} Yes,  & if there exists a set $R^{'} \subseteq R(M)$ and $C^{'} \subseteq C(M)$, with $\vert R^{'} \vert + \vert C^{'} \vert \leq d$, \\ & such that  $((M \backslash  R^{'}) \backslash C^{'})$ has the $SC1P$  \\  
   No,     & otherwise
\end{dcases*}
\] \vspace{-0.15 in}
 \State \textbf{if} {$M$ has the $SC1P$ and $d \geq 0$} \textbf{then} return Yes. \vspace{-0.08 in}
 \State \textbf{if}  {$d < 0$} \textbf{then} return No. \vspace{-0.05 in} \newline \vspace{-0.08 in}
  \noindent \textit{\textbf{Branching Step:}}
 \If {$M$ contains a forbidden submatrix $M'$ from $X$},  
\vspace{-0.05 in} \newline \vspace{-0.05 in}
  \indent Branch into at most $11$  instances $I_{i}=\langle M_{i}, d_{i}\rangle$ where $i \in \{1,2,\ldots,11\}$ \vspace{-0.05 in} \newline \vspace{-0.05 in}
  \indent $M_{i} = M \backslash m_{i}$, where $m_{i} \in R(M')$ or $m_{i} \in C(M')$ \vspace{-0.05 in} \newline \vspace{-0.05 in}  
 \indent Update $d_{i} = d -1$  \hspace{1.0 in}// Decrement parameter by 1. \vspace{-0.05 in} \newline \vspace{-0.05 in}
\noindent For some $i \in \{1,2,\ldots,11\}$, if \textit{$d$-$SC1S$-$RC$-Deletion}$(M_{i},d_{i})$ return Yes, then return Yes, else if all instances return No, then return No.  
\Else \vspace{-0.08 in}
\If {$Stage$-$2$($G_{M},d)$ returns Yes},  \hspace{0.2 in}// Presented in Algorithm \ref{alg7} \vspace{-0.08 in}
\State return Yes \vspace{-0.08 in}
\Else \vspace{-0.08 in}
\State return No. \vspace{-0.08 in}
\EndIf \vspace{-0.08 in}
\EndIf
\end{algorithmic}
\end{algorithm} $d$-$SC1S$-$RC$ on general matrices. Algorithm \ref{alg6} consists of two stages. Given a binary matrix $M$ and a non-negative integer $d$, stage $1$ of Algorithm \ref{alg6} destroys all forbidden submatrices from $X$ in $M$ using a simple search tree  algorithm. If $M$ contains a forbidden matrix from $X$, then Algorithm \ref{alg6} branches into at most $11$ subcases, since the number of rows and columns in the largest forbidden matrix of $X$ is $11$. In each subcase, delete one of the rows or columns of the forbidden submatrix found in $M$ and decrement the parameter $d$ by one. This process is continued in each subcase until its $d$ value becomes zero or until it does not contain any matrix from $X$ as its submatrix. If any of the leaf instances satisfy the $SC1P$, then this algorithm returns Yes. Otherwise, to each valid leaf instance $ \langle M_{i},d_{i} \rangle$ (leaf instance with $d_{i}>0$), where $1 \leq i \leq 11^{d}$, we apply stage $2$ of Algorithm \ref{alg6} to destroy $M_{\mathsmaller I_{k}}$ and $M^{T}_{\mathsmaller{I_{k}}}$, where $k \geq 1$. Stage $2$ of Algorithm \ref{alg6} considers the representing graph $G_{\mathsmaller M_{i}}$, of each valid leaf instance $M_{i}$, where $1 \leq i \leq 11^{d}$. The following observation holds true for the representing graph $G_{\mathsmaller M_{i}}$ of each valid leaf instance $M_{i}$.
\begin{Observation}\label{obsr}
Let $M$ be a matrix that does not contain any forbidden matrix in $X$. Then, the representing graph $G_{\mathsmaller M}$, of $M$ contains none of the graphs $G_{ \tiny M_{2_{1}}}, G_{ \tiny M_{2_{1}}^{\tiny T}}$,$G_{ \tiny M_{2_{2}}},G_{ \tiny M_{2_{2}}^{\tiny T}}$, $G_{ \tiny M_{3_{1}}},G_{ \tiny M_{3_{1}}^{\tiny T}}$, $G_{ \tiny M_{3_{2}}},G_{ \tiny M_{3_{2}}^{\tiny T}}$, $G_{ \tiny M_{3_{3}}},G_{ \tiny M_{3_{3}}^{\tiny T}}$ shown in Figure \ref{forbidden subgraph} as its induced subgraph.\end{Observation}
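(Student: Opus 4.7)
The plan is to exploit the natural correspondence between submatrices of $M$ and induced subgraphs of the representing graph $G_M$. By Definition~\ref{defn5}, the vertex set of $G_M$ is bipartitioned into row-vertices $R(M)$ and column-vertices $C(M)$, with edges recording the $1$-entries of $M$. Hence for any $R'\subseteq R(M)$ and $C'\subseteq C(M)$, the induced subgraph $G_M[R'\cup C']$ is precisely the representing graph of the submatrix of $M$ obtained by restricting to rows $R'$ and columns $C'$.

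From this I would derive the key auxiliary claim: for any binary matrix $N$, if $G_M$ contains $G_N$ as an induced subgraph, then $M$ contains either $N$ or $N^{T}$ as a submatrix in the sense of Section~\ref{prel}. Indeed, fix an induced-subgraph embedding of $G_N$ into $G_M$. Since both are bipartite, the embedding either sends the row-side of $G_N$ into $R(M)$ and the column-side into $C(M)$, or it swaps these two sides. In the first case, reading off the indicated submatrix of $M$ produces, up to row and column permutation, the matrix $N$; in the second case, the same reading instead produces $N^{T}$.

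Now the observation follows by contraposition. Suppose, toward a contradiction, that $G_M$ contained one of the listed graphs $G_Y$, where $Y\in\{M_{2_{1}},M_{2_{1}}^{T},M_{2_{2}},M_{2_{2}}^{T},M_{3_{1}},M_{3_{1}}^{T},M_{3_{2}},M_{3_{2}}^{T},M_{3_{3}},M_{3_{3}}^{T}\}$, as an induced subgraph. By the auxiliary claim, $M$ would then contain $Y$ or $Y^{T}$ as a submatrix. Since the set $X$ defined in Section~\ref{defn1} is closed under transposition, both $Y$ and $Y^{T}$ lie in $X$, contradicting the hypothesis that $M$ has no submatrix from $X$.

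The argument is essentially a bookkeeping unwinding of definitions. The one subtle point is that ``$G_M$ contains $G_N$ as an induced subgraph'' (read as unlabeled graphs) permits an embedding that swaps the two sides of the bipartition, which is precisely why the statement must list both $G_{M_{i_j}}$ and $G_{M_{i_j}^{T}}$ and why $X$ must itself be closed under transposition. Once this side-swap is handled by the auxiliary claim, no case analysis on the individual forbidden matrices is required, so I do not anticipate any real obstacle.
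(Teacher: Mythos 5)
Your argument is correct and matches what the paper intends: the paper states this as an Observation with no written proof, treating it as immediate from the correspondence between induced subgraphs of $G_{M}$ on row-vertex/column-vertex subsets and submatrices of $M$, together with the closure of $X$ under transposition --- exactly the two points you make explicit. One small caution: your auxiliary claim as phrased for an arbitrary matrix $N$ requires $G_{N}$ to be connected (a disconnected $G_{N}$ could have its components embedded with different side-orientations, yielding neither $N$ nor $N^{T}$ as a submatrix), but every representing graph listed in the Observation is connected, so your application of the claim is sound.
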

\begin{figure}[h]
 \begin{center} \includegraphics[width=3.0 in]{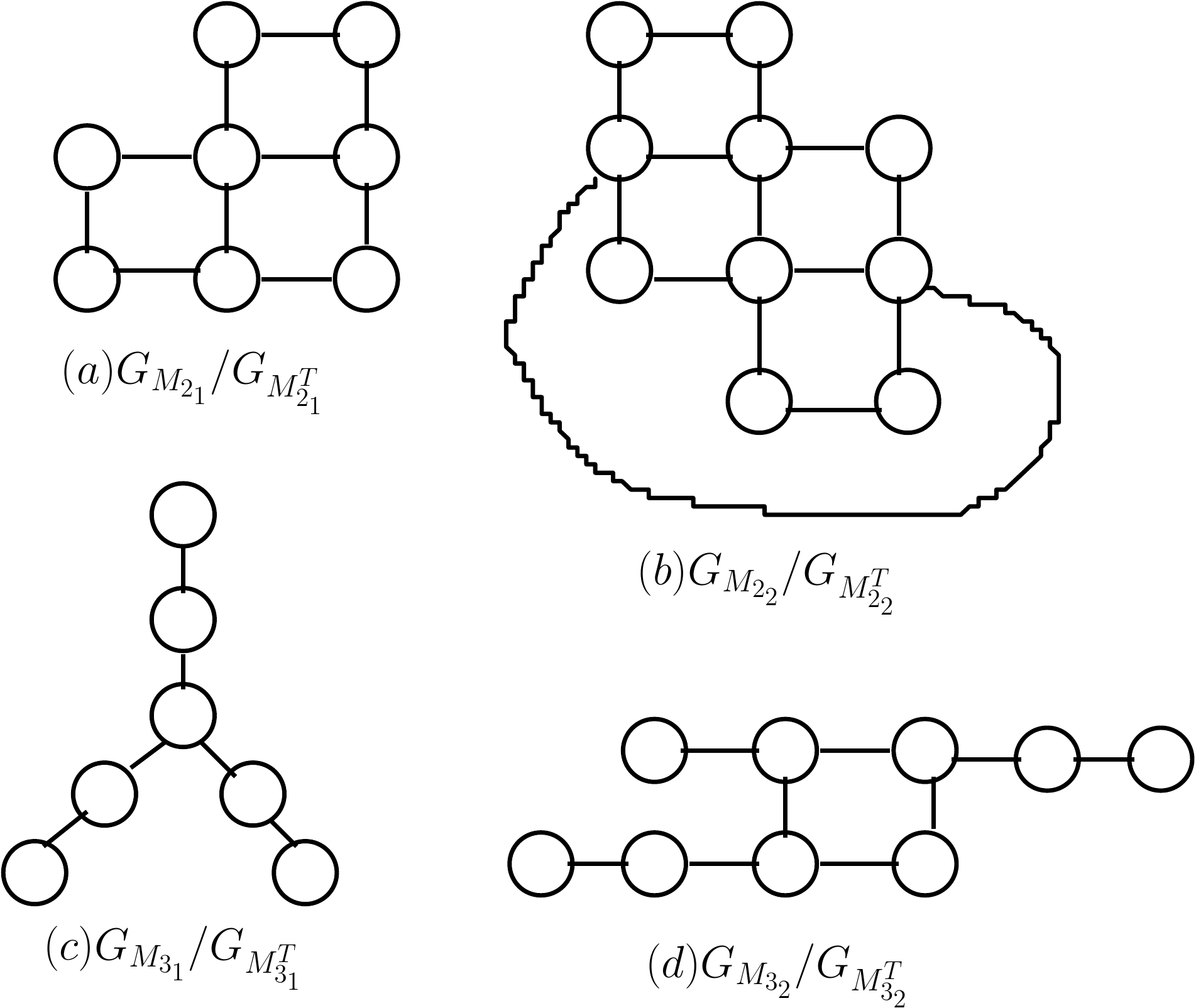}
 \includegraphics[width=1.0 in]{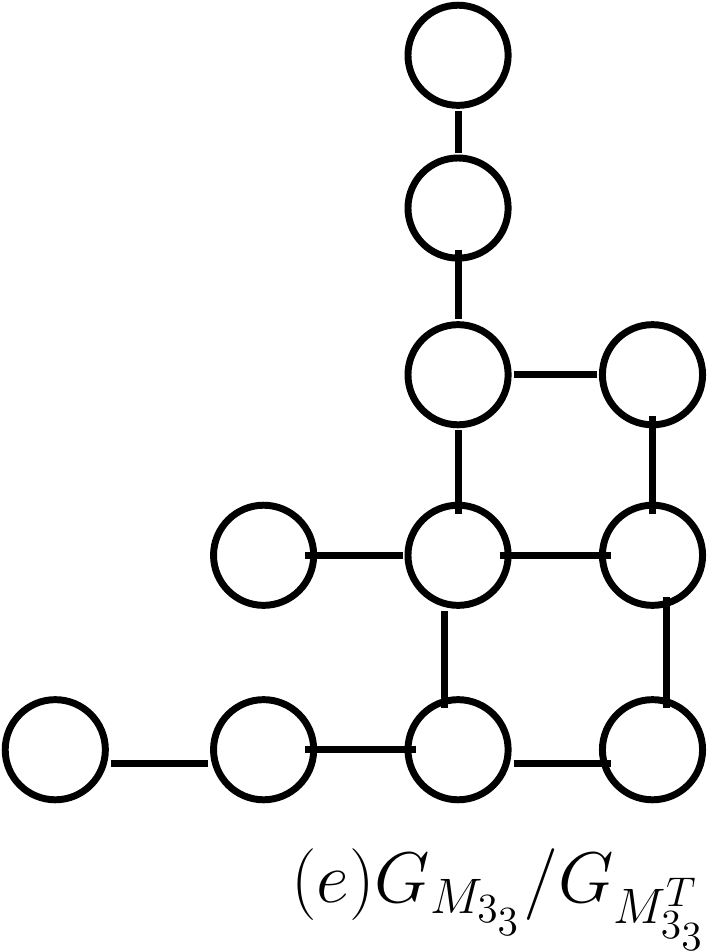} 
 \end{center}
  \caption{Representing graph of (a) $M_{2_{1}}$ ($M^{\tiny T}_{2_{1}}$) (b) $M_{2_{2}}$ ($M^{\tiny T}_{2_{2}}$) (c) $M_{3_{1}}$ ($M^{\tiny T}_{3_{1}}$) (d) $M_{3_{2}}$ ($M^{\tiny T}_{3_{2}}$)  \hspace{.95 in}(e) $M_{3_{3}}$ ($M^{\tiny T}_{3_{3}}$).}
  \label{forbidden subgraph}
\end{figure}
 \newpage It is easy to see that deleting a row or column in $M_{i}$ is equivalent to deleting a vertex in $G_{\mathsmaller M_{i}}$ and, destroying $M_{I_{k}}$ and $M^{T}_{\mathsmaller{I_{k}}}$, where $k \geq 1$ in $M_{i}$ is equivalent to destroying even chordless cycles of length greater than or equal to six in $G_{\mathsmaller M_{i}}$ (Using Lemma \ref{prop3}). This instance is same as that of \textsc{Chordal Vertex Deletion} instance (Section \ref{thmchrdl}) except the fact that $4$-cycles need to be preserved and the remaining chordless cycles are of length greater than or equal to six. Thus in stage $2$ (Algorithm \ref{alg7}), after preserving $4$-cycles in $G_{\mathsmaller M_{i}}$, we use \textit{chordal vertex deletion algorithm} (Theorem \ref{thmchrdl}) to destroy all chordless cycles of length greater than or equal to six. We apply the following reduction rules to $G_{\mathsmaller M_{i}}$ before performing chordal vertex deletion algoirthm on $G_{\mathsmaller M_{i}}$. \\
  \begin{algorithm}[h]
\caption{Algorithm  \textit{STAGE-2}$(G_{M},d)$\label{alg7}}
\begin{algorithmic}[1] 
 \Require An instance $\langle G_{M},d \rangle$, where $G_{M}=(V_{\mathsmaller 1}, V_{\mathsmaller 2}, E)$ is a bipartite graph with $\vert V_{\mathsmaller 1} \vert=m$, $\vert V_{\mathsmaller 2} \vert=n$, such that $G_{M}$ does not contain any of the graphs shown in Figure \ref{forbidden subgraph} as its induced subgraph, and $d \geq 0$. \vspace{-0.1 in} \[  \hspace{-0.23 in}\textbf{Output} = \begin{dcases*} Yes,  & if there exists a set $V^{'} \subseteq V_{\mathsmaller 1} \cup V_{\mathsmaller 2}$, with $\vert V^{'} \vert \leq d$, such that $G_{M} \backslash V'$  \\    
        &  is chordal bipartite.\\
   No,     & otherwise
\end{dcases*}
\]
\State \textbf{if} {$G_{M}$ is chordal bipartite} \textbf{then} return Yes. \vspace{-0.05 in}
 \State \textbf{if} {$d < 0$} \textbf{then} return No. \vspace{-0.05 in}
 \State \textbf{if} {$G_{M}$ contains a chordless cycle $C'$ of length six, eight or ten} \textbf{then}, // Rule \hyperref[presv]{1} \newline \vspace{-0.05 in}
  \indent Branch into at most $10$ instances $I_{i}=\langle G_{M_{i}}, d_{i}\rangle$ where $i \in \{1,2,\ldots,10\}$ \vspace{-0.03 in} \newline \vspace{-0.05 in} 
  \indent  $G_{M_{i}} = G_{M} \backslash v_{i}$, where $v_{i}$ is a vertex in $C'$ \newline   \vspace{-0.03 in}
 \indent Update $d_{i} = d-1$  \hspace{1.0 in} // Decrement parameter by $1$ \vspace{-0.05 in}  \newline \vspace{-0.03 in} 
\noindent For some $i \in \{1,2,\ldots,10\}$, if \textit{STAGE-2}$(G_{M_{i}},d_{i})$ returns Yes, then return Yes, else if all instances return No, then return No. \vspace{-0.08 in}
\State \textbf{else}\vspace{-0.08 in}
\State \indent \textbf{if} {there exists a $4$-cycle $C'$ in $G_{M}$} \textbf{then},  \vspace{-0.08 in}
\State \indent \indent $G'_{M} \leftarrow $  graph obtained from $G_{M}$ after reducing $C'$ using Rule \hyperref[presv]{2} \vspace{-0.08 in}
\State \indent \indent \textbf{if }\textit{STAGE-2}($G'_{M}$,$d$) returns Yes, \textbf{then} return Yes, otherwise return No\vspace{-0.08 in}
\State \indent   \textbf{end if}\vspace{-0.08 in}
\State \indent   Apply Rule \hyperref[presv] {3} to $G_{M}$.  // $Degree_{\leq 1}$ rule\vspace{-0.08 in}
\State \indent  \textbf{if} \hyperref[thmchrdl]{$CHORDAL\_VERTEX\_DELETION$($G_{M}$,$d$)} returns Yes, \textbf{then} \vspace{-0.05 in}
 \State \indent  \indent  return Yes\vspace{-0.1 in}
\State \indent  \textbf{ else}\vspace{-0.1 in}
 \State  \indent  \indent  return No\vspace{-0.1 in}
\State \indent  \textbf{end if}\vspace{-0.08 in}
\State \textbf{end if}\vspace{-0.05 in}
\end{algorithmic}
\end{algorithm}
\noindent \noindent \noindent \textbf{Reduction Rules \newline} \label{presv} 
In order to avoid the destruction of $4$-cycles in $G_{\mathsmaller M_{i}}$ by chordal vertex deletion algorithm, we apply the following reduction rules.

\noindent \textbf{Rule $1$: (Killing shorter chordless cycles):}\label{redtnrul1} If graph $G_{\mathsmaller M_{i}}$ contains a chordless cycle of length six, eight or ten, then branch in to at most ten subproblems, deleting in each branch one of the vertices of the chordless cycle found.

\noindent Recursively apply Rule $1$ to $G_{\mathsmaller M_{i}}$, until all chordless cycles of length six, eight and ten are destroyed from it.
 
\noindent \textbf{Rule $2$: ($4$-cycle preserving rule):}\label{redtnrul1}
If graph $G_{M_{i}}$ contains a $4$-cycle, say $(x_{1},y_{1,}x_{2},y_{2})$ as an induced subgraph, modify $G_{M_{i}}$ as follows: Introduce two new vertices and label them as $x_{1}x_{2}$ and $y_{1}y_{2}$. Make all edges incident on ($x_{1}$ or $x_{2}$) /($y_{1}$ or $y_{2}$) to incident on $x_{1}x_{2}$/$y_{1}y_{2}$ and add an edge between $x_{1}x_{2}$ and $y_{1}y_{2}$. Delete the vertices $x_{1}$, $x_{2}$, $y_{1}$ and $y_{2}$ from $G_{M_{i}}$. This is explained in Figure \ref{fig8}.\\
\noindent Each time after applying Rule $2$, call Rule $1$. The main purpose of calling Rule $1$ after preserving every $4$-cycle is to avoid the longer chordless cycle $C$ (chordless cycle having length greater than or equal to twelve) getting totally disappeared from $G_{\mathsmaller M_{i}}$, when $C$ intersects with many $4$-cycles. Recursively apply Rule $2$ to $G_{M_{i}}$, until $G_{M_{i}}$ contains no $4$-cycles.

\noindent \textbf{Rule $3$: ($Degree_{\leq 1}$ rule):} Delete all vertices having degree less than or equal to one in $G_{M_{i}}$.

\noindent Rule $3$ is safe, since vertices having degree less than or equal to one do not contribute to chordless cycles of length greater than four.
\begin{figure}[h] 
 \begin{center}
 \includegraphics[width=5.0 in]{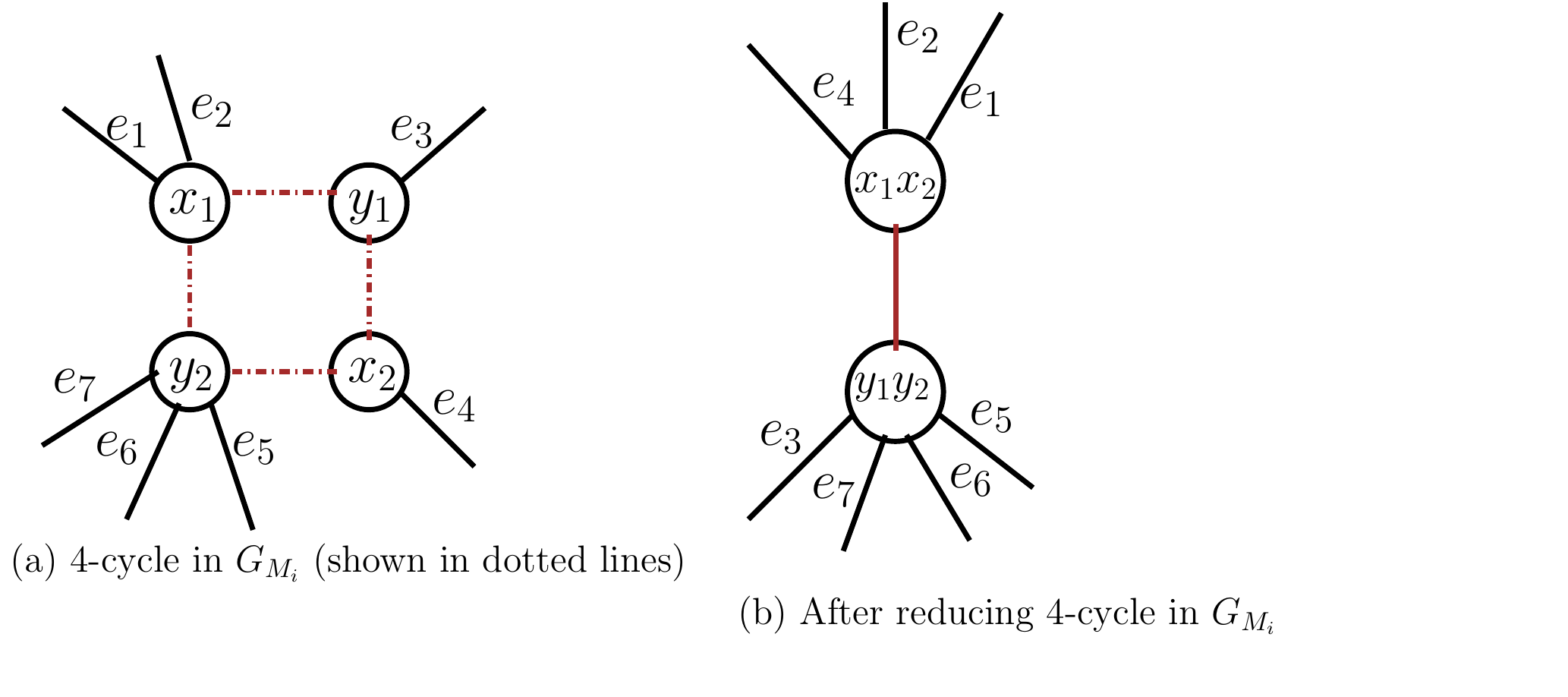} 
 \caption{Illustration of preserving a $4$-cycle in $G_{M_{i}}$}
 \label{fig8}
  \end{center}
   \end{figure}\\
Next, we prove that the process of preserving $4$-cycles in $G_{\mathsmaller M_{i}}$ do not introduce already destroyed forbidden matrices from $X$ in $M_{i}$.
\begin{figure}[t] 
 \begin{center}
 \includegraphics[width=5.0 in]{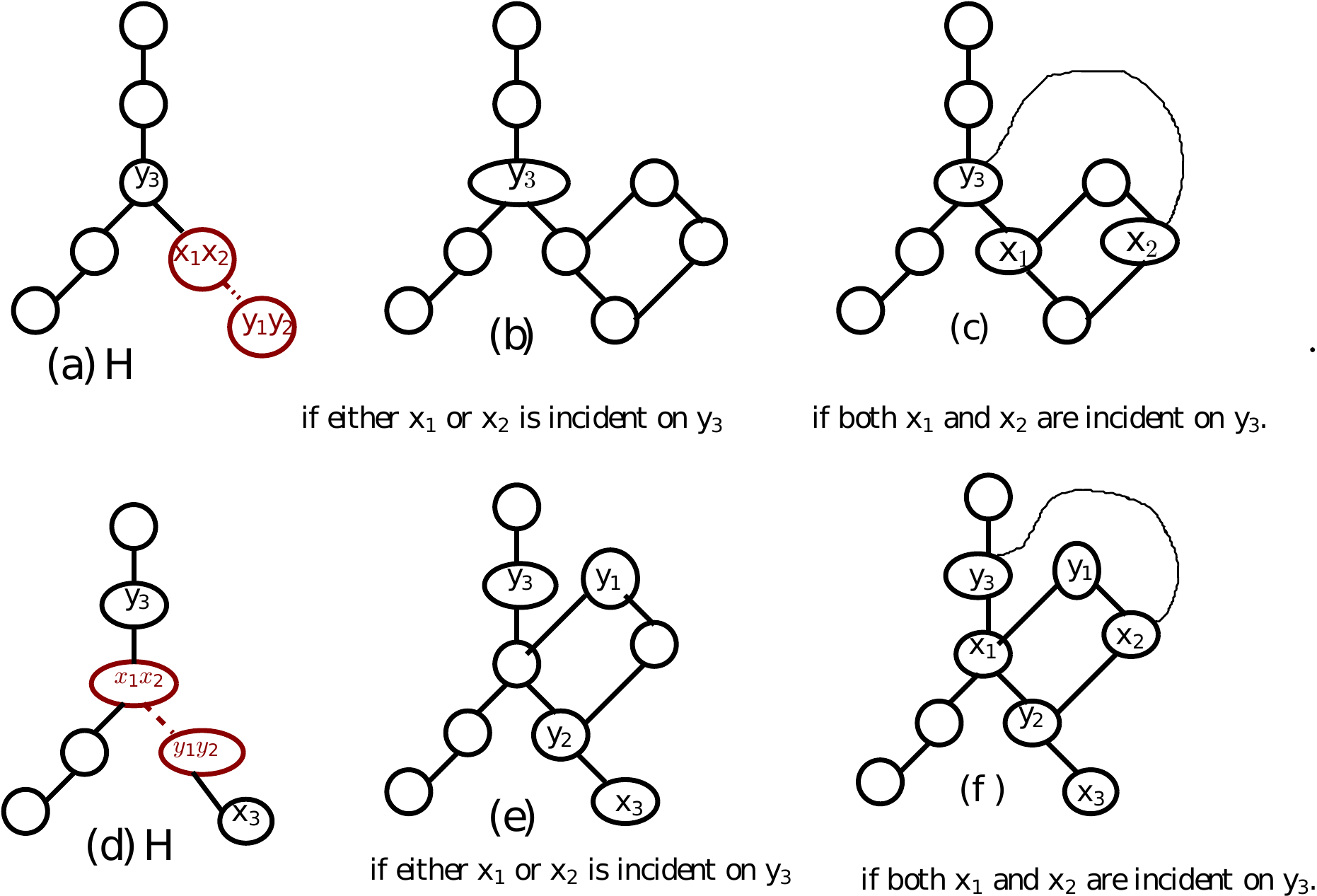} 
 \caption{Illustration of different cases when $G'_{\mathsmaller M_{i}}$ has an induced subgraph isomorphic to $G_{ \tiny M_{3_{1}}}/G_{ \tiny M_{3_{1}}^{\tiny T}}$}
 \label{fig4}
  \end{center}
   \end{figure}
\begin{claim} \label{claim1}
Let $G_{M_{i}}$ be the representing graph of a valid leaf instance $ \langle M_{i},d_{i} \rangle$, obtained after Stage $1$ of Algorithm \ref{alg6} and, let $G'_{M_{i}}$ be the graph obtained from $G_{M_{i}}$, after applying Rules \hyperref[presv]{1, {2}} and \hyperref[presv]{3}. Then, $G'_{M_{i}}$ contains none of the graphs $G_{ \tiny M_{2_{1}}}, G_{ \tiny M_{2_{1}}^{\tiny T}}$,$G_{ \tiny M_{2_{2}}},G_{ \tiny M_{2_{2}}^{\tiny T}}$, $G_{ \tiny M_{3_{1}}},G_{ \tiny M_{3_{1}}^{\tiny T}}$, $G_{ \tiny M_{3_{2}}},G_{ \tiny M_{3_{2}}^{\tiny T}}$, $G_{ \tiny M_{3_{3}}},G_{ \tiny M_{3_{3}}^{\tiny T}}$ as shown in Figure \ref{forbidden subgraph} as an induced subgraph.
\end{claim}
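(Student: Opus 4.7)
The plan is to argue by induction on the number of rule applications that the current graph never contains any member of $\mathcal{F} := \{G_{M_{2_1}}, G_{M_{2_1}^T}, G_{M_{2_2}}, G_{M_{2_2}^T}, G_{M_{3_1}}, G_{M_{3_1}^T}, G_{M_{3_2}}, G_{M_{3_2}^T}, G_{M_{3_3}}, G_{M_{3_3}^T}\}$ as an induced subgraph. The base case is Observation \ref{obsr}: the input to Stage $2$ is the representing graph $G_{M_i}$ of a leaf matrix $M_i$ that is $X$-free, so $G_{M_i}$ avoids every member of $\mathcal{F}$. Rules \hyperref[presv]{1} and \hyperref[presv]{3} are immediate for the inductive step, since they only delete vertices: if $G' = G \setminus S$, then for any $U \subseteq V(G) \setminus S$ the induced subgraphs $G[U]$ and $G'[U]$ are identical, so no new forbidden subgraph is created.

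The non-trivial step is Rule \hyperref[presv]{2}, which merges $x_1, x_2$ into $\hat{x}$ and $y_1, y_2$ into $\hat{y}$, adding the edge $\hat{x}\hat{y}$. Assume for contradiction that $G'$ contains some $H \in \mathcal{F}$ on vertex set $U$. If $\{\hat{x}, \hat{y}\} \cap U = \emptyset$, then $G'[U] = G[U]$ and we contradict the inductive hypothesis. Otherwise, I lift $U$ back to $G$ using the observation that, for every $w \in V(G) \setminus \{x_1, x_2, y_1, y_2\}$, the adjacency $w\hat{x} \in E(G')$ holds if and only if $wx_1 \in E(G)$ or $wx_2 \in E(G)$, and analogously for $\hat{y}$. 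Hence each merged vertex $\hat{x}$ (resp. $\hat{y}$) has at most two candidate pre-images $x_1, x_2$ (resp. $y_1, y_2$). I would examine each of the (at most four) candidate replacements: if any of them induces $H$ in $G$, we are already done; otherwise the adjacency pattern of some $w \in U$ splits between $x_1$ and $x_2$ (or $y_1$ and $y_2$), and I would consider the larger subgraph $G\bigl[(U \setminus \{\hat{x},\hat{y}\}) \cup \{x_1,x_2,y_1,y_2\}\bigr]$, which includes the full $4$-cycle, and argue that it already contains some forbidden member of $\mathcal{F}$ (or indeed an induced $M_{I_k}$-graph that witnesses a larger forbidden configuration), contradicting the inductive hypothesis.

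The main obstacle is the case analysis: for each of the ten choices of $H \in \mathcal{F}$, and each placement of $\hat{x}, \hat{y}$ into the two sides of $H$'s bipartition, I must produce a pre-image in $G$ lying in $\mathcal{F}$. The analysis is finite since each $H$ has at most $11$ vertices and $\hat{x},\hat{y}$ must occupy opposite sides of the bipartition of $H$, but it is tedious; Figure \ref{fig4} sketches the argument for $H \in \{G_{M_{3_1}}, G_{M_{3_1}^T}\}$, and I would carry out the analogous case analyses for the remaining nine subgraphs. Two structural observations keep the bookkeeping manageable: (i) since the $4$-cycle $(x_1, y_1, x_2, y_2)$ is induced in $G$, no non-bipartite adjacencies among $\{x_1,x_2,y_1,y_2\}$ arise; and (ii) any split in the adjacency pattern of $w$ between $x_1$ and $x_2$ produces an alternating $0/1$ pattern in the corresponding rows of $M_i$, which, combined with the $4$-cycle, is exactly the template that produces forbidden submatrices of types $M_{2_1}, M_{2_2}, M_{3_1}, M_{3_2}, M_{3_3}$ (or their transposes). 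Invoking the inductive hypothesis on the enlarged set then yields the desired contradiction.
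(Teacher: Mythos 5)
Your skeleton (induction over rule applications; Rules 1 and 3 are harmless because vertex deletion cannot create new induced subgraphs; Rule 2 is the crux) is sound, but the proposal both misses the observation that makes the paper's proof short and leaves the step that actually carries the argument unexecuted. The paper splits the ten forbidden graphs into two groups. Eight of them --- $G_{M_{2_{1}}}, G_{M_{2_{2}}}, G_{M_{3_{2}}}, G_{M_{3_{3}}}$ and their transposes' graphs --- contain an induced $C_{4}$. Since Rule 2 is applied recursively until no $4$-cycles remain, the final graph $G'_{M_{i}}$ is $C_{4}$-free, and a graph with no induced $C_{4}$ cannot contain any of those eight as an induced subgraph. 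No lifting and no case analysis is needed for them. Only $G_{M_{3_{1}}}$ and $G_{M_{3_{1}}^{T}}$ (which are trees, hence $C_{4}$-free) require the pre-image argument you describe, and that is the single case the paper works out (Figure \ref{fig4}). By setting up your induction to hold after \emph{every} intermediate rule application, you forfeit this shortcut: intermediate graphs may still contain $4$-cycles, so you must run the lifting argument for all ten graphs, a strictly harder task than the claim demands (the claim concerns only the final graph), and one whose inductive step you have not verified to be true.

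That is the concrete gap: the entire burden of your proof rests on the deferred ten-way case analysis and on your unproven structural observation (ii), which asserts that any split of a neighbour's adjacency between $x_{1}$ and $x_{2}$, combined with the $4$-cycle, yields a forbidden submatrix. This is exactly the statement that needs proof, not a bookkeeping device; as written, nothing in the proposal establishes it for even one of the graphs. To repair the argument with minimal work, first observe that $G'_{M_{i}}$ has no induced $C_{4}$ and discard the eight $C_{4}$-containing graphs outright, then carry out the lifting argument only for $G_{M_{3_{1}}}$/$G_{M_{3_{1}}^{T}}$: an edge $(x_{1}x_{2},\, y_{1}y_{2})$ of the alleged induced copy arises from a contracted $4$-cycle, and each way of distributing the three pendant paths of the spider among the pre-images $x_{1},x_{2},y_{1},y_{2}$ already places an induced $G_{M_{3_{1}}}$ or $G_{M_{3_{1}}^{T}}$ in $G_{M_{i}}$, contradicting Observation \ref{obsr}.
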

\begin{proof}
The graphs $G_{ \tiny M_{2_{1}}}, G_{ \tiny M_{2_{1}}^{\tiny T}}$,$G_{ \tiny M_{2_{2}}},G_{ \tiny M_{2_{2}}^{\tiny T}}$, $G_{ \tiny M_{3_{2}}},G_{ \tiny M_{3_{2}}^{\tiny T}}$, $G_{ \tiny M_{3_{3}}},G_{ \tiny M_{3_{3}}^{\tiny T}}$ as shown in Figure \ref{forbidden subgraph}(a), (b), (d) and (e) contain chordless cycles of length $4$. Since $4$-cycle preserving rule reduces all chordless cycles of length exactly four from $G_{M_{i}}$, the resultant graph $G'_{M_{i}}$ will not contain any of the graphs $G_{ \tiny M_{2_{1}}}, G_{ \tiny M_{2_{1}}^{\tiny T}}$,$G_{ \tiny M_{2_{2}}},G_{ \tiny M_{2_{2}}^{\tiny T}}$, $G_{ \tiny M_{3_{2}}},G_{ \tiny M_{3_{2}}^{\tiny T}}$, $G_{ \tiny M_{3_{3}}},G_{ \tiny M_{3_{3}}^{\tiny T}}$ as an induced subgraph. Next, we prove that $G'_{M_{i}}$ will not contain any of the graphs $G_{ \tiny M_{3_{1}}}$, $G_{ \tiny M_{3_{1}}^{\tiny T}}$ shown in Figure \ref{forbidden subgraph}(c) as its induced subgraph. For a contradiction, assume that $G'_{M_{i}}$ contains an induced subgraph $H'$, isomorphic to the graph $G_{ \tiny M_{3_{1}}}$ or $G_{ \tiny M_{3_{1}}^{\tiny T}}$. Then, at least one edge, say ($x_{\mathsmaller 1}x_{\mathsmaller 2},y_{\mathsmaller 1}y_{\mathsmaller 2}$) in $H'$ is obtained by reducing a $4$-cycle in $G_{M_{i}}$. Figure \ref{fig4} shows two such cases. The same observation also holds for other edges in $H'$. In each case, it turns out that the original graph $G_{M_{i}}$ contains the graph $G_{ \tiny M_{3_{1}}}$ or $G_{ \tiny M_{3_{1}}^{\tiny T}}$ as its induced subgraph, which is a contradiction (From Observation \ref{obsr}). 
\end{proof}
\begin{figure}[h]
 \begin{center} \includegraphics[width=5.5 in]{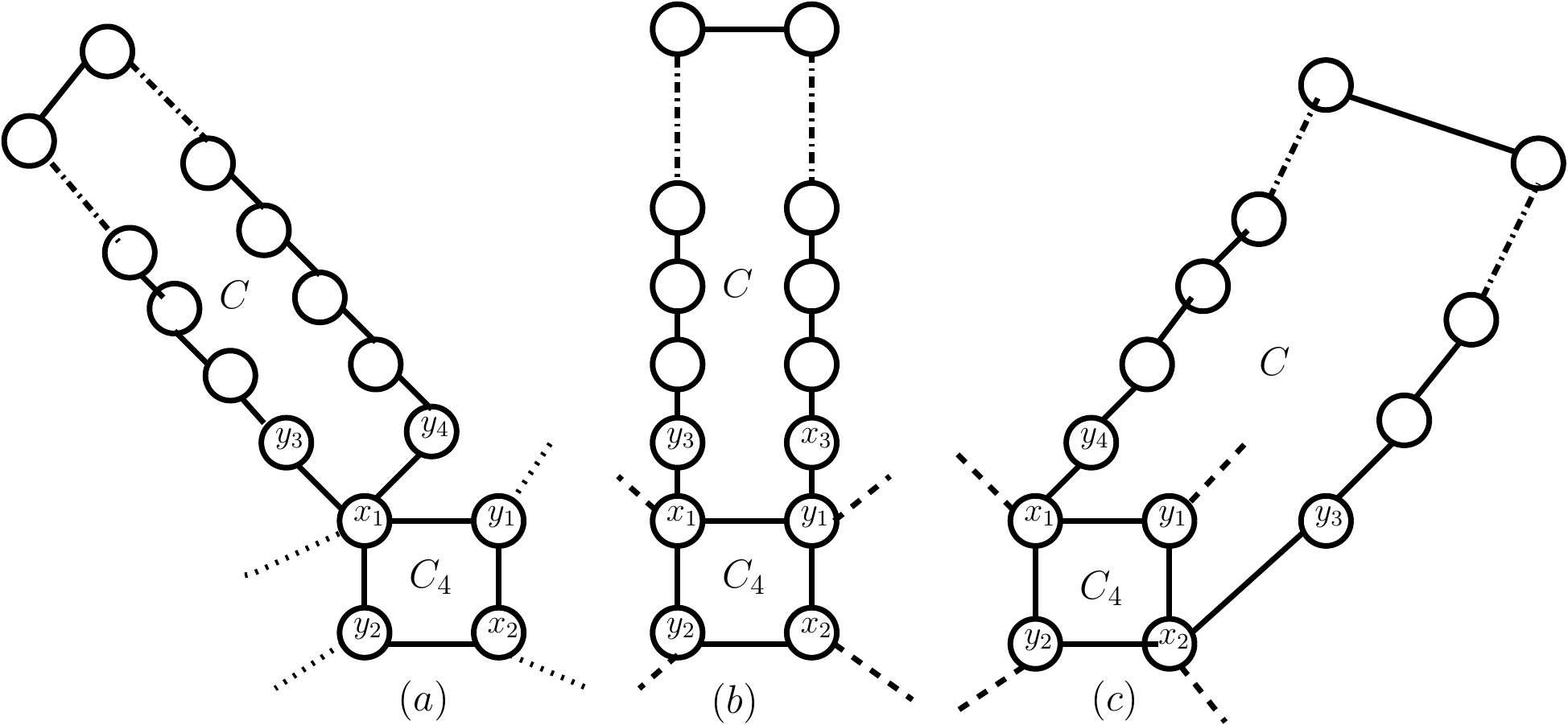}
 \end{center}
  \caption{(a) $C$ contains exactly one vertex from $C_{4}$ (b) $C$ contains exactly two vertices from $C_{4}$ (c) $C$ contains exactly three vertices from $C_{4}$ }
  \label{case 1}
\end{figure}
\noindent Next, we prove that preserving $4$-cycles in $G_{M_{i}}$ using Rule $2$ preserves all existing chordless cycles  of length greater than or equal to $12$ and do not introduce new chordless cycles in $G_{M_{i}}$.
  \begin{figure}[h]
 \begin{center} \includegraphics[width=4.5 in]{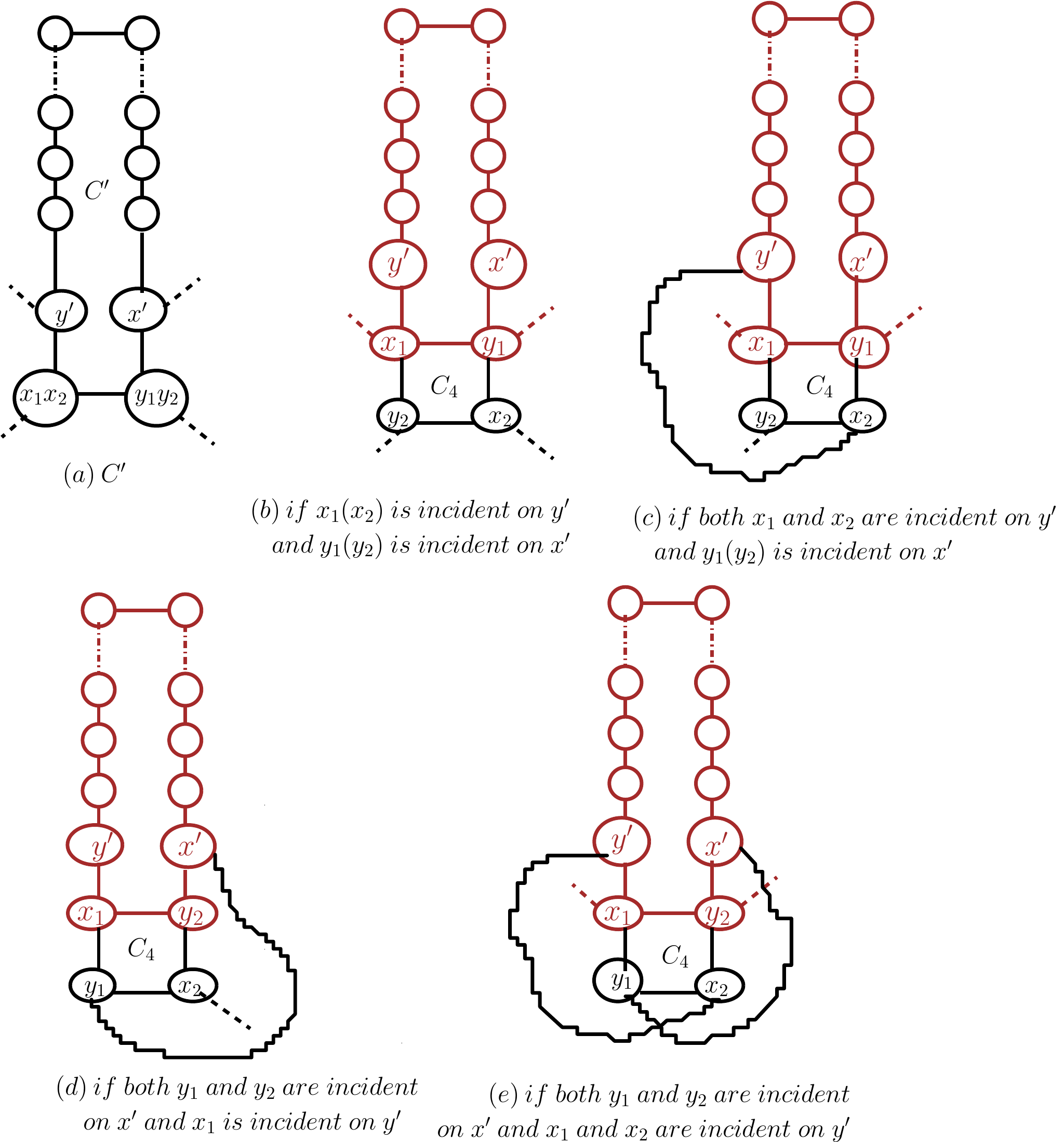}
 \end{center}
  \caption{Illustration of different cases when $x_{1}$, $y_{1}$, $x_{2}$ and $y_{2}$ of $C_{4}$ are incident on $x'$ and $y'$ in $G_{M_{i}}$}
  \label{case2}
\end{figure}
\begin{claim} \label{claim2} 
Let $G_{M_{i}}$ be the representing graph of a valid leaf instance $ \langle M_{i},d_{i} \rangle$, obtained after Stage $1$ of Algorithm \ref{alg6} and, let $C=(x_{1},y_{1},\ldots, x_{n},y_{n})$ be a chordless cycle of length $2n$, where $n \geq 6$ in $G_{M_{i}}$. Let $C_{4}= (x_{1},y_{1},x_{2},y_{2})$ be a chordless cycle of length exactly four in $G_{M_{i}}$. Then, reducing $C_{4}$ using Rule $2$ in $G_{M_{i}}$ preserves $C$ and do not introduce new chordless cycles. It only reduces the length of $C$ by $0, 2, \ldots, 2n-10$.
\end{claim}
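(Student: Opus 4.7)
My plan is a case analysis on $k := |V(C_4) \cap V(C)|$. First I observe that $k \leq 3$: if all four vertices of $C_4$ lay in $V(C)$, chordlessness of $C$ would force the four $C_4$-edges to be $C$-edges, making $C$ itself a $4$-cycle, contradicting $n \geq 6$.

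The cases $k \in \{0,1\}$ and $k = 2$ with the two shared vertices on opposite sides of the bipartition are routine: in $k = 0$ the merge touches no vertex of $C$; in $k = 1$ (say $x_1 \in V(C)$) the vertex $x_1$ is relabelled $X$ with the same $V(C)$-neighbours, since any other $V(C)$-neighbour of $x_2$ would be a chord of $C$; in $k = 2$/different-side, chordlessness of $C$ forces the two shared vertices to be adjacent in $C$, and the joining $C$-edge is merely relabelled as the new $(X,Y)$-edge. All three subcases yield reduction $0$. The case $k = 3$ forces the three shared vertices to form a sub-path of $C$, which collapses to a single $X$-vertex (with $Y$ a pendant) after merging; reduction $= 2$.

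The substantive case is $k = 2$ with both shared vertices on the same side, say $x_1, x_2 \in V(C)$ at even distance $2d$ in $C$, WLOG $1 \leq d \leq \lfloor n/2 \rfloor$. Merging $x_1, x_2$ into $X$ collapses $C$ at $X$ into two sub-cycles of lengths $2d$ and $2n - 2d$; I identify the longer as the preserved $C$, giving reduction $2\min(d, n-d)$. Each sub-cycle is chordless in the reduced graph: chordlessness of $C$ rules out $V(C)$-internal chords, and any chord through $X$ would correspond to a $V(C)$-adjacency of $x_2$ distinct from $y_1, y_2$, again ruled out. To bound the reduction I invoke the Rule~$1$ precondition that no chordless cycle of length $6$, $8$, $10$ remains in $G_{M_i}$. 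Applied to the cycle obtained by closing the shorter sub-path of $C$ via $y_1$ (which has length $2d+2$, and whose chordlessness I will argue from that of $C$, of $C_4$, and the Stage-$1$ absence of the forbidden induced subgraphs of Observation~\ref{obsr}), this forces $2d+2 \in \{4\} \cup \{\geq 12\}$, and symmetrically $2(n-d)+2 \in \{4\} \cup \{\geq 12\}$. Hence $\min(d, n-d) \in \{1\} \cup \{\geq 5\}$, so the reduction lies in $\{2\} \cup \{10, 12, \ldots, 2\lfloor n/2 \rfloor\} \subseteq \{0, 2, \ldots, 2n-10\}$.

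For the ``no new chordless cycle'' assertion, every chordless cycle of the reduced graph descends from a chordless cycle of $G_{M_i}$ via the same vertex identifications, with length shortened by $0$ or $2$; hence no genuinely new chordless cycle appears. The main technical obstacle will be the chordlessness verification of the closure cycles in the same-partition $k = 2$ sub-case: potential chords via residual edges of $y_1$ or $y_2$ into the interior of the $C$-sub-paths must be excluded using only chordlessness of $C$, of $C_4$, the bipartite structure, and the absence of the small forbidden induced subgraphs from Figure~\ref{forbidden subgraph}.
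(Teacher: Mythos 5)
Your overall strategy is the same as the paper's: a case analysis on $|V(C_4)\cap V(C)|$ for the preservation/length claim, followed by an argument that every chordless cycle of the reduced graph lifts back to one of $G_{M_i}$. For the cases the paper actually writes out ($k=1$; $k=2$ with the shared vertices adjacent; $k=3$) your conclusions (reductions $0$, $0$, $2$) agree with the paper's cases (a), (b), (c). Where you genuinely go beyond the paper is the $k=2$ same-side case ($x_1,x_2\in V(C)$, $y_1,y_2\notin V(C)$): the paper's proof silently omits it, even though it is the only case in which the cycle is pinched at the merged vertex and the length can drop by more than $2$ --- i.e., the only case that the ``$0,2,\ldots,2n-10$'' range in the claim statement can possibly be about. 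Your idea of invoking the Rule~1 precondition (no chordless $6$-, $8$-, $10$-cycles at the moment Rule~2 fires) to force $d\in\{1\}\cup\{\geq 5\}$ and hence bound the reduction is the right mechanism, and your observation that the paper's ``no new chordless cycles'' argument should not be restricted to cycles using the new edge $(x_1x_2,y_1y_2)$ (a new cycle can pass through a merged vertex without using that edge) is also a correct refinement of what the paper does.

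That said, the step you flag as ``the main technical obstacle'' is a genuine gap, not a formality. The Rule~1 precondition only applies to \emph{chordless} cycles, and the closure cycle of length $2d+2$ (the short arc of $C$ closed through $y_1$) is chordless only if $y_1$ has no neighbours in the interior of that arc. Nothing in the chordlessness of $C$ or of $C_4$ rules this out, since $y_1\notin V(C)$, so its edges into $C$ are not chords of $C$. If, say, $d=3$ and $y_1$ is adjacent to both internal $x$-side vertices of the arc, the closure cycle decomposes entirely into induced $4$-cycles, Rule~1 never fires, and your bound $d\in\{1\}\cup\{\geq 5\}$ fails --- the pinch would then produce a reduction of $6$, which for $n=6$ exceeds $2n-10=2$. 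To close this you would have to show that such a high-degree $y_1$ forces an induced $G_{M_{3_1}}/G_{M_{3_1}^{T}}$ (or another graph of Figure \ref{forbidden subgraph}) in $G_{M_i}$, contradicting Observation \ref{obsr}; that is plausible but requires a careful argument about which second neighbours of $x_1$, $x_2$ and the arc-internal vertices can be chosen, and you have not supplied it. The same issue resurfaces in your ``descends from'' claim: the short pinched sub-cycle of length $2d$ is chordless in the reduced graph but corresponds to a possibly non-chordless closure cycle in $G_{M_i}$, so without the bound on $d$ it really could be a new chordless cycle of forbidden intermediate length. In fairness, the paper's own proof never confronts this case at all, so your attempt is more complete than the published argument --- but as written it does not yet establish the stated length bound.
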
 
\begin{proof}
Firstly, we show that preserving $4$-cycles in $G_{M_{i}}$, preserves an existing chordless cycle $C$ of length $2n$, where $n \geq 6$.   

\noindent Case $(a)$ : $C$ includes exactly one of the vertices of $C_{4}$, say $x_{1}$ as shown in Figure \ref{case 1}(a). After reducing $C_{4}$, $y_{3}$ and $y_{4}$ will be incident on the newly created vertex $x_{1}x_{2}$. In this case, the length of $C$ is reduced by zero. \newline 
Case $(b)$ : $C$ includes exactly two vertices from $C_{4}$, say $x_{1}$ and $y_{1}$ as shown in Figure \ref{case 1}(b). After reducing $C_{4}$, $y_{3}$ and $x_{3}$ will be incident on the newly created vertices $x_{1}x_{2}$ and $y_{1}y_{2}$ respectively. In this case, the length of $C$ is reduced by zero. \newline 
Case $(c)$ : $C$ includes exactly three vertices from $C_{4}$, say $x_{1}$, $y_{1}$ and $x_{2}$ as shown in Figure \ref{case 1}(c). After reducing $C_{4}$, $y_{3}$ and $y_{4}$ will be incident on the newly created vertex $x_{1}x_{2}$. In this case, the length of $C$ is reduced by two. \newline 
Next, we show that reducing $C_{4}$ using Rule $2$ do not create a new chordless cycle of length greater than or equal to $12$ in $G_{M_{i}}$. For a contradiction, assume that $C'$ be a chordless cycle of length greater than or equal to $12$ in $G'_{M_{i}}$, that is formed as a result of reducing $C_{4}$ using Rule $2$. Then, at least one edge in $C'$ is obtained by reducing $C_{4}$. Let $(x_{1}x_{2},y_{1}y_{2})$ be that edge and without loss of generality assume that $x_{1}x_{2}$ and $y_{1}y_{2}$ are incident on $y'$ and $x'$ respectively in $C'$ as shown in Figure \ref{case2}(a). Then, there should be an induced path of length greater than or equal to $9$ from  $y'$ to $x'$. Figure \ref{case2}(b), (c), (d), and (e) shows the different cases, when $x_{1}$, $y_{1}$, $x_{2}$ and $y_{2}$ are incident on $x'$ and $y'$ in $C'$. In each case, it is easy to see that the induced path from $y'$ to $x'$ along with an edge in $C_{4}$ forms a chordless cycle of length greater than or equal to $12$ in the original graph $G_{M_{i}}$. This implies that $C'$ is not a newly created cycle in $G'_{M_{i}}$.
\end{proof}
\begin{lemma}
Rule $2$ is safe.
\end{lemma}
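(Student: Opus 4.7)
The claim is that the instance $\langle G_{M_i}, d \rangle$ of STAGE-2 is a Yes-instance if and only if the reduced instance $\langle G'_{M_i}, d \rangle$ is, where $G'_{M_i}$ is obtained from $G_{M_i}$ by collapsing the induced 4-cycle $(x_1, y_1, x_2, y_2)$ via Rule 2. The plan is to exhibit a size-preserving correspondence of solutions in the two directions, invoking Claim 1 (the reduced graph avoids the small forbidden subgraphs) and Claim 2 (chordless cycles of length at least twelve are preserved modulo controlled shortening, with no new chordless cycles of length $\geq 12$ introduced), together with the fact that Rule 1 has already eliminated all chordless cycles of lengths $6$, $8$, and $10$ before Rule 2 is ever invoked.

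For the forward direction, starting from a deletion set $V' \subseteq V(G_{M_i})$ with $|V'| \leq d$ such that $G_{M_i} \setminus V'$ is chordal bipartite, I would define $V'' \subseteq V(G'_{M_i})$ by the projection $x_1, x_2 \mapsto x_1x_2$, $y_1, y_2 \mapsto y_1y_2$, and the identity on all other vertices. Since up to two pairs may merge, $|V''| \leq |V'| \leq d$. To show that $G'_{M_i} \setminus V''$ is chordal bipartite, I would assume a surviving chordless cycle $C'$ of length $\geq 6$ and lift it back to $G_{M_i}$ by replacing each occurrence of a merged vertex with a suitable preimage whose adjacencies match the neighbours of $C'$; the resulting cycle has length at least $|C'|$ and, by Claim 1 combined with the un-merging analysis used to prove Claim 2 in reverse, remains chordless in $G_{M_i}$ while avoiding $V'$, contradicting the assumption.

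For the backward direction, from $V'' \subseteq V(G'_{M_i})$ with $|V''| \leq d$ such that $G'_{M_i} \setminus V''$ is chordal bipartite, I would set $V' = (V'' \setminus \{x_1x_2, y_1y_2\}) \cup \{x_1 \mid x_1x_2 \in V''\} \cup \{y_1 \mid y_1y_2 \in V''\}$, so that $|V'| \leq |V''| \leq d$. Suppose a chordless cycle $C$ of length $\geq 6$ survives in $G_{M_i} \setminus V'$. Since Rule 1 has already destroyed all chordless cycles of lengths $6, 8, 10$ in $G_{M_i}$, $C$ must have length $\geq 12$, and Claim 2 guarantees that the projection of $C$ into $G'_{M_i}$ is a chordless cycle $C'$ of length $\geq 10$ (with no new chordless cycle created); thus $C'$ must intersect $V''$. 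A case analysis on whether this intersection vertex is a non-merged vertex, $x_1x_2$, or $y_1y_2$ transfers the intersection back to $V' \cap C$, yielding the required contradiction.

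The main obstacle will be the backward direction when $C$ passes through $x_2$ (symmetrically $y_2$) while $V'$ contains the twin $x_1$ but not $x_2$, because deleting $x_1$ from $G_{M_i}$ does not directly block a cycle through $x_2$. The resolution will exploit the $K_{2,2}$ structure of the original 4-cycle: any chordless cycle of length $\geq 12$ through $x_2$ projects in $G'_{M_i}$ to a chordless cycle of length $\geq 10$ through $x_1x_2$, which survives $V''$ precisely when $x_1x_2 \notin V''$, and then by construction $x_1 \notin V'$, so the symmetry between $x_1$ and $x_2$ restores the argument. Claim 1 will be invoked to exclude the small forbidden representing graphs from $G'_{M_i}$, ensuring that no degenerate chord configuration breaks the cycle-correspondence used in either direction.
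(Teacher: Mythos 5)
You have correctly identified that the paper's own proof of this lemma is a single line (``follows from Claims \ref{claim1} and \ref{claim2}''), and your attempt to make the safeness explicit---a size-preserving correspondence between deletion sets for $\langle G_{M_i},d\rangle$ and for the contracted instance $\langle G'_{M_i},d\rangle$---is the right thing to want. Your forward direction (project $V'$ onto the contracted graph, lift a surviving chordless cycle back) is a plausible elaboration of what Claims \ref{claim1} and \ref{claim2} are intended to support, at roughly the paper's own level of rigor.

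The backward direction, however, contains a genuine gap, and the resolution you sketch is circular. The problematic case is exactly the one you flag: $V''$ meets a projected long chordless cycle $C''$ \emph{only} at the merged vertex $x_1x_2$, while the corresponding cycle $C$ in $G_{M_i}$ passes through $x_2$ but not $x_1$, and your lift $V'$ contains only $x_1$. Your proposed fix reasons from the hypothesis $x_1x_2\notin V''$, but that is the complementary, unproblematic case; when $x_1x_2\in V''$ nothing in your argument shows that deleting $x_1$ alone destroys $C$. The difficulty is substantive: deleting $x_1x_2$ in $G'_{M_i}$ kills every long chordless cycle through $x_1$ \emph{or} $x_2$ at cost one, whereas replicating this in $G_{M_i}$ could in principle require deleting both twins, so a budget that suffices for the contracted instance might not suffice for the original. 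Closing this requires an additional structural argument---for instance, that any long chordless cycle through $x_2$ can be rerouted through $x_1$ via the common neighbours $y_1,y_2$ of the induced $C_4$, or that the exclusion of the subgraphs in Observation \ref{obsr} forbids the bad configuration, or that $V''$ may be assumed to avoid merged vertices. Note that Claims \ref{claim1} and \ref{claim2} cannot do this work by themselves, since they compare chordless cycles of $G_{M_i}$ and $G'_{M_i}$ \emph{before} any deletion; so while your proposal is more explicit than the published proof, its key step remains open.
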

\begin{proof}
Proof follows from Claim \ref{claim1} and Claim \ref{claim2}.
\end{proof}
\noindent Next, we show that solving $d$-$SC1S$-$RC$ on $M_{i}$ is equivalent to solving chordal vertex deletion problem on $G'_{M_{i}}$.
\begin{lemma}Let $\langle M_{i},d_{i} \rangle$, where $1 \leq i \leq 11^{d}$ be a valid leaf instance obtained after Stage $1$ of Algorithm \ref{alg6} and, let $G_{M_{i}}$ be the representing graph of $M_{i}$. Let $G'_{M_{i}}$ be the graph obtained from $G_{M_{i}}$, after applying Rules \hyperref[presv]{1, {2}} and \hyperref[presv]{3}. Then, solving $d$-$SC1S$-$RC$ on $M_{i}$ is equivalent to solving \textsc{Chordal Vertex Deletion} problem on $G'_{M_{i}}$, and $M_{i}$ has a $d_{i}$ size solution for $d$-$SC1S$-$RC$ if and only if $G'_{M_{i}}$ has a $d_{i}$ size solution for \textsc{Chordal Vertex Deletion} problem.
\end{lemma}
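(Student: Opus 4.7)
My plan is to prove this equivalence by constructing size-preserving maps between the feasible solutions of the two problems in both directions. The setup: via Definition~\ref{defn5}, the vertex set of $G_{M_i}$ decomposes as $V(G_{M_i}) = R(M_i) \cup C(M_i)$, so a row/column deletion set $R' \cup C'$ in $M_i$ corresponds bijectively to a vertex deletion set in $G_{M_i}$ of the same cardinality. Since $M_i$ is a Stage-$1$ leaf, it contains no submatrix from $X$, so Theorem~\ref{thm2} together with Lemma~\ref{prop3} implies that $M_i \setminus (R' \cup C')$ satisfies the $SC1P$ if and only if $G_{M_i}$ with the corresponding vertices removed has no chordless cycle of length at least six.

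Next I would characterize $G'_{M_i}$: by Claim~\ref{claim1} together with Observation~\ref{obsr}, it contains none of the forbidden representing graphs from Figure~\ref{forbidden subgraph}; the branching on Rule~\hyperref[presv]{1} ensures no chordless cycle of length $6$, $8$, or $10$ survives on the branch being followed; Rule~\hyperref[presv]{2} leaves no induced $C_4$; and Rule~\hyperref[presv]{3} removes only degree-$\leq 1$ vertices, which cannot lie on any cycle. Therefore every chordless cycle of $G'_{M_i}$ has length $\geq 12$, and chordality of $G'_{M_i}$ (i.e.\ the CVD goal) coincides with the absence of chordless cycles of length $\geq 6$ in $G'_{M_i}$. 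Claim~\ref{claim2} further provides a correspondence, up to even-amount shortening, between chordless cycles of length $\geq 12$ in $G_{M_i}$ and those in $G'_{M_i}$, and rules out the introduction of spurious new chordless cycles by Rule~\hyperref[presv]{2}.

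For the forward direction, given an $SC1S$-$RC$ solution $V' \subseteq V(G_{M_i})$ of size $\leq d_i$, project $V'$ to its image $V'' = \phi(V')$ under the natural map $\phi$ that sends each vertex of $G_{M_i}$ to its surviving representative in $G'_{M_i}$, dropping any vertex eliminated by Rule~\hyperref[presv]{3}. Since $\phi$ may collapse two preimages of a reduced $C_4$ into one contracted vertex, $|V''| \leq |V'| \leq d_i$. Any chordless cycle in $G'_{M_i} \setminus V''$ would lift via Claim~\ref{claim2} to a chordless cycle of length $\geq 12$ in $G_{M_i} \setminus V'$, contradicting the validity of $V'$; so $V''$ is a CVD-solution of size at most $d_i$.

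The backward direction is the main obstacle. Given a CVD solution $V'' \subseteq V(G'_{M_i})$ of size $\leq d_i$, I would lift it to $V' \subseteq V(G_{M_i})$ by keeping every uncontracted vertex and replacing each contracted vertex $x_1x_2 \in V''$ by a single chosen preimage. The delicate point is to verify that one representative suffices to intersect every chordless cycle of length $\geq 6$ that remains in $G_{M_i}$. Cycles of length $6$, $8$, $10$ have already been destroyed by Rule~\hyperref[presv]{1}, so every surviving cycle $C$ has length $\geq 12$. For Cases~(b) and~(c) of Claim~\ref{claim2}, $C$ passes through both endpoints of the relevant $C_4$ (or can be rerouted through either via the $C_4$-edge), so either representative of $x_1x_2$ lies on $C$ and destroys it. Case~(a), where $C$ touches only one endpoint (say $x_2$), is the crux: here I would choose the representative of each problematic contracted vertex consistently as the endpoint that maximises cycle coverage, and rule out conflicting demands (where the same contracted vertex must hit both an $x_1$-only and an $x_2$-only cycle) by showing that such coexisting cycles, together with the $C_4$, would force one of the induced subgraphs of Figure~\ref{forbidden subgraph} into $G_{M_i}$, contradicting Observation~\ref{obsr}. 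This yields $|V'| \leq |V''| \leq d_i$ and closes the equivalence.
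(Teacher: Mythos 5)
Your overall framing matches the paper's: both arguments reduce the question to the correspondence between the surviving forbidden submatrices $M_{I_{k}}/M^{T}_{I_{k}}$ of $M_{i}$ and the chordless cycles of length at least six in the representing graph, via Observation \ref{obsr} and Claims \ref{claim1} and \ref{claim2}. The paper's own proof stops at that level --- it notes that after Rule 2 the graph $G'_{M_{i}}$ is bipartite with no $4$-cycles, so chordality of $G'_{M_{i}}$ coincides with the absence of exactly the cycles encoding $M_{I_{k}}/M^{T}_{I_{k}}$ --- and never constructs explicit maps between solution sets. You go further and build the two maps, which is the right instinct, and your forward direction (projecting a deletion set of $M_{i}$ through the contractions) is sound.

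The backward direction, however, contains a genuine gap, and you have located it yourself without closing it. If a \textsc{Chordal Vertex Deletion} solution for $G'_{M_{i}}$ contains a contracted vertex $x_{1}x_{2}$, your lift must choose one of $x_{1},x_{2}$, and that single choice must meet every surviving chordless cycle of $G_{M_{i}}$ passing through $x_{1}$ or $x_{2}$. Your proposed resolution of the crux --- that a long chordless cycle through $x_{1}$ only and another through $x_{2}$ only cannot coexist around the same $C_{4}$ because they would force one of the induced subgraphs of Figure \ref{forbidden subgraph} into $G_{M_{i}}$ --- is asserted, not proved, and it is not evident: the two cycles sit on opposite sides of the $C_{4}$, chordlessness of each cycle individually says nothing about adjacencies between $y_{1},y_{2}$ and the interiors of the cycles, and none of the listed forbidden subgraphs obviously embeds without further case analysis. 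Since this one claim carries the entire cost-preservation argument for the backward direction (without it, one deleted vertex of $G'_{M_{i}}$ may cost two deletions in $M_{i}$ and the budget equivalence fails), the proof is incomplete as written. It is fair to add that the paper's own one-paragraph proof never confronts this point either; your more explicit route exposes the difficulty rather than creates it, but it does not yet resolve it.
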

\begin{proof}
The only forbidden matrices that can survive in $M_{i}$ are $M_{I_{k}}$ and $M^{T}_{\mathsmaller{I_{k}}}$ (where $k \geq 1$), which corresponds to even chordless cycles of length greater than or equal to six in $G_{M_{i}}$. Since Rule $2$ reduces each chordless cycle of length exactly four in $G_{M_{i}}$ to an edge in $G'_{M_{i}}$, $G'_{M_{i}}$ will not have any four length chordless cycles. Also, $G'_{M_{i}}$ do not contain odd chordless cycles. Therefore, solving \textsc{Chordal Vertex Deletion} problem on $G'_{M_{i}}$ is equivalent to destroying all $M_{I_{k}}$ and $M^{T}_{\mathsmaller{I_{k}}}$ (where $k \geq 4$) in $M_{i}$, and $M_{i}$ has a $d_{i}$ size solution for $d$-$SC1S$-$RC$ if and only if $G'_{M_{i}}$ has a $d_{i}$ size solution for chordal vertex deletion algorithm.
\end{proof}
\begin{theorem} \label{rest1}
$d$-$SC1S$-$RC$ is fixed-parameter tractable on general matrices with a run-time of $O^{*}(2^{dlogd})$.
\end{theorem}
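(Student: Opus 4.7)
The plan is to establish the bound by separately analysing the correctness and the running time of Algorithm \ref{alg6}, using the structural lemmas (Claim \ref{claim1}, Claim \ref{claim2}) already in place, and then invoking the FPT algorithm for \textsc{Chordal Vertex Deletion} (Theorem \ref{thmchrdl}) as a black box.

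For correctness I would argue in two phases corresponding to the two stages of the algorithm. In Stage $1$, as long as the current matrix contains some $M'\in X$, at least one of its rows or columns must be deleted by any valid solution; branching over the at most $11$ rows and columns of $M'$ is therefore exhaustive, and the search-tree depth is bounded by $d$ because each branch decrements the budget by one. Hence at each valid leaf instance $\langle M_i,d_i\rangle$ no member of $X$ remains, so by Theorem \ref{thm2} the only forbidden submatrices that can still obstruct the $SC1P$ in $M_i$ are of type $M_{I_k}$ or $M_{I_k}^T$. In Stage $2$ I pass to the representing graph $G_{M_i}$; by Lemma \ref{prop3} these remaining forbidden submatrices correspond exactly to induced chordless cycles of even length at least six. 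Rule $1$ handles the short cases (lengths $6,8,10$) by explicit branching on the $\le 10$ vertices of such a cycle. Once those are gone, the $4$-cycle-preserving Rule $2$ contracts every induced $4$-cycle into a single edge; Claim \ref{claim1} guarantees that this does not re-introduce any member of $X$, and Claim \ref{claim2} guarantees that it preserves every longer chordless cycle (modulo shrinking by an even amount that keeps the length $\ge 6$) and does not create spurious new chordless cycles. Rule $3$ is trivially safe since degree-$\le 1$ vertices lie on no cycle. In the resulting graph $G'_{M_i}$ there are no $4$-cycles and no chordless cycles of length $\le 10$, and a chordal deletion set of size $d_i$ in $G'_{M_i}$ corresponds bijectively to a size-$d_i$ row/column deletion set that destroys every $M_{I_k}$ and $M_{I_k}^T$ in $M_i$. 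Invoking Theorem \ref{thmchrdl} therefore decides $d$-$SC1S$-$RC$ correctly at each leaf, and the overall answer is Yes iff some leaf returns Yes.

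For the running time, let $d_1$ be the number of deletions spent destroying members of $X$ in Stage $1$, let $d_1'$ be the deletions spent by Rule $1$ on short chordless cycles, and let $d_2$ be the deletions handled by the chordal-vertex-deletion subroutine, so that $d_1+d_1'+d_2\le d$. By Lemma \ref{prop1} a forbidden submatrix from $X$ can be located in polynomial time, and by Lemma \ref{prop6} a short chordless cycle can be located in linear time; hence the work at each internal node of the combined search tree is polynomial. The search tree has at most $11^{d_1}\cdot 10^{d_1'}$ leaves, and by Theorem \ref{thmchrdl} the work at each leaf is $O^*\!\bigl(2^{d_2\log d_2}\bigr)$. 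Consequently the total time is
\[
O^{*}\!\bigl(11^{d_1}\cdot 10^{d_1'}\cdot 2^{d_2\log d_2}\bigr)\;\le\; O^{*}\!\bigl(2^{cd}\cdot 2^{d\log d}\bigr)\;=\;O^{*}\!\bigl(2^{d\log d}\bigr),
\]
for a suitable constant $c$, which absorbs into the $O^*$ notation.

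The main obstacle I expect is not the running-time bookkeeping (the $11^{d_1}$ and $10^{d_1'}$ factors are dominated by $2^{d\log d}$) but rather the soundness of Rule $2$: one must ensure that contracting $4$-cycles neither creates new long chordless cycles nor resurrects any already-destroyed member of $X$, and that the budget accounting between $M_i$ and the reduced graph $G'_{M_i}$ is one-to-one. These are precisely the content of Claim \ref{claim1} and Claim \ref{claim2}, which I would cite directly; the remaining equivalence between size-$d_i$ row/column deletions in $M_i$ and size-$d_i$ chordal deletion sets in $G'_{M_i}$ follows because Rule $2$ represents each $4$-cycle by a pair of merged vertices so that deleting a merged vertex is at least as powerful as deleting either of its constituents.
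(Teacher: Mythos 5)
Your proposal is correct and follows essentially the same route as the paper: Stage~1 branching into at most $11$ subcases per forbidden submatrix of $X$, reduction of each leaf to \textsc{Chordal Vertex Deletion} on the reduced representing graph via Rules~1--3 (justified by Claims~\ref{claim1} and~\ref{claim2}), and invocation of Theorem~\ref{thmchrdl}. Your bookkeeping is in fact slightly more careful than the paper's, since you explicitly track the $10^{d_1'}$ factor from Rule~1's branching, which the paper's own run-time analysis leaves implicit.
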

\begin{proof}
Stage $1$ of Algorithm \ref{alg6} employs a search tree, where each node has at most $11$ subproblems. Therefore, the tree has at most $11^{d}$ leaves after Stage $1$. A submatrix $M^{'}$ of $M$, that is isomorphic to one of the forbidden matrices in $X$ can be found in $O(max(m^{6}n,n^{3}m^{3}))$-time (using Lemma \ref{prop1} and Lemma \ref{prop2}). The initial branching step takes at most $O(11^{d}.max(m^{6}n,n^{3}m^{3}))$-time. Chordal Vertex Deletion algorithm called in each of the leaf instances runs in $O^{*}(2^{dlogd})$-time (Theorem \ref{thmchrdl}). Therefore, the total time complexity of Algorithm \ref{alg6} is $O^{*}(2^{dlogd})$.
\end{proof}
\noindent The following corollary on \textsc{Biconvex Deletion} problem is a direct consequence of Theorem \ref{rest1}.
\begin{corollary}
\textsc{Biconvex Deletion} problem is fixed-parameter tractable on bipartite graphs with a run-time of $O^{*}(2^{dlogd})$, where $d$ denotes the number of allowed vertex deletions.
\end{corollary}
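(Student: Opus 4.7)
The plan is to reduce \textsc{Biconvex Deletion} on a bipartite graph $G=(V_{\mathsmaller 1},V_{\mathsmaller 2},E)$ to $d$-$SC1S$-$RC$ on its half adjacency matrix $M_{\mathsmaller G}$ (Definition \ref{defn5}), and then invoke Theorem \ref{rest1}. The reduction is essentially a restatement of the characterization in Lemma \ref{biconvexlemm}, which says that a bipartite graph is biconvex if and only if its half adjacency matrix has the $SC1P$.

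First I would construct $M_{\mathsmaller G}$ in polynomial time; its rows are in bijection with $V_{\mathsmaller 1}$ and its columns are in bijection with $V_{\mathsmaller 2}$. Next I would observe that deleting a vertex $u \in V_{\mathsmaller 1}$ corresponds exactly to removing the row of $M_{\mathsmaller G}$ indexed by $u$, and similarly deleting $v \in V_{\mathsmaller 2}$ corresponds to removing the column indexed by $v$. Hence, for any $D \subseteq V_{\mathsmaller 1} \cup V_{\mathsmaller 2}$, the half adjacency matrix of $G[(V_{\mathsmaller 1} \cup V_{\mathsmaller 2}) \setminus D]$ is precisely $(M_{\mathsmaller G} \setminus R') \setminus C'$, where $R'$ and $C'$ are the rows and columns indexed by $D \cap V_{\mathsmaller 1}$ and $D \cap V_{\mathsmaller 2}$ respectively, with $|R'| + |C'| = |D|$.

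Combining the above with Lemma \ref{biconvexlemm}, $G[(V_{\mathsmaller 1} \cup V_{\mathsmaller 2}) \setminus D]$ is biconvex if and only if $(M_{\mathsmaller G} \setminus R') \setminus C'$ satisfies the $SC1P$. Thus $\langle G, d\rangle$ is a Yes-instance of \textsc{Biconvex Deletion} if and only if $\langle M_{\mathsmaller G}, d\rangle$ is a Yes-instance of $d$-$SC1S$-$RC$. Running the algorithm of Theorem \ref{rest1} on $\langle M_{\mathsmaller G}, d\rangle$ then decides \textsc{Biconvex Deletion} within the claimed $O^{*}(2^{d\log d})$ bound.

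There is no real obstacle here: the whole proof rests on the observation that rows/columns of $M_{\mathsmaller G}$ are in one-to-one correspondence with the two vertex classes of $G$, and on the already-stated characterization in Lemma \ref{biconvexlemm}. The only detail worth stating carefully is that this correspondence preserves the parameter $|D| = |R'| + |C'|$, so any $d$-sized solution on one side transfers verbatim to a $d$-sized solution on the other.
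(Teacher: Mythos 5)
Your proposal is correct and matches the paper's intent exactly: the paper states this corollary as a direct consequence of Theorem \ref{rest1}, relying on precisely the correspondence you describe between vertex deletions in $V_{1}/V_{2}$ and row/column deletions in the half adjacency matrix, together with Lemma \ref{biconvexlemm}. The only slip is cosmetic: the half adjacency matrix is Definition \ref{defn4}, not Definition \ref{defn5} (which defines the representing graph).
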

\subsubsection{\textup{\textbf{Improved FPT algorithms for $SC1S$ problems on restricted matrices}}}\label{restrctd}
\begin{figure}[t]
\begin{minipage}{.25\linewidth} 
  \[
\begin{blockarray}{ccccc}
   & & y_{1} & y_{2} & y_{3} \\
\begin{block}{cc (ccc)}
   x_{1} & & 1 & 0 & 0 \\
   x_{2} & & 1 & 1 & 1  \\
   x_{3} & & 0 & 1 & 0 \\
   x_{4} & & 0 & 0 & 1  \\
   \end{block}
\end{blockarray}\]
  \end{minipage}
 \begin{minipage}{.2\linewidth}
\includegraphics[width=1.0 in]{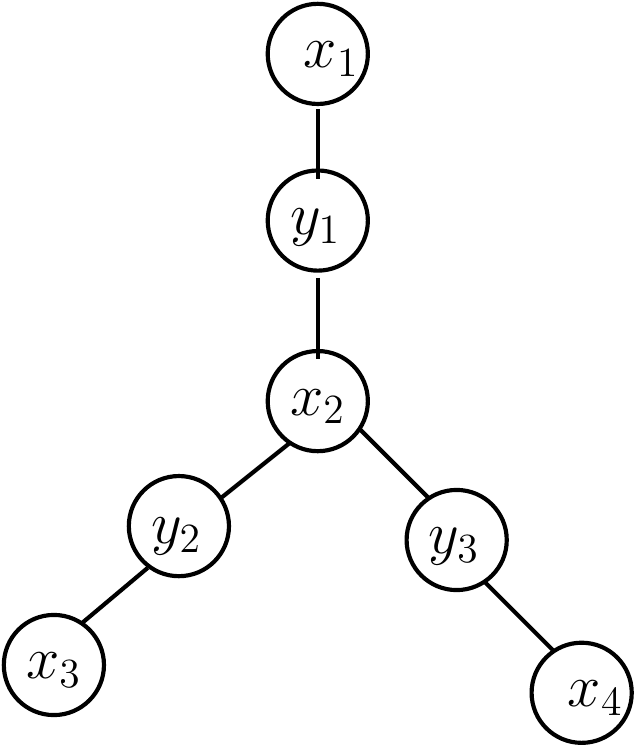}
\end{minipage}
 \begin{minipage}{.33\linewidth} 
  \[
\begin{blockarray}{cccccc}
   & & y_{1} & y_{2} & y_{3} & y_{4}\\
\begin{block}{cc(cccc)}
   x_{1} & & 1 & 1 & 0 & 0\\
   x_{2} & & 0 & 1 & 1 & 0 \\
   x_{3} & & 0 & 1 & 0 & 1 \\
   \end{block}
\end{blockarray}\]
  \end{minipage}
 \begin{minipage}{.2\linewidth}
\includegraphics[width=1.0 in]{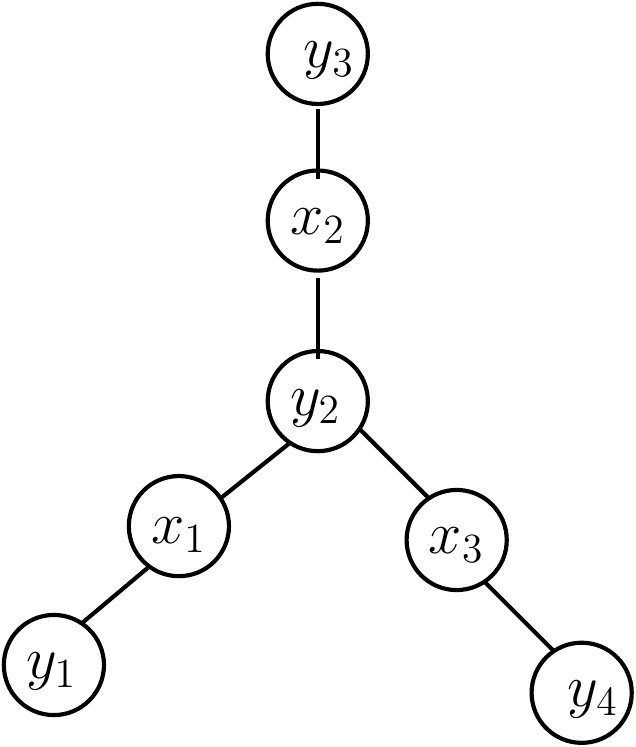}
\end{minipage}   
\indent \hspace{0.5 in} \textbf{(a)} \hspace{1.0 in} \textbf{(b)} \hspace{1.0 in} \textbf{(c)} \hspace{0.85 in} \textbf{(d)}
  \caption{(a) Forbidden submatrix ${M^{\mathsmaller T}_{\mathsmaller 3_{1}}}$ (b) ${G_{\mathsmaller{M}_{3_{1}}^{\mathsmaller T}}}$, the representing graph of ${M^{\mathsmaller T}_{\mathsmaller 3_{1}}}$ (c) Forbidden submatrix $M_{3_{1}}$ and (d) $G_{\mathsmaller{M}_{3_{1}}}$, the representing graph of $M_{3_{1}}$.  \label{eg1}}
\end{figure}
In this section, we present FPT algorithms for the problems $d$-$SC1S$-$R$, $d$-$SC1S$-$C$ and $d$-$SC1S$-$RC$ on $(2,*)$-matrices and $(*,2)$-matrices. Our algorithm makes use of the forbidden submatrix characterization for the $SC1P$ by Tucker (see Theorem \ref{thm2}). A similar technique is used in (\cite[Chapter 4]{dom2009recognition}) to prove the fixed-parameter tractability of $d$-$COS$-$R$ problem on $(2,*)$-matrices. We extend those results to $SC1S$ and $SC1E$ problems. Given an input matrix $M$, our algorithm consists of two stages. Stage $1$ first preprocess the input matrix to remove identical rows and columns and then destroys all fixed-size forbidden submatrices from $X$ in $M$. Stage $2$ focuses on destroying infinite-size forbidden submatrices in $M$.

\textit{Preprocessing}\label{preprocess} on the input matrix $M$ is done by assigning \textit{weights} to each row, column and entry and deleting all but one occurrence of identical rows and columns. For a matrix $M$, the \textit{weight of a row (resp. column)}\label{defn2} is equal to the number of times that row (resp. column) appears in $M$. The \textit{weight of an entry} is equal to the product of the weight of its row and column. Assigning weights to rows and columns ensures that preprocessing doesn't change the original matrix while deleting identical rows and columns. The resultant matrix thus obtained will have no identical rows and columns, and it is also possible for a matrix to have more than one row$/$column with equal weight. 

If $M$ is a $(2,*)/(*,2)$-matrix, then the only forbidden matrix from $X$ (Section \ref{defn1}) that can be appear in $M$ is  ${M^{\mathsmaller T}_{\mathsmaller 3_{1}}}/{M_{3_{1}}}$, because all other matrices in $X$ contain a column/row with more than two ones. We use a recursive branching algorithm, which is a search tree that checks for forbidden matrices of type ${M^{\mathsmaller T}_{\mathsmaller 3_{1}}}/{M_{3_{1}}}$ in $M$ and then branches recursively into three$/$four subcases, depending upon the problem under consideration. If the resultant matrix obtained after satge $1$ does not have the $SC1P$, then stage $2$ of our algorithm focuses on destroying the forbidden matrices of type $M_{I_{k}}$ and $M^{\mathsmaller T}_{\mathsmaller{I_{k}}}$ (where $k \geq 1)$ efficiently.

In stage $2$ of our algorithm, branching strategy cannot be applied to destroy $M_{I_{k}}$ and $M^{T}_{\mathsmaller{I_{k}}}$ (where $k \geq 1$), because their sizes are unbounded. We use the result of Theorem \ref{main} cleverly, to get rid of $M_{I_{k}}$ and $M^{T}_{\mathsmaller{I_{k}}}$ in stage $2$. 
\begin{figure}[t]
\centering
 \includegraphics[width=3.5 in]{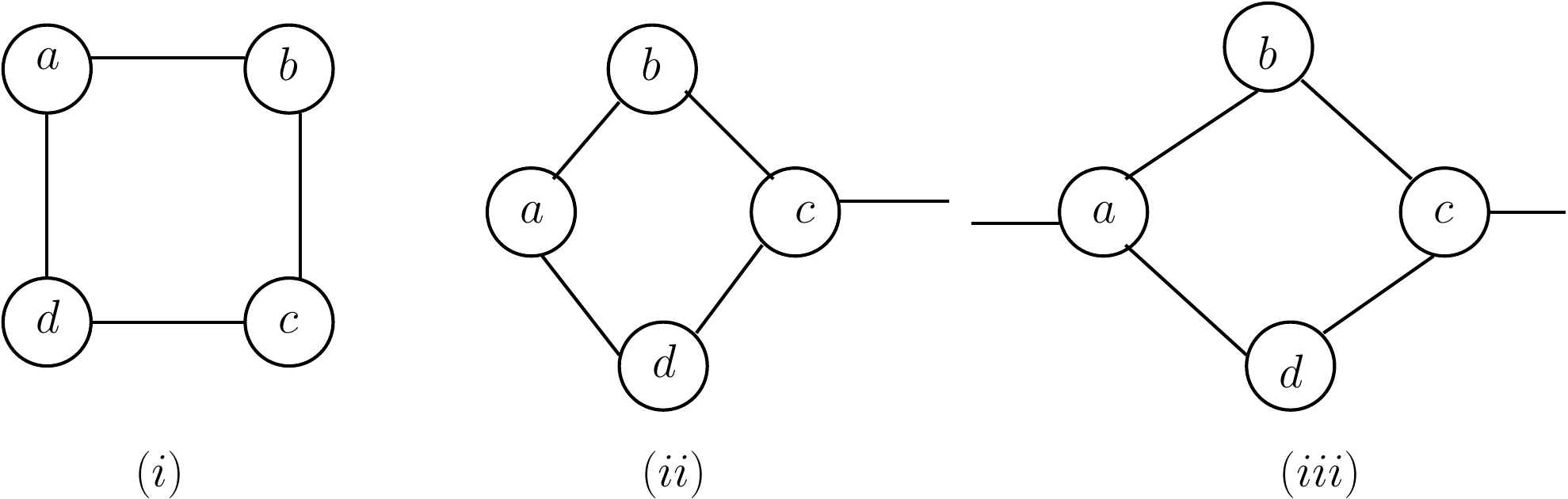}
  \caption{Possible chordless cycles of length four in a $(2,*)$-matrix and $(*,2)$-matrix .\label{chrd}}
\end{figure}
\begin{lemma}\label{chrdl}
If $M$ is a $(2,*)$-matrix or a  $(*,2)$-matrix, that does not have identical columns and identical rows, then there are no chordless cycles of length four in the representing graph $G_{\mathsmaller{M}}$, of $M$.
\end{lemma}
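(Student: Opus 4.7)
\textbf{Proof plan for Lemma \ref{chrdl}.}

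The plan is to translate the conclusion into a statement about submatrices of $M$ and then exploit the row/column weight restriction. First I would note that in the bipartite representing graph $G_M$, every chord of a 4-cycle would have to be an edge joining two row-vertices or two column-vertices; since $G_M$ is bipartite, no such edges exist, so every 4-cycle in $G_M$ is automatically chordless. Hence it suffices to show that $G_M$ contains no 4-cycle at all.

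Next, observe that a 4-cycle $r_1 - c_1 - r_2 - c_2 - r_1$ in $G_M$ corresponds exactly to a $2 \times 2$ all-ones submatrix of $M$ induced by rows $r_1, r_2$ and columns $c_1, c_2$ (this is the configuration depicted in Figure \ref{chrd}). So the lemma reduces to the claim that no such all-ones $2 \times 2$ submatrix can appear in $M$.

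Now I would argue by contradiction in the two cases. Suppose $M$ is a $(2,*)$-matrix and contains a $2 \times 2$ all-ones submatrix on rows $r_1, r_2$ and columns $c_1, c_2$. Then $c_1$ has ones in positions $r_1$ and $r_2$, and because $M$ has at most two ones per column, these are the only ones in $c_1$. The same reasoning applied to $c_2$ shows that $c_2$ has ones in exactly the positions $r_1, r_2$. Therefore $c_1$ and $c_2$ are identical columns of $M$, contradicting the hypothesis that $M$ has no identical columns. The case of a $(*,2)$-matrix is completely symmetric: the row bound forces $r_1$ and $r_2$ to have ones in exactly the positions $c_1, c_2$, making $r_1$ and $r_2$ identical rows, again a contradiction.

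I do not anticipate a real obstacle here; the only subtlety worth stating explicitly is why chordlessness is automatic for 4-cycles in a bipartite graph, so that the whole argument can be reduced to forbidding the $2 \times 2$ all-ones pattern. The two-case analysis then falls out immediately from the at-most-two-ones restriction on columns (respectively rows) combined with the no-duplicate-columns (respectively rows) assumption.
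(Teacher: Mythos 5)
Your proposal is correct and follows essentially the same argument as the paper: the paper observes that the two degree-bounded vertices of any $4$-cycle in $G_M$ have the same neighbourhood and hence correspond to identical rows or columns, which is exactly your $2\times 2$ all-ones-submatrix argument phrased in graph language. Your explicit remark that $4$-cycles in a bipartite graph are automatically chordless is a small clarification the paper leaves implicit, but the substance is the same.
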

 \begin{proof}
 The possible chordless cycles of length four in the representing graph of a $(2,*)$ and $(*,2)$-matrices are shown in Figure \ref{chrd}. Here, we can note that the vertices $b$ and $d$ cannot have degree greater than two, because we are considering only $(2,*)$-matrices and  $(*,2)$-matrices. In Figure \ref{chrd} (i), (ii) and (iii), we can see that the vertices $b$ and $d$ are connected to the same vertices. That means the rows (or columns) corresponding to vertices $b$ and $d$ in $M$ are identical, which is a contradiction.
\end{proof}
\begin{theorem} \label{main}
Let $M$ be a $(2,*)$-matrix or $(*,2)$-matrix that does not have identical columns and identical rows. If $M$ does not have the $SC1P$  and does not contain matrices in $X$ as submatrices, then the matrices of type $M_{\mathsmaller I_{k}}$ and $M^{T}_{\mathsmaller{I_{k}}}$ (where $k \geq 1$), that are contained in $M$ are pairwise disjoint, i.e. they have no common column or row.
\end{theorem}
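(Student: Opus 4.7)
The plan is to prove the contrapositive. Assume two submatrices $F_1, F_2$ of $M$ of type $M_{I_k}$ or $M_{I_k}^T$ (with $k \ge 1$) share a row or column; I will derive either a chordless $4$-cycle in $G_M$ (impossible by Lemma~\ref{chrdl}) or a submatrix in the configuration of a matrix from $X$ (impossible by hypothesis). By Lemma~\ref{prop3}, $F_1$ and $F_2$ correspond to induced chordless cycles $C_1, C_2$ in $G_M$ of length at least six whose vertex sets overlap. The $(*,2)$-case reduces to the $(2,*)$-case by passing to $M^T$ (with $M_{3_1}$ and $M_{3_1}^T$ exchanging roles), so assume $M$ is a $(2,*)$-matrix; then every column-vertex of $G_M$ has degree at most two.

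I would split on the type of the shared vertex. \textbf{Case 1:} $C_1$ and $C_2$ share a column-vertex $c$. Since $\deg(c) \le 2$, both cycles must use the only two edges incident to $c$, and hence also share $c$'s two row-neighbors $r_1, r_2$. Let $c^{(1)}, c^{(2)}$ be the distinct columns following $r_2$ along $C_1, C_2$, and let $r^{(1)}, r^{(2)}$ be their unique other row-neighbors. I would first show $r_1, r^{(1)}, r^{(2)}$ are pairwise distinct: any equality would force two of $c, c^{(1)}, c^{(2)}$ to have identical neighbor sets in $G_M$, hence be identical columns of $M$, contradicting the preprocessing hypothesis. The $4 \times 3$ submatrix of $M$ on rows $\{r_2, r_1, r^{(1)}, r^{(2)}\}$ and columns $\{c, c^{(1)}, c^{(2)}\}$ then lies in the configuration of $M_{3_1}^T$: $r_2$ has $1$'s in all three columns, each of the remaining three rows carries a single $1$, and every other entry is forced to $0$, since any additional $1$ would create a chordless $4$-cycle in $G_M$ (for instance, an edge from $r_1$ to $c^{(1)}$ produces the $4$-cycle $r_1\,c\,r_2\,c^{(1)}$), contradicting Lemma~\ref{chrdl}.

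\textbf{Case 2:} $C_1$ and $C_2$ share only a row-vertex $r$. Let the two column-neighbors of $r$ in $C_1$ be $a_1, a_2$, and those in $C_2$ be $b_1, b_2$. Then $\{a_1, a_2\} \cap \{b_1, b_2\} = \emptyset$, for any shared column would place us in Case 1. Each of $a_1, a_2, b_1, b_2$ has degree $2$ in $G_M$ with $r$ as one neighbor; I would argue their four other row-neighbors are pairwise distinct, since any coincidence would yield two columns of $M$ with identical neighbor sets $\{r, \cdot\}$, hence identical columns. Choosing $r$, any three of $a_1, a_2, b_1, b_2$, and the three corresponding other row-neighbors produces a $4 \times 3$ submatrix again in the configuration of $M_{3_1}^T$, with the required $0$'s enforced by the $4$-cycle freeness of $G_M$.

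The main obstacle I anticipate is the careful enumeration of degenerate subcases -- tracing precisely when two ``other'' row-neighbors can coincide and routing each such coincidence back either to an identical-columns violation or to a forbidden $4$-cycle in $G_M$. Once that bookkeeping is settled, the extraction of $M_{3_1}^T$ is essentially immediate; the $(*,2)$-case follows by applying the same argument to $M^T$, whence the constructed forbidden submatrix sits in the configuration of $M_{3_1}$ rather than $M_{3_1}^T$.
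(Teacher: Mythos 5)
Your proposal is correct and follows essentially the same strategy as the paper's proof: assume two overlapping chordless cycles of length at least six in $G_M$, use the degree-two bound on column-vertices together with the no-identical-rows/columns hypothesis to extract an induced $M_{3_1}$ or $M_{3_1}^{T}$, contradicting the assumption that all matrices of $X$ have been destroyed (the paper does this via a figure-based enumeration of ``minimal possibilities'' for two overlapping six-cycles plus an asserted induction on cycle length, whereas your case split on whether the shared vertex is a column or a row is a more explicit, length-uniform rendering of the same idea). The one loose end, which you partly flag yourself, is that in Case 1 the two cycles need not diverge immediately at $r_2$: you must first walk along their common portion to the first point of divergence, which is necessarily a row-vertex because a degree-two column-vertex forces the continuation, and from that divergence row the extraction of $M_{3_1}^{T}$ then goes through exactly as you describe.
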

\begin{proof}
Consider the representing graph $G_{\mathsmaller{M}}$ of a given $m \times n$ matrix $M$. From Lemma \ref{chrdl}, it is clear that there are no chordless cycles of length $4$ in $G_{\mathsmaller{M}}$, since $M$ is a $(2,*)/(*,2)$-matrix with no identical rows and columns,. For a contradiction, assume that $M$ contains a pair of matrices of type $M_{I_{k}}$ and/or $M^{T}_{\mathsmaller{I_{k}}}$ (where $k \geq 1$), that share at least one common column or row. This implies that, there are two induced cycles of length at least six in $G_{\mathsmaller{M}}$, that have at least one vertex in common corresponding to a column or row of $M$ (Lemma \ref{prop3}). Figure \ref{chrd2} (a) and (b) shows the minimal possibilities for $G_{\mathsmaller{M}}$ to have two chordless cycles of length six that share at least one vertex. Each of these graphs have either a $G_{\mathsmaller{M}_{3_{1}}}$ or ${G_{\mathsmaller{M}_{3_{1}}^{\tiny T}}}$ (See Figure \ref{eg1} and \ref{chrd2})  as an induced subgraph. This means that, $M$ contains an $M_{3_{1}}$ or ${M^{T}_{\mathsmaller 3_{1}}}$, which is a contradiction, to the fact that all forbidden matrices in $X$ have been removed from $M$. The same can be proved by induction on chordless cycles of length eight, ten, twelve,\ldots,$2(min(m,n))$. Therefore our  assumption that two chordless cycles in $M$ share at least one vertex is wrong. Therefore, matrices of type $M_{\mathsmaller I_{k}}$ and $M^{T}_{\mathsmaller{I_{k}}}$ (where $k \geq 1$) that are contained in $M$ are pairwise disjoint.  
\end{proof}
\begin{figure}[t]
\centering
 \includegraphics[width=4.0 in]{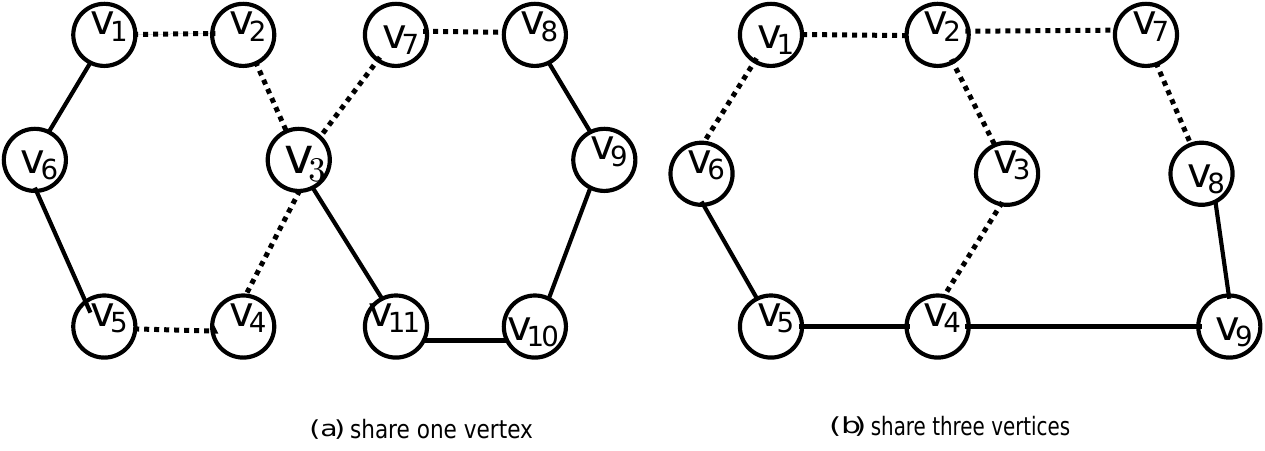}
  \caption{Minimal possibilities for the representing graph $G_{\mathsmaller{M}}$, of a $(2,*)$-matrix or $(*,2)$-matrix  to have two chordless cycles of length six that share at least one vertex. The edges shown in dotted lines are the edges of the the forbidden subgraphs $G_{\mathsmaller{M}_{3_{1}}}$ or ${G_{\mathsmaller{M}_{3_{1}}^{\tiny T}}}$. \label{chrd2}}
\end{figure}

\noindent \textbf{An FPT Algorithm for $d$-$SC1S$-$R$ : \newline}
In Algorithm \ref{alg1}, we present an FPT algorithm \textit{$d$-$SC1S$-row-deletion-restricted-matrices} for solving $d$-$SC1S$-$R$ problem on $(2,*)$-matrices. Given a matrix $M$ and a parameter $d$ (maximum number of rows that can be deleted), Algorithm \ref{alg1} first preprocess (Section \ref{preprocess}) the input matrix, and then search and destroy every submatrix of $M$ that contains an ${M^{\mathsmaller T}_{\mathsmaller 3_{1}}}$. If $M$ contains an ${M^{\mathsmaller T}_{\mathsmaller 3_{1}}}$, then the algorithm branches into at most four subcases (depending on the rows of ${M^{\mathsmaller T}_{\mathsmaller 3_{1}}}$ found in $M$). Each branch corresponds to deleting a row of the forbidden matrix ${M^{\mathsmaller T}_{\mathsmaller 3_{1}}}$ found in $M$. In each of the subcases, when a row is deleted, the parameter $d$ is decremented by the weight (Section \ref{defn2}) of that row. As long as $d > 0$, the above steps are repeated for each subcase until all the forbidden matrices of type ${M^{\mathsmaller T}_{\mathsmaller 3_{1}}}$ are destroyed. The number of leaf instances is at most $O(4^{d})$. For each of the leaf instances $M_{i}$, if the resulting matrix still does not have the $SC1P$, then the only possible forbidden matrices that can remain in $M_{i}$ are of type $M_{I_{k}}$ and $M^{\mathsmaller T}_{\mathsmaller{I_{k}}}$ (where $k \geq 1$). If they appear in $M_{i}$, by Theorem \ref{main} they are pairwise disjoint. Pairwise disjoint $M_{I_{k}}$ and $M^{\mathsmaller T}_{\mathsmaller{I_{k}}}$ in $M_{i}$, can be destroyed by deleting a row with minimum weight (by breaking ties arbitrarily) from each of them. On deletion of a row, the parameter $d$ is decremented by the weight of that row. If the sum of the weights of all the deleted rows is less than or equal to $d$ then, the algorithm returns Yes indicating that input is an Yes instance. Otherwise, the algorithm returns No.
\begin{algorithm}[h]
\caption{Algorithm  \textit{$d$-$SC1S$-row-deletion-restricted-matrices}$(M,d)$\label{alg1}}
\begin{algorithmic}[1]
 \Require An instance $\langle M_{m \times n},d \rangle$ where $M$ is a $(2,*)$-binary matrix and $d \geq 0$ \vspace{-0.1 in} \[ \hspace{-0.48 in}\textbf{Output} = \begin{dcases*} Yes,  & if there exists a set  $R^{'} \subseteq R(M)$, with $\vert R^{'} \vert \leq d$, such that $M \backslash R'$ \\  
        & has the $SC1P$.  
   \\  
   No,    & otherwise
\end{dcases*}
\] \vspace{-0.05 in} 
 \noindent \textit{\textbf{Stage 1:}}\vspace{-0.05 in} 
 \State Apply preprocessing steps as discussed in Section \ref{restrctd} on $M$. \vspace{-0.05 in} 
  \State \textbf{if}  {$M$ has the $SC1P$ and $d \geq 0$} \textbf{then},
 return Yes.\vspace{-0.05 in} 
\State \textbf{if} {$d < 0$} \textbf{then}, return No. \vspace{-0.05 in}  \newline \vspace{-0.05 in} 
\noindent \textit{\textbf{Branching Step:}} \State \textbf{if} {there exists a submatrix $M^{'}$ in $M$ that is isomorphic to ${M^{\mathsmaller T}_{\mathsmaller 3_{1}}}$} \textbf{then}, 

 \vspace{-0.05 in}
  Branch into at most $four$  instances $I_{i}=\langle M_{i}, d_{i}\rangle$ where $i \in \{1,2,3,4\}$ \vspace{-0.05 in} 
 
 Set $M_{i} \leftarrow M\setminus \{r_{i}\}$ \hspace{1.5 in}// $r_{i}$ denotes the $i^{th}$ row of $M^{'}$.   \vspace{-0.05 in}

 Update $d_{i} \leftarrow d$-$wt(r_{i})$  \hspace{0.90 in}// $wt(r_{i})$ denotes the weight of row $r_{i}$. \vspace{-0.05 in} 
 
\noindent For some $i \in \{1,2,3,4\}$, if \textit{$d$-$SC1S$-row-deletion-restricted-matrices}$(M_{i},d_{i})$ returns Yes, then return Yes, else if all instances return No, then return No. \vspace{-0.05 in}  
\State \textbf{end if}\vspace{-0.05 in} \newline \vspace{-0.05 in} 
\noindent \textit{\textbf{Stage 2:}} \vspace{-0.02 in}  
\State \textbf{while} \footnotesize there exists a  submatrix $N$ in $M$ that is isomorphic to an $M_{I_{k}}$ or $M^{T}_{\mathsmaller{I_{k}}}$ and $d>0$ 
 \textbf{do},\vspace{-0.05 in} \normalsize
\State \indent Delete a row $r$ in $N$ having minimum weight from $M$.\vspace{-0.05 in} 
\State \indent Decrement the parameter $d$ by the weight of the deleted row $r$. \vspace{-0.05 in} 
\State \textbf{end while}\vspace{-0.05 in} 
\State \textbf{if} {$M$ does not contain $M_{I_{k}}$ or $M^{T}_{\mathsmaller{I_{k}}}$ and $d \geq 0$} \textbf{then} return Yes, otherwise return No\vspace{-0.06 in} 
\end{algorithmic}
\end{algorithm}
\newline \noindent The correctness of the branching step can be explained in the following Lemma.
 \begin{lemma}\label{corr1}
Let $M$ be a $(2,*)$-matrix that does not have the $SC1P$. Suppose $M$ contains one of the forbidden matrices from $X$. Let $M[\{r_{1},\ldots,r_{4}\}]$ be a submatrix that contains a forbidden matrix from $X$, where $\{r_{1},\ldots,r_{4}\}  \subseteq R(M)$. Then, any solution of $d$-$SC1S$-$R$ includes at least one of the rows $r_{1},\ldots,r_{4}$.
 \end{lemma}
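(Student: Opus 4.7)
The plan is to argue in the standard "hitting set" style that any solution must intersect every forbidden submatrix. First I would invoke the observation already made in the text just before the algorithm: since $M$ is a $(2,*)$-matrix (at most two ones per column), the only element of $X$ that can possibly occur as a submatrix of $M$ is ${M^{T}_{3_{1}}}$. This is because each of $M_{2_1}, M_{2_2}, M_{3_1}, M_{3_2}, M_{3_3}$ and the transposes other than $M^{T}_{3_1}$ contains a column with at least three ones, which is impossible inside a $(2,*)$-matrix. Moreover, $M^{T}_{3_1}$ has exactly four rows (and three columns), which explains the four distinguished rows $r_1,\ldots,r_4$ in the statement.

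Next I would note that the hypothesis guarantees that the induced submatrix $M[\{r_1,\ldots,r_4\}]$ already contains a configuration of $M^{T}_{3_1}$ as a submatrix, i.e., up to column permutation $M^{T}_{3_1}$ is obtained from $M[\{r_1,\ldots,r_4\}]$ by restricting to a suitable subset of three columns. In particular, the rows of this forbidden submatrix are exactly (a subset of) $\{r_1, r_2, r_3, r_4\}$.

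Now suppose, for contradiction, that $R'\subseteq R(M)$ is a solution for $d$-$SC1S$-$R$ with $R'\cap\{r_1,\ldots,r_4\}=\emptyset$. Then all four rows $r_1,\ldots,r_4$ still belong to $M\setminus R'$, and since deletion of rows cannot remove a submatrix supported entirely on the surviving rows, the same copy of $M^{T}_{3_1}$ remains as a submatrix of $M\setminus R'$. By Theorem~\ref{thm2} (Tucker's forbidden submatrix characterization), $M\setminus R'$ does not have the $SC1P$, contradicting the assumption that $R'$ is a solution. Hence every solution must contain at least one of $r_1,\ldots,r_4$, which justifies the four-way branching in Stage~1 of Algorithm~\ref{alg1}.

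The argument is essentially a one-line hitting-set observation; the only "obstacle" is bookkeeping — making sure that the restriction to $(2,*)$-matrices really forces $M^{T}_{3_1}$ to be the unique relevant forbidden matrix of $X$, and that "contains" is interpreted up to row/column permutation (the \emph{configuration} sense defined earlier), so that the rows of the embedded forbidden submatrix are exactly the four distinguished rows $\{r_1,\ldots,r_4\}$. Once that is set up, the contradiction is immediate.
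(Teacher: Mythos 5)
Your proposal is correct and follows essentially the same route as the paper: assume a solution avoiding all of $r_1,\ldots,r_4$, observe that the forbidden submatrix supported on those rows survives the deletion, and conclude via Tucker's characterization that the resulting matrix cannot have the $SC1P$. The extra bookkeeping you supply (that $M^{T}_{3_{1}}$ is the only member of $X$ that fits in a $(2,*)$-matrix and has exactly four rows) is consistent with the discussion the paper gives just before the algorithm, so nothing is missing.
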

 \begin{proof}
 Assume that there exists a solution for $d$-$SC1S$-$R$, say $S$ that contains none of the rows $r_{1},r_{2},\ldots,r_{4}$. Let $M^{'} = M \backslash S$ be the matrix with the $SC1P$. This implies that $M[\{r_{1},r_{2},\ldots,r_{4}\}]$ in $M^{'}$ satisfies the $SC1P$, which is a contradiction.
 \end{proof}
  Algorithm \ref{alg1} can be used to solve $d$-$SC1S$-$R$ problem on $(*,2)$-matrices also, by searching for an $M_{3_{1}}$ instead of ${M^{\mathsmaller T}_{\mathsmaller 3_{1}}}$ in $M$ (in line $6$ of Algorithm \ref{alg1}), and considering the number of branches as three (since the only forbidden matrix in $X$ that can occur in a $(*,2)$-matrix is $M_{3_{1}}$ and it has three rows). 
 \begin{theorem}\label{rest4}
$d$-$SC1S$-$R$ is fixed-parameter tractable on  $(2,*)$/$(*,2)$-matrices with a run-time of $O^{*}(4^{d})$/$O^{*}(3^{d})$, where $d$ denotes the number of rows that can be deleted.
\end{theorem}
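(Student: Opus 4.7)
The plan is to verify the correctness of the two stages of Algorithm \ref{alg1} separately and then combine their time bounds.

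\textbf{Correctness of Stage 1.} For a $(2,*)$-matrix, the only member of $X$ that can appear as a submatrix is $M^{T}_{3_{1}}$, since every other matrix in $X$ contains a column with three or more ones. Lemma \ref{corr1} says that any solution to $d$-$SC1S$-$R$ must include at least one of the four rows of any $M^{T}_{3_{1}}$-submatrix found in $M$, which justifies the exhaustive 4-way branching and the negative answer when $d$ becomes less than zero. The preprocessing step that removes duplicate rows/columns and encodes multiplicities as weights is safe by the standard observation that, in some optimal solution, all copies of identical rows are deleted together or retained together; so the algorithm keeps one representative and charges the parameter by the weight of a deleted row. For $(*,2)$-matrices, the only member of $X$ that can occur is $M_{3_{1}}$, which has three rows, giving a 3-way branching with the same justification (replacing $M^{T}_{3_{1}}$ by $M_{3_{1}}$ in line 6 of the algorithm).

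\textbf{Correctness of Stage 2.} At every leaf of the Stage 1 search tree the surviving matrix $M_i$ contains no submatrix from $X$. By Tucker's characterization (Theorem \ref{thm2}), any remaining forbidden submatrix is of type $M_{I_{k}}$ or $M^{T}_{I_{k}}$ for some $k\geq 1$. By Theorem \ref{main}, all such surviving forbidden submatrices are pairwise row/column disjoint. Hence destroying them decomposes into independent subproblems, one per occurrence, and in each occurrence deleting any single row (or column, which is excluded here since we only remove rows) suffices. Choosing a minimum-weight row from each occurrence minimises the total weight of rows charged against the budget $d$. Moreover, row deletion cannot create a new $X$-submatrix, because submatrices are closed under row deletion; so Stage 2 never re-triggers Stage 1. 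The algorithm therefore outputs \emph{Yes} if and only if the sum of weights of all chosen rows (across both stages) is at most $d$.

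\textbf{Running time.} The Stage 1 search tree has branching factor at most four for $(2,*)$-matrices (resp.\ three for $(*,2)$-matrices) and depth at most $d$, since each branch decrements the parameter by at least one (weights are positive integers). Locating an $M^{T}_{3_{1}}$ or $M_{3_{1}}$ at a node is polynomial in $m,n$ by Lemma \ref{prop1}. At each of the at most $4^{d}$ (resp.\ $3^{d}$) leaves, Stage 2 runs a greedy loop of at most $d$ iterations, each iteration finding an $M_{I_{k}}/M^{T}_{I_{k}}$ in polynomial time by Lemma \ref{prop2} and deleting a row. Combining these gives an overall runtime of $O^{*}(4^{d})$ for $(2,*)$-matrices and $O^{*}(3^{d})$ for $(*,2)$-matrices.

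\textbf{Main obstacle.} The delicate point is the correctness of Stage 2: one must be sure that the pairwise disjointness supplied by Theorem \ref{main} still holds \emph{after} the weight-based preprocessing and the $X$-destroying branching of Stage 1. This reduces to checking that (i) Stage 1 indeed produces a leaf matrix with no $X$-submatrix and still no identical rows/columns (both invariants are maintained by row deletion alone), and (ii) greedy row deletion in Stage 2 is optimal with respect to the weighted budget, which follows from disjointness since each $M_{I_{k}}/M^{T}_{I_{k}}$ must contribute at least one deleted row independently. Once these two invariants are in place, the FPT bounds follow immediately from the branching analysis above.
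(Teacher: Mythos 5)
Your proof is correct and follows essentially the same route as the paper: both analyse Algorithm \ref{alg1}, justifying the $4$-way (resp.\ $3$-way) branching on $M^{T}_{3_{1}}$ (resp.\ $M_{3_{1}}$) via Lemma \ref{corr1} and the greedy Stage~2 via the pairwise disjointness supplied by Theorem \ref{main}, yielding the $O^{*}(4^{d})$/$O^{*}(3^{d})$ bounds. One small correction to your invariant (i): row deletion alone does \emph{not} preserve the no-identical-columns condition required by Theorem \ref{main} (two columns differing only in the deleted row become identical); this is harmless only because the algorithm re-applies the weight-based preprocessing at the start of every recursive call, which is what you should appeal to rather than closure under row deletion.
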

\begin{proof}
Algorithm \ref{alg1} employs a search tree, where each node in the search tree has at most four/three subproblems, and therefore the tree has at most $4^{d}$/$3^{d}$ leaves. The size of the search tree is $O(4^{d})$/$O(3^{d})$. A submatrix $M^{'}$ of $M$, that is isomorphic to ${M^{T}_{\mathsmaller 3_{1}}}$ and $M_{3_{1}}$ can be found in $O(m^{4}n)$-time and $O(m^{3}n)$-time(using Lemma \ref{prop1}) respectively. Therefore, for an input matrix $M$, the time required for destroying an ${M^{T}_{\mathsmaller 3_{1}}}/M_{3_{1}}$ (stage 1) is $O(4^{d}m^{4}n)$/$O(3^{d}m^{3}n)$. The time required for finding a submatrix of type $M_{I_{k}}$ and $M^{T}_{\mathsmaller{I_{k}}}$, (where $k \geq 1$) in $M$ is $O(m^{3}n^{3})$ and $O(m^{3})$ (using Lemma \ref{prop2}) on $(2,*)$-matrices and $(*,2)$-matrices respectively. For each of the leaf instance $M_{i}$, line $6$ of Algorithm \ref{alg1} is executed at most $d_{i}$ times and $d_{i} \leq d$. Therefore the time complexity of destroying $M_{I_{k}}$ and $M^{T}_{\mathsmaller{I_{k}}}$ in $M$ (stage 2) is $O(4^{d}m^{3}n^{3}d)$/$O(3^{d}m^{3}d)$. The total time complexity of Algorithm \ref{alg1} is $O(4^{d}(m^{4}n + d.m^{3}n^{3}))$/$O(3^{d}(m^{3}n + d.m^{3}))$ on $(2,*)$/$(*,2)$-matrices. \end{proof}
The following corollary on \textsc{Proper Interval Vertex Deletion} problem (Section \ref{proper}) is a direct consequence of Theorem \ref{rest4}.
\begin{corollary}
\textsc{Proper Interval Vertex Deletion} problem is fixed-parameter tractable on triangle-free graphs with a run-time of $O^{*}(4^{d})$, where $d$ denotes the number of allowed vertex deletions.
\end{corollary}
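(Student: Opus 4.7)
The plan is to reduce \textsc{Proper Interval Vertex Deletion} on a triangle-free graph $G=(V,E)$ to $d$-$SC1S$-$R$ on the maximal-clique matrix $M = M(G)$ of $G$, and then invoke Theorem \ref{rest4}. The reduction is natural given Lemma \ref{lem1}, which states that a graph is a proper interval graph if and only if its maximal-clique matrix has the $SC1P$.

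First, I would observe that when $G$ is triangle-free, every clique of $G$ has size at most $2$, so every maximal clique is either an edge or an isolated vertex. Consequently, each column of $M$ (which is the indicator vector of a maximal clique) contains at most two $1$'s, making $M$ a $(2,*)$-matrix. Thus Algorithm \ref{alg1}, together with its $O^{*}(4^{d})$ bound from Theorem \ref{rest4}, is directly applicable to $M$.

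The key step is to establish the following equivalence: for any $S \subseteq V$, the graph $G \setminus S$ is a proper interval graph if and only if the matrix $M_S$ obtained from $M$ by deleting the rows indexed by $S$ has the $SC1P$. By Lemma \ref{lem1}, the left side is equivalent to $M(G \setminus S)$ having the $SC1P$, so it suffices to show $M_S$ has the $SC1P$ iff $M(G \setminus S)$ does. The matrices $M_S$ and $M(G \setminus S)$ differ only in columns: $M_S$ may contain (i) zero columns, arising from maximal cliques of $G$ contained entirely in $S$, and (ii) unit-vector columns $e_u$ at a row $u \notin S$, arising from an edge $\{u,v\}$ of $G$ with $v \in S$ where $u$ still has a neighbour in $G \setminus S$. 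For the forward direction, $M(G \setminus S)$ is obtained from $M_S$ by deleting such extra columns, so it inherits the $SC1P$. For the backward direction, starting from any $SC1P$ ordering of $M(G \setminus S)$, each residual unit column $e_u$ can be inserted immediately adjacent to a column of $M(G \setminus S)$ containing a $1$ at row $u$ (which exists precisely because $u$ has a neighbour in $G \setminus S$), and zero columns can be placed anywhere; neither operation breaks the consecutive-ones property in any row or column.

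Given this equivalence, the algorithm computes $M$ in polynomial time (its columns are precisely the edges and isolated vertices of $G$) and then runs the algorithm of Theorem \ref{rest4} on the $(2,*)$-matrix $M$ with parameter $d$, yielding the claimed $O^{*}(4^{d})$ run-time. I expect the main obstacle to be the careful bookkeeping in the backward direction of the equivalence, particularly when several residual unit columns share the same row index $u$; this is handled by clustering all such columns adjacent to any single column that already has a $1$ at row $u$ in the $SC1P$ ordering of $M(G \setminus S)$.
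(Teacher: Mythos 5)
Your proposal is correct and follows precisely the route the paper intends (and leaves implicit): Lemma \ref{lem1} identifies proper interval graphs with matrices having the $SC1P$, triangle-freeness makes the maximal-clique matrix a $(2,*)$-matrix whose rows are the vertices, and Theorem \ref{rest4} then gives the $O^{*}(4^{d})$ bound; you in fact supply more detail than the paper, which simply declares the corollary a direct consequence. The only step needing one more word of care is the insertion of a residual unit column $e_u$: placing it merely ``adjacent to'' a column with a $1$ in row $u$ can split another row's block of ones in general, so one must choose the correct side (or the gap between the two columns of row $u$'s block); this choice always exists here because $G \setminus S$ is triangle-free and proper interval, hence claw-free of maximum degree two, i.e.\ a disjoint union of paths whose maximal-clique matrix is a staircase.
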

 \begin{corollary}\label{cor5}
The optimization version of $d$-$SC1S$-$R$ problem \textup{(\textsc{Sc$\mathsmaller1$s-Row Deletion})} on a $(2,*)/(*,2)$-matrix can be approximated in polynomial-time with a factor of four/three.
 \end{corollary}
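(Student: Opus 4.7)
The plan is to convert Algorithm \ref{alg1} into a polynomial-time approximation algorithm by replacing its branching step with a simultaneous deletion of all rows of each discovered forbidden submatrix. Concretely, for a $(2,*)$-matrix (the $(*,2)$ case is symmetric), I would first apply the preprocessing of Section \ref{restrctd}. Then, while $M$ contains a submatrix isomorphic to $M^{T}_{3_{1}}$, I locate such a submatrix $F$ via Lemma \ref{prop1}, delete \emph{all four} of its rows from $M$, and add them to the output set $S$. Once no submatrix from $X$ survives, I run Stage $2$ of Algorithm \ref{alg1} unchanged: Theorem \ref{main} guarantees that the remaining $M_{I_{k}}$ and $M^{T}_{I_{k}}$ submatrices are pairwise row-disjoint, so one can pick any one row from each, add it to $S$, and repeat until the $SC1P$ holds.

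The approximation bound rests on a simple charging argument. Let $F_{1},\ldots,F_{a}$ be the $M^{T}_{3_{1}}$ submatrices processed in the modified Stage $1$, and let $G_{1},\ldots,G_{b}$ be the $M_{I_{k}}/M^{T}_{I_{k}}$ submatrices processed in Stage $2$. The $F_{i}$'s are pairwise row-disjoint in $M$ because each $F_{i}$ is located in the residual matrix after the rows of $F_{1},\ldots,F_{i-1}$ have been removed. The $G_{j}$'s are row-disjoint from one another by Theorem \ref{main}, and row-disjoint from every $F_{i}$ because they live in the residual matrix $M\setminus(R(F_{1})\cup\cdots\cup R(F_{a}))$. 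Hence the $a+b$ forbidden submatrices are pairwise row-disjoint in the original $M$, and since any feasible solution must delete at least one row from every forbidden submatrix of $M$ (an immediate generalization of Lemma \ref{corr1}), we obtain $|OPT|\geq a+b$. Our algorithm outputs a set of size $4a+b\leq 4(a+b)\leq 4\,|OPT|$, giving the claimed factor of four. The $(*,2)$ case is identical with $M_{3_{1}}$ (three rows) in place of $M^{T}_{3_{1}}$, yielding a factor of three.

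The running time is clearly polynomial: each iteration of Stage $1$ locates a forbidden submatrix in $O(m^{4}n)$ (respectively $O(m^{3}n)$) time by Lemma \ref{prop1} and strictly reduces the number of rows, while Stage $2$ performs at most $\min(m,n)$ rounds, each of polynomial cost via Lemma \ref{prop2}. The main subtlety to verify is that the $G_{j}$'s, which are found as forbidden submatrices in the post-Stage-$1$ matrix, are genuine forbidden submatrices of the original $M$ (so that the lower bound $|OPT|\geq a+b$ is legitimate): this holds because deleting rows of $M$ can only destroy submatrices, never create new ones, so each $G_{j}$ corresponds to the same row-index set in $M$ as in the residual matrix. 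With that observation the charging argument goes through and the stated approximation factors follow.
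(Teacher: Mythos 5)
Your proposal is correct and follows essentially the same route as the paper: the paper's proof likewise modifies Stage~$1$ of Algorithm~\ref{alg1} to delete all four (resp.\ three) rows of each forbidden submatrix found instead of branching, and then lets Stage~$2$ handle the pairwise-disjoint $M_{I_{k}}/M^{T}_{I_{k}}$ exactly. The paper states the resulting factor without detail, so your explicit charging argument (pairwise row-disjointness of the processed submatrices in the original $M$ and the lower bound $|OPT|\geq a+b$ via Lemma~\ref{corr1}) is a faithful elaboration of the same idea rather than a different proof.
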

 \begin{proof}
 In Stage $1$ of Algorithm \ref{alg1}, instead of branching on each of the rows of a forbidden submatrix ${M^{T}_{\mathsmaller 3_{1}}}$/$M_{3_{1}}$ found in $M$, delete all rows of each of the forbidden submatrix ${M^{T}_{\mathsmaller 3_{1}}}$/$M_{3_{1}}$ found in $M$. From Algorithm \ref{alg1}, it is clear that Stage $2$ solves the problem exactly. This results in a $4$-factor/$3$-factor approximation algorithm. 
 \end{proof}
 
\noindent \textbf{An FPT Algorithm for $d$-$SC1S$-$C$: \newline}
A related problem of deleting at most $d$ number of columns to get the $SC1P$ ($d$-$SC1S$-$C$ problem) can also be solved using Algorithm \ref{alg1} (consider the columns instead of rows in lines 4, 7 and 8) in $O^{*}(3^{d})$-time for $(2,*)$-matrix ($O^{*}(4^{d})$-time for $(*,2)$-matrix) and the approximation factor for the optimization version of $d$-$SC1S$-$C$ problem \textup{(\textsc{Sc$\mathsmaller 1$s-Column Deletion})} is three (four).\\

\noindent \textbf{An FPT Algorithm for $d$-$SC1S$-$RC$: \newline}
 $d$-$SC1S$-$RC$ problem can also be solved using Algorithm \ref{alg1} (consider the rows as well as columns instead of rows in lines 4, 7 and 8) in $O^{*}(7^{d})$-time on $(2,*)/ (*,2)$-matrices. The approximation factor for the optimization version of $d$-$SC1S$-$RC$ problem \textup{(\textsc{Sc$\mathsmaller 1$s-Row-Column Deletion})} is seven.
\subsection{\textup{\textbf{Establishing $SC1P$ by Flipping Entries}}} \label{etrowcolumn}
This section considers $SC1E$ problems by flipping $0/1$ and $01$-entries. We refer to the decision versions of the optimization problems $SC1P$-$0$-$Flipping$ and $SC1P$-$01$-$Flipping$ defined in Section \ref{opt} as $k$-$SC1P$-$0E$ and $k$-$SC1P$-$01E$ respectively, where $k$ denotes the number of allowed flippings. First, we show that these problems are NP-complete. Then, we give an FPT algorithm for $d$-$SC1P$-$0E$ problem on general matrices. Finally, we present an FPT algorithm for $d$-$SC1P$-$1E$ problem on certain restricted matrices. 
\subsubsection{\textup{\textbf{NP-completeness}}}\label{hard2}
The following theorem proves the NP-completeness of $k$-$SC1P$-$0E$ problem using \textsc{$k$-Chain Completion} problem (Definition \ref{chain}) on bipartite graphs as a candidate problem. 
 \begin{theorem} \label{thm4}
 The $k$-$SC1P$-$0E$ problem is NP-complete.
 \end{theorem}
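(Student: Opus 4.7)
The plan is to prove NP-completeness in the standard two parts: (i) membership in NP, and (ii) NP-hardness via a polynomial-time many-one reduction from \textsc{$k$-Chain Completion}, which Yannakakis showed to be NP-complete.

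For membership in NP, given $\langle M, k\rangle$, a certificate is a set $B'\subseteq B(M)$ with $|B'|\le k$. The verifier flips the entries indexed by $B'$ in $M$ in linear time and then tests the $SC1P$ of the resulting matrix in linear time using the algorithms referenced in the Introduction, so verification is polynomial.

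For NP-hardness, the plan is a reduction \textsc{$k$-Chain Completion} $\le_P$ $k$-$SC1P$-$0E$. Given an instance $\langle G,k\rangle$ with $G=(V_1,V_2,E)$, the natural starting point is the half-adjacency matrix $M_G$: flipping a 0-entry of $M_G$ to a 1 corresponds exactly to adding an edge in $G$, and by Lemma \ref{biconvexlemm} the result has the $SC1P$ iff the resulting graph is biconvex. Since every chain graph is biconvex, the forward direction (a $k$-chain completion yields a $k$-solution to $SC1P$-$0E$ on $M_G$) is immediate. The reverse direction is the sticking point, because a cheap biconvex completion need not be a chain completion, as biconvex graphs strictly generalize chain graphs (Lemma \ref{chnrs} versus Lemma \ref{biconvexlemm}). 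To handle this, I would augment $M_G$ into a matrix $M$ by appending a carefully designed gadget: additional rows and columns arranged so that (a) the gadget alone admits an $SC1P$-compatible row/column ordering without requiring any 0-flip, (b) the forced ordering pins down the relative positions of the original rows (vertices of $V_1$) and columns (vertices of $V_2$) in every $SC1P$ permutation of the augmented matrix, and (c) under that pinned ordering, the $V_1\times V_2$-block being $SC1P$ forces it to be $\begin{pmatrix}1&0\\0&1\end{pmatrix}$-free, i.e., chain. Setting $k'=k$ then gives the reduction, with correctness following by translating the equivalence ``$SC1P$ on $M$ iff chain on the original block'' into counts of 0-flips.

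The main obstacle is the design and correctness of the gadget. Concretely, I would attach ``anchor'' rows/columns (for example, an all-ones row and a monotone staircase block over fresh columns) whose $SC1P$ realization is essentially unique up to reversal and whose interaction with the $V_1\times V_2$ block rules out $2K_2$-configurations in any $SC1P$ completion. Proving (b) and (c) rigorously requires a case analysis on possible row/column permutations witnessing the $SC1P$ of the augmented matrix and an argument that every $\begin{pmatrix}1&0\\0&1\end{pmatrix}$ submatrix within $V_1\times V_2$ would create a forbidden pattern from $F_{SC1P}$ together with the gadget entries. Once this gadget lemma is established, both directions of the equivalence follow: a $k$-chain completion of $G$ lifts to a $k$-flip $SC1P$-completion of $M$, and conversely a $k$-flip $SC1P$-completion of $M$ restricts to a $2K_2$-free augmentation of $G$ of size at most $k$, i.e., a $k$-chain completion. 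Together with membership in NP, this proves Theorem \ref{thm4}.
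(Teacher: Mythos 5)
Your overall strategy is the same as the paper's: membership in NP is routine, and NP-hardness goes by reduction from \textsc{$k$-Chain Completion}, starting from the half-adjacency matrix $M_G$ and augmenting it so that the $SC1P$ of the augmented matrix forces the $V_1\times V_2$ block to be $\begin{pmatrix}1&0\\0&1\end{pmatrix}$-free. You also correctly diagnose the crux: a cheap biconvex completion of $G$ need not be a chain completion, so $M_G$ alone does not suffice and an anchoring gadget is required.

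However, the proposal stops exactly at the point where the proof actually lives: the gadget is never constructed, and properties (a)--(c) are only asserted as things you ``would'' prove by an unspecified case analysis. That is a genuine gap, because the entire content of the hardness direction is the choice of gadget together with the verification that, after any $k$ flips yielding the $SC1P$, the original block must be $2K_2$-free. The paper's concrete choice is the $2m\times 2n$ padding $M_{G_{new}}=\begin{bmatrix}J_{m,n} & M_G\\ 0_{m,n} & J_{m,n}\end{bmatrix}$ with all-ones blocks $J_{m,n}$ and a zero block; the point of this construction is that the patterns $\begin{bmatrix}1&0&0\\0&1&0\end{bmatrix}$ and its transpose cannot occur in the padded matrix, which (by inspection of Figure \ref{forbidden}) eliminates every forbidden submatrix of the $SC1P$ except $M_{I_1}$, and an $M_{I_1}$ can only arise from a $\begin{pmatrix}1&0\\0&1\end{pmatrix}$ inside the $M_G$ block. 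This is what makes both directions of the equivalence one-line arguments rather than a permutation case analysis. Your sketched gadget (an all-ones row plus a staircase block) is plausible in spirit but unverified; in particular you would still need to show that no forbidden configuration from $F_{SC1P}$ can be created or destroyed by flips inside the gadget itself, and that every optimal solution can be assumed to flip only entries of the $M_G$ block (or that the restriction of an arbitrary solution to that block is still a chain completion). Until a specific gadget is written down and that lemma is proved, the reduction is not established.
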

 \begin{proof}
 We first show that $k$-$SC1P$-$0E$ $\in NP$. Given a matrix $M$ and an integer $k$, the certificate is a set $A'$ of indices corresponding to $0$-entries in $M$. The verification algorithm affirms that $\vert A' \vert \leq k$, and then it checks whether flipping these $0$-entries in $M$ yields a matrix with the $SC1P$. This verification can be done in polynomial time. 
 
 We prove that $k$-$SC1P$-$0E$ problem is NP-hard by showing that \textsc{$k$-Chain Completion} $\leq_{\mathsmaller {P}}$ $k$-$SC1P$-$0E$. The half-adjacency matrix of any chain graph can be observed to satisfy the $SC1P$, however the converse is not true. Given a bipartite graph $G=(V_{\mathsmaller{1}}, V_{\mathsmaller{2}}, E)$ with $V_{\mathsmaller{1}}=m$ and $V_{\mathsmaller{2}}=n$, we create a $2m \times 2n$ binary matrix $M_{G_{new}}$ as follows. $M_{G_{new}}= \begin{bmatrix}J_{m,n} &M_\mathsmaller{G}\\0_{m,n} &J_{m,n}\end{bmatrix} $ = $\begin{bmatrix}A &B \\D &C\end{bmatrix}$, where $M_{\mathsmaller{G}}$ is the half adjacency matrix of $G$, $J_{m,n}$ is an $m \times n$ matrix with all entries as one and $0_{m,n}$ is an $m \times n$ matrix with all entries as zero. It can be noted that adding an edge in $G$ corresponds to flipping a $0$-entry in $B$. We show that  $G$ can be converted to a chain graph $G'$ by adding at most $k$ edges if and only if there are at most $k$ number of $0$-flippings in $M_{G_{new}}$, such that the resultant matrix $M_{G'_{new}}$ satisfies the $SC1P$.

\indent Suppose $G'$ is a chain graph, then $\begin{bmatrix} 1 &0\\ 0 &1
\end{bmatrix}$ cannot occur exclusively in $B$ (from Lemma \ref{chnrs}). By construction of $M_{G_{new}}$, it can be observed that $\begin{bmatrix} 1 &0 &0\\ 0 &1 &0 \end{bmatrix} $ and $\begin{bmatrix} 1 &0\\ 0 &1\\ 0 &0 \end{bmatrix}$ cannot occur as submatrices in $M_{G'_{new}}$. From Figure \ref{forbidden}, it is clear that one of the configurations of these two matrices occur as a submatrix in all the forbidden submatrices of the SC1P, except $M_{I_{1}}$. Hence $M_{I_1}$ is the only forbidden submatrix of the $SC1P$ that could appear in $M_{G'_{new}}$. However, if $M_{G'_{new}}$ contains $M_{I_1}$, then it would further imply that $B'$(matrix obtained after flipping the 0-entries of $B$) contains $\begin{bmatrix} 1 &0\\ 0 &1 \end{bmatrix}$ as a submatrix, which contradicts the assumption that $G'$ is a chain graph. Therefore, if $k$ edges can be added to $G$ to make it a chain graph, then $k$ $0$-entries can be flipped in $M_{G_{new}}$ to make it satisfy the $SC1P$.

Conversely, suppose that $k$=$k_1$+$k_2$ $0$-flippings are performed on $M_{G_{new}}$ to make it satisfy the $SC1P$, where $k_1$ and $k_2$ refer to the number of $0$-flippings performed in $B$ and $D$ respectively. Let us assume that the corresponding bipartite graph $G'$, obtained after the flipping of zeroes in $B$ is not a chain graph. Since $G'$ is not a chain graph, it contains $2K_2$ as an induced subgraph, which further means that $B'$ contains $\begin{bmatrix}1 &0\\0 &1 \end{bmatrix} $ as a submatrix. The construction of $M_{G_{new}}$ implies that $M_{G'_{new}}$ has $M_{I_1}$ as a submatrix (considering the remaining $3$ quadrants of $M_{G'_{new}}$), which leads to a contradiction. Hence $G'$ is a chain graph. Therefore, $k$-$SC1P$-$0E$ is NP-complete.
\end{proof}
The following corollary on \textsc{Biconvex Completion} problem is a direct consequence of the above theorem.
\begin{corollary}
Given a bipartite graph $G=(V_{1},V_{2},E)$ and a non-negative integer $k$, the problem of deciding whether there exists a set $E' \subseteq (V_{1} \times V_{2})\backslash E$, of size at most $k$, such that $G=(V_{1},V_{2},E \cup E')$ is a biconvex graph is NP-complete.
\end{corollary}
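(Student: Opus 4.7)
The plan is to prove the corollary by a short polynomial-time reduction from the $k$-$SC1P$-$0E$ problem, whose NP-completeness has just been established in Theorem \ref{thm4}. Membership in NP is straightforward: the certificate is the claimed set $E' \subseteq (V_{1} \times V_{2}) \setminus E$ with $|E'| \leq k$, and by Lemma \ref{biconvexlemm}, biconvexity of a bipartite graph is equivalent to its half-adjacency matrix having the $SC1P$, which is testable in linear time (as noted in the Introduction). So the verifier runs in polynomial time.

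For NP-hardness, I would exploit the bijective correspondence between binary matrices and bipartite graphs via the half-adjacency matrix / representing graph construction (Definitions \ref{defn4} and \ref{defn5}). Given an instance $\langle M, k \rangle$ of $k$-$SC1P$-$0E$ with $M$ an $m \times n$ binary matrix, construct the bipartite graph $G_{M}=(V_{1},V_{2},E)$ by taking $V_{1}$ to be an $m$-element row-vertex set, $V_{2}$ an $n$-element column-vertex set, and placing $\{u_{i},v_{j}\} \in E$ iff $m_{ij}=1$. This construction is clearly polynomial in the size of $M$, and outputs $\langle G_{M}, k \rangle$ as the target Biconvex Completion instance.

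The equivalence of the two instances rests on two immediate observations: (i) by Lemma \ref{biconvexlemm}, the matrix obtained from $M$ by flipping the $0$-entries indexed by a set $B' \subseteq B(M)$ satisfies the $SC1P$ if and only if the bipartite graph obtained from $G_{M}$ by adding the corresponding non-edges is biconvex; and (ii) the natural map sending each $0$-entry $(i,j)$ of $M$ to the non-edge $\{u_{i},v_{j}\}$ of $G_{M}$ is a size-preserving bijection between $B(M)$ and $(V_{1} \times V_{2}) \setminus E$. Taken together, these yield that $\langle M, k \rangle$ is a yes-instance of $k$-$SC1P$-$0E$ iff $\langle G_{M}, k \rangle$ is a yes-instance of Biconvex Completion, completing the reduction.

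The only mild obstacle is ensuring that the NP-membership check is genuinely polynomial, but this is handled by the linear-time $SC1P$ recognition procedure mentioned in the Introduction combined with Lemma \ref{biconvexlemm}. All other steps are definitional; the reduction is essentially the identity map passed through the representing-graph construction, so no deeper combinatorial argument is needed.
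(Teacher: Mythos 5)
Your proposal is correct and matches the paper's intent exactly: the paper offers no explicit proof, stating only that the corollary is a ``direct consequence'' of Theorem \ref{thm4}, and the implied argument is precisely your reduction via the half-adjacency matrix / representing graph correspondence (Lemma \ref{biconvexlemm}), under which $0$-flippings and edge additions are in size-preserving bijection. You have simply written out the routine details the authors left implicit.
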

The following theorem proves the NP-completeness of the $k$-$SC1P$-$01E$ problem using the \textsc{$k$-Chain Editing} problem (Definition \ref{chain}) on bipartite graphs as a candidate problem. 
 \begin{theorem}
 The $k$-$SC1P$-$01E$ problem is NP-complete.
 \end{theorem}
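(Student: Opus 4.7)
The plan is to mirror the preceding NP-completeness argument for $k$-$SC1P$-$0E$, but to reduce now from the $k$-\textsc{Chain Editing} problem (Definition \ref{chain}), shown NP-complete by Drange et al. Membership of $k$-$SC1P$-$01E$ in NP is immediate: a witness consisting of at most $k$ entry positions can be checked in polynomial time by performing the flips and running a linear-time $SC1P$-recognition procedure.

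For hardness, given a bipartite graph $G=(V_{\mathsmaller 1},V_{\mathsmaller 2},E)$ and parameter $k$, the plan is first to pad $G$ by adjoining $k+1$ isolated vertices to each side, obtaining $G^{*}$ with $\vert V_{\mathsmaller 1}^{*} \vert = m$ and $\vert V_{\mathsmaller 2}^{*} \vert = n$, so that $m,n \geq k+1$. This padding preserves $k$-chain-editability: chain graphs are closed under the addition and the removal of isolated vertices, hence any $\leq k$-edit chain editing of $G$ extends to one of $G^{*}$ of the same size, and conversely any $\leq k$-edit chain editing of $G^{*}$ restricts to one of $G$ of no larger size. Now build the block matrix of Theorem \ref{thm4}
\[
M_{G_{new}} \;=\; \begin{bmatrix} J_{m,n} & M_{G^{*}} \\ 0_{m,n} & J_{m,n}\end{bmatrix} \;=\; \begin{bmatrix} A & B \\ D & C \end{bmatrix},
\]
so that adding (respectively deleting) an edge in $G^{*}$ corresponds to a single $0\!\to\!1$ (respectively $1\!\to\!0$) flip inside $B$. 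The forward direction runs exactly as in Theorem \ref{thm4}: a $\leq k$-edit chain editing of $G^{*}$ translates into $\leq k$ flips inside $B$ while leaving $A,C,D$ intact, so neither $\begin{bmatrix}1&0&0\\0&1&0\end{bmatrix}$ nor $\begin{bmatrix}1&0\\0&1\\0&0\end{bmatrix}$ occurs as a submatrix; by the same analysis the only candidate forbidden submatrix from $F_{SC1P}$ is $M_{\mathsmaller{I_{1}}}$, which is ruled out because its $\begin{bmatrix}1&0\\0&1\end{bmatrix}$ core would have to sit inside $B$, contradicting the chain property of the edited $G^{*}$.

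The converse is the heart of the argument. Suppose $F = F_{A} \sqcup F_{B} \sqcup F_{C} \sqcup F_{D}$ is a $k$-flip editing of $M_{G_{new}}$ producing an $SC1P$ matrix, grouped by the block each flip touches. Let $G'$ be the bipartite graph obtained from $G^{*}$ by interpreting $F_{B}$ as edge additions/deletions; it suffices to show that $G'$ is a chain graph, for then $\vert F_{B} \vert \leq k$ is the required chain editing of $G^{*}$. Suppose for contradiction that $B' = M_{G'}$ contains a $\begin{bmatrix}1&0\\0&1\end{bmatrix}$ at rows $i, i'$ and columns $j, j'$. Since $A$ starts as $J_{m,n}$ and $\vert F_{A} \vert \leq k$ of its entries are flipped, at most $k$ columns $k_{\mathsmaller 0} \in [1,n]$ can fail the requirement $A'[i, k_{\mathsmaller 0}] = A'[i', k_{\mathsmaller 0}] = 1$; since $n \geq k+1$, a ``good'' $k_{\mathsmaller 0}$ exists. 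With $k_{\mathsmaller 0}$ fixed, at most $\vert F_{D} \vert + \vert F_{C} \vert \leq k$ rows $\ell_{\mathsmaller 0} \in [1,m]$ can simultaneously fail $D'[\ell_{\mathsmaller 0}, k_{\mathsmaller 0}] = 0$ together with $C'[\ell_{\mathsmaller 0}, j] = C'[\ell_{\mathsmaller 0}, j'] = 1$, so $m \geq k+1$ yields a ``good'' $\ell_{\mathsmaller 0}$. The nine entries selected from rows $i, i', m+\ell_{\mathsmaller 0}$ and columns $k_{\mathsmaller 0}, n+j, n+j'$ then form, after a suitable row/column permutation, an $M_{\mathsmaller{I_{1}}}$ inside the edited matrix, contradicting the $SC1P$.

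The main obstacle is precisely this converse direction: unlike the $0$-flipping setting, $01$-flippings can in principle touch the scaffolding blocks $A, C, D$ and neutralize forbidden submatrices that the raw construction relies on. The $\Theta(k)$ isolated-vertex padding supplies the slack needed for the two pigeonhole steps to locate clean witnesses of $M_{\mathsmaller{I_{1}}}$ regardless of how the $\leq k$ flips are distributed among the blocks, which is the only non-routine point in turning the reduction from $k$-\textsc{Chain Editing} into a polynomial-time many-one reduction establishing NP-completeness.
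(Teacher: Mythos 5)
Your proposal is correct and follows essentially the same route as the paper: a polynomial-time reduction from $k$-\textsc{Chain Editing} using the block-matrix gadget of Theorem \ref{thm4}, with extra padding so that the at most $k$ flips that may land in the constant blocks $A$, $C$, $D$ cannot erase the $M_{I_{1}}$ witnesses needed for the converse. The only differences are cosmetic: the paper pads by inflating the constant blocks to $J_{m,mn}$, $0_{mn,mn}$, $J_{mn,n}$, whereas you add $k+1$ isolated vertices to each side of $G$, and you spell out the pigeonhole argument for the converse direction that the paper leaves implicit as ``similar to Theorem \ref{thm4}.''
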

 \begin{proof}
We first show that $k$-$SC1P$-$01E$ $ \in NP$. Given a matrix $M$ and an integer
$k$, the certificate is a set $A'$ of indices corresponding to $0/1$-entries in $M$. The verification algorithm affirms that $ \vert A' \vert \leq k$, and then it checks whether flipping these $0/1$-entries in $M$ yields a matrix with the SC1P. This verification can be done in polynomial time.

\indent We prove that $k$-$SC1P$-$01E$ is NP-hard by showing that \textsc{$k$-Chain Editing} $\leq_{\mathsmaller {P}}$ $k$-$SC1P$-$01E$. The NP-hardness of $k$-$SC1P$-$01E$ can be proved similar to the NP-hardness of $k$-$SC1P$-$0E$ (Theorem \ref{thm4}) by considering $M_{G_{new}}$ as follows: $M_{G_{new}}$ = $\begin{bmatrix}J_{m,mn} &M_\mathsmaller{G} \\0_{mn,mn} &J_{mn,n}\end{bmatrix} $, where $G=(P,Q,E)$ is a bipartite graph, with $\lvert$P$\lvert$=$m$ and $\lvert$Q$\lvert$=$n$ and $M_\mathsmaller{G}$ being the half adjacency matrix of $G$. It can be noted that adding/removing an edge in $G$ corresponds to flipping a $0/1$-entry in $B$. We claim that $G$ can be converted to a chain graph $G'$ by adding/deleting at most $k$ edges if and only if there are at most $k$ number of $0/1$-flippings in $M_{G_{new}}$, such that the resultant matrix $M_{G'_{new}}$ satisfies the $SC1P$ (This can be proved similar to Theorem \ref{thm4}). 
\end{proof}
The following corollary on \textsc{Biconvex Editing} problem is a direct consequence of the above theorem.
\begin{corollary}
Given a bipartite graph $G=(V_{1},V_{2},E)$ and a non-negative integer $k$, the problem of deciding whether there exists a set of at most $k$ edge modifications (edge additions/deletions) in $G$, that results in a biconvex graph is NP-complete.
\end{corollary}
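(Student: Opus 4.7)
The plan is to reduce $k$-$SC1P$-$01E$ (just shown NP-complete in the preceding theorem) to \textsc{Biconvex Editing}, in complete analogy with how NP-hardness of $k$-$SC1S$-$RC$ was obtained earlier from \textsc{Biconvex Deletion}. First I would establish membership in NP: a certificate consists of at most $k$ pairs from $V_{1}\times V_{2}$ specifying the edges to add or delete, and biconvexity of the modified graph can be verified in polynomial time, either directly or by forming its half-adjacency matrix and applying the linear-time $SC1P$ test together with Lemma \ref{biconvexlemm}.

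For NP-hardness, given an instance $\langle M,k\rangle$ of $k$-$SC1P$-$01E$ with $M$ of size $m\times n$, I would construct the representing bipartite graph $G_{M}=(V_{1},V_{2},E)$ per Definition \ref{defn5}: one vertex $u_{i}\in V_{1}$ per row of $M$, one vertex $v_{j}\in V_{2}$ per column, and $\{u_{i},v_{j}\}\in E$ iff $m_{ij}=1$. This construction is clearly polynomial in $|M|$. The reduction then rests on a clean bijection between operations: flipping entry $m_{ij}$ is the same as adding $\{u_{i},v_{j}\}$ when $m_{ij}=0$ and deleting it when $m_{ij}=1$, and conversely every edge addition/deletion in $G_{M}$ is a single entry flip of $M$. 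In particular, a set of at most $k$ entry flips maps to a set of at most $k$ edge modifications and vice versa. By Lemma \ref{biconvexlemm}, the modified matrix has the $SC1P$ iff the modified bipartite graph is biconvex, so $\langle M,k\rangle$ is a yes-instance of $k$-$SC1P$-$01E$ iff $\langle G_{M},k\rangle$ is a yes-instance of \textsc{Biconvex Editing}.

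The argument is essentially syntactic, so the only subtle point to verify is that the bipartition of $G_{M}$ is fixed by the construction, i.e., no edge may be proposed between two vertices of $V_{1}$ or two vertices of $V_{2}$; this is automatic from Definition \ref{defn5} because edges of $G_{M}$ correspond only to row-column pairs, not row-row or column-column pairs. The parameter $k$ is preserved exactly. Combined with membership in NP, this establishes the NP-completeness of \textsc{Biconvex Editing} and yields the stated corollary. The main conceptual obstacle, such as it is, lies not in this reduction but in the preceding theorem on $k$-$SC1P$-$01E$, which already does the heavy lifting via the block construction used for \textsc{$k$-Chain Editing}.
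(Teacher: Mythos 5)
Your proposal is correct and matches the paper's intended argument: the paper states this corollary as a direct consequence of the NP-completeness of $k$-$SC1P$-$01E$, relying on exactly the correspondence you spell out (matrix $\leftrightarrow$ representing/half-adjacency bipartite graph, entry flips $\leftrightarrow$ edge additions/deletions, $SC1P$ $\leftrightarrow$ biconvexity via Lemma \ref{biconvexlemm}). You have simply made explicit the routine details the paper leaves implicit, including NP membership and the preservation of the parameter $k$.
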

\subsubsection{\textup{\textbf{An FPT algorithm for $d$-$SC1P$-$0E$ problem}}} \label{fpt0e}
In this section, we present an FPT algorithm \textit{$d$-$SC1P$-$0$-Flipping} (Algorithm \ref{alg3}), for $d$-$SC1P$-$0E$ problem on general matrices. Given a binary matrix $M$ and a non-negative integer $d$, Algorithm \ref{alg3} destroys forbidden submatrices from $F_{SC1P}$ in $M$, using a simple search tree based branching algorithm. The algorithm recursively branches, if $M$ contains a forbidden matrix from $X$ (see Section \ref{defn1}) as well as $M_{\mathsmaller I_{k}}$ or $M^{T}_{\mathsmaller I_{k}}$ (where $k \geq 1)$. If $M$ contains a forbidden matrix from $X$, then the algorithm branches into at most eighteen subcases, since the largest forbidden matrix of $X$ has eighteen $0$-entries. In each subcase, flip one of the $0$-entry of the forbidden submatrix found in $M$ and decrement the parameter $d$ by one. Otherwise, if $M$ contains a forbidden submatrix of type $M_{\mathsmaller I_{k}}$ or $M^{T}_{\mathsmaller I_{k}}$, then the algorithm finds a minimum size forbidden matrix $M'$, of type $M_{\mathsmaller I_{k}}$ or $M^{T}_{\mathsmaller I_{k}}$ in $M$. If the value of $k$ is greater than $d$, then the algorithm returns No (using Corollary \ref{prp3}), otherwise the algorithm branches into at most $O(7^{d})$-subcases (using Lemma \ref{obs2}). In each subcase, flip $k$ $0$-entries of the forbidden submatrix $M'$ found in $M$, and decrement the parameter $d$ by $k$. This process is continued in each subcase, until its $d$ value becomes zero or until it satisfies the $SC1P$. Algorithm \ref{alg3} returns Yes if any of the subcases returns Yes, otherwise it returns No. 

Flipping a $0$-entry in $M$ is equivalent to adding an edge in the representing graph $G_{\mathsmaller{M}}$ of $M$. From this fact and Lemma \ref{prop3}, it follows that to destroy $M_{\mathsmaller I_{k}}$ and $M^{T}_{\mathsmaller I_{k}}$ in $M$, it is sufficient to destroy chordless cycles of length greater than four in $G_{\mathsmaller{M}}$ (i.e make $G_{\mathsmaller{M}}$ a chordal bipartite graph (Section \ref{prel}) by addition of edges). The number of zero flippings required to destroy an $M_{\mathsmaller I_{k}}$ or $M^{T}_{\mathsmaller I_{k}}$, where $(k \geq 1)$ is given in Corollary \ref{prp3}.
\begin{corollary}\label{prp3}
The minimum number of $0$-flippings required to destroy an $M_{I_{k}}$ or $M^{T}_{\mathsmaller I_{k}}$, where $(k \geq 1)$ is $k$.
\end{corollary}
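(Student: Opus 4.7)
The plan is to translate the corollary into a graph-modification statement on the representing graph and then invoke Lemma \ref{main1} verbatim. First, by Lemma \ref{prop3}, the representing graph $G_{M_{I_{k}}}$ (equivalently $G_{M^{T}_{I_{k}}}$) is a chordless cycle of length $2k+4$, so setting $n=k+2$ we are precisely in the setting of Lemma \ref{main1} with a bipartite cycle of length $2n$ where $n\geq 3$.

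Next, I would observe that flipping a $0$-entry $m_{ij}$ of the matrix corresponds bijectively to inserting the edge $\{r_{i},c_{j}\}$ between the row-vertex and the column-vertex in the representing graph. Thus destroying the forbidden submatrix $M_{I_{k}}$ (resp.\ $M^{T}_{I_{k}}$) by flipping $0$-entries only within its rows and columns is equivalent to adding edges to $G_{M_{I_{k}}}$ until the resulting bipartite graph contains no induced chordless cycle of length $\geq 6$, i.e.\ until it becomes chordal bipartite. The equivalence uses Tucker's characterization (Theorem \ref{thm2}) together with Lemma \ref{prop3}: after the flips, every forbidden $M_{I_{k'}}/M^{T}_{I_{k'}}$ that still lives on this row/column set would correspond to an induced cycle of length $2k'+4\geq 6$ in the modified graph.

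With this dictionary in hand, the result is immediate from Lemma \ref{main1}: the minimum number of edges needed to turn a chordless cycle of length $2n$ into a chordal bipartite graph is exactly $n-2$, which with $n=k+2$ equals $k$. This gives both directions simultaneously: (i) the upper bound, by taking any triangulation of the $(2k+4)$-cycle guaranteed by Lemma \ref{main1} and flipping the corresponding $k$ zero-entries; and (ii) the lower bound, because any collection of fewer than $k$ edge additions fails to make the cycle chordal bipartite and hence leaves an induced chordless cycle of length $\geq 6$, which by Lemma \ref{prop3} corresponds to a surviving $M_{I_{k'}}$ or $M^{T}_{I_{k'}}$ inside the original forbidden submatrix.

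The only subtle point, and therefore the main thing I would be careful about, is the restriction that the $0$-flippings under consideration be confined to the rows and columns of the specific $M_{I_{k}}$ being destroyed; flipping entries outside these rows and columns clearly cannot affect the submatrix induced on them, so it never helps and the counting argument stays clean. Once this is made explicit, no further calculation is required beyond citing Lemma \ref{main1}.
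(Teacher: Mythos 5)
Your argument is correct and is essentially the paper's own proof, which simply cites Lemma \ref{main1} and Lemma \ref{prop3}: the representing graph of $M_{I_{k}}/M^{T}_{I_{k}}$ is a chordless cycle of length $2(k+2)$, $0$-flips are edge additions, and Lemma \ref{main1} with $n=k+2$ gives exactly $n-2=k$ additions. Your version just spells out the dictionary and the lower-bound direction more explicitly than the paper does.
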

\begin{proof}
It follows from Lemma \ref{main1} and Lemma \ref{prop3}.
\end{proof}

\begin{algorithm}[h]
\caption{Algorithm  \textit{$d$-$SC1P$-$0$-Flipping}$(M,d)$\label{alg3}}
\begin{algorithmic}[1] 
 \Require An instance $\langle M_{m \times n},d \rangle$, where $M$ is a binary matrix and $d \geq 0$.\vspace{-0.1 in} \[ \hspace{-0.26 in}\textbf{Output} = \begin{dcases*} Yes,  & if there exists a set $B^{'} \subseteq B(M)$, of indices with $\vert B^{'} \vert \leq d$, such that the resultant \\  
        & matrix obtained by  flipping the entries of  $B^{'}$ satisfy the $SC1P$.  
   \\  
   No,    & otherwise
\end{dcases*}
\] \vspace{-0.2 in} 
 \State \textbf{if} {$M$ has the $SC1P$ and $d \geq 0$} \textbf{then} return Yes. \vspace{-0.1 in}
 \State \textbf{if}  {$d < 0$} \textbf{then} return No.  \vspace{-0.05 in} \newline \vspace{-0.05 in}
\noindent \textit{\textbf{Branching Step:}}
 \vspace{0.03 in}
\State \textbf{if} {$M$ contains a forbidden submatrix $M'$ from $X$} \textbf{then},\vspace{-0.08 in}
\State  \indent Branch into at most $18$  instances $I_{i}=\langle M_{i}, d_{i}\rangle$ where $i \in \{1,2,\ldots,18\}$ \vspace{-0.08 in} \newline \vspace{-0.08 in} 
 \indent Set $M_{i} \leftarrow M$ with $i^{th}$ $0$-entry of $M^{'}$ flipped (where $0$-entries of $M'$ are labelled \vspace{-0.05 in} \indent \indent \indent \indent \indent \indent \indent \indent \indent \indent \indent in row-major order). \vspace{-0.03 in} \vspace{-0.03 in} \newline  \vspace{-0.03 in}
 \indent Update $d_{i} \leftarrow d -1$  \hspace{1.0 in}// Decrement parameter by 1.
\vspace{-0.03 in} \newline  \vspace{-0.03 in}
\noindent For some $i \in \{1,2,\ldots,18\}$, if \textit{$d$-$SC1P$-$0$-Flipping}$(M_{i},d_{i})$ returns Yes, then \vspace{-0.05 in} \newline \vspace{-0.08 in} return Yes, else if all instances return No, then return No.  
\vspace{0.05 in}
\State  \textbf{if} {$M$ contains either $M_{I_{k}}$ or $M^{T}_{\mathsmaller{I_{k}}}$} \textbf{then},\vspace{-0.08 in}
\State  \indent  Find a minimum size $M_{I_{k}}$ or $M_{I_{k}}^{T}$ in $M$, (say $M'$) \vspace{-0.08 in}
\State  \indent \textbf{if} {$k > d$}, return No.\vspace{-0.08 in}
\State  \indent \textbf{else}\vspace{-0.08 in}
 \State  \indent \indent Branch into at most $O(7^{k})$ (number of ways to destroy $M'$) instances 
 
 \vspace{-0.08 in}  \indent \indent \indent \indent \indent \indent \indent \indent $I_{i}=\langle {M_{i}}, d_{i}\rangle$ where $i \in \{1,2,\ldots, 7^{k}\}$. \vspace{-0.05 in}\vspace{-0.05 in}
\State  \indent \indent Set $M_{i} \leftarrow M$ with $k$ appropriate $0$-entries of $M'$ flipped.  \vspace{-0.05 in}
 \State  \indent \indent Update $d_{i} \leftarrow d-k$  \hspace{1.0 in}// Decrement parameter by $k$.\vspace{-0.08 in}
\State  \indent \textbf{end if} \vspace{-0.08 in}
\State  \textbf{end if} \vspace{-0.08 in}
\newline \vspace{-0.08 in}
\hspace{-0.15 in} For some $i \in \{1,2,\ldots,O(7^{k})\}$, if \textit{$d$-$SC1P$-$0$-Flipping}$(M_{i},d_{i})$ returns  Yes, then return Yes, else if all instances return No, then return No.
\end{algorithmic}
\end{algorithm}

\begin{Observation}
The number of $0$-entries in an $M_{I_{k}}$ or $M^{T}_{\mathsmaller I_{k}}$, where $(k \geq 1)$ is $O(k^{2})$.
\end{Observation}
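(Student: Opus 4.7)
The plan is a direct counting argument using the explicit structure of $M_{I_k}$ shown in Figure \ref{forbidden}. First, I would record the dimensions: by definition $M_{I_k}$ is a $(k+2) \times (k+2)$ matrix, so its total number of entries is $(k+2)^{2}$. Next, I would use the fact that in the displayed form of $M_{I_k}$, each row contains exactly two $1$-entries (namely the pair $\{j,j+1\}$ in row $j$, with indices taken cyclically for the wrap-around last row $(1, \ldots, 0, 1)$). This gives the number of $1$-entries as $2(k+2)$.

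Subtracting, the number of $0$-entries in $M_{I_k}$ equals
\[
(k+2)^{2} - 2(k+2) \;=\; (k+2)\bigl((k+2)-2\bigr) \;=\; k(k+2) \;=\; k^{2}+2k,
\]
which is $O(k^{2})$. Since $M^{T}_{I_k}$ has the same set of entries as $M_{I_k}$ (just transposed, and $M_{I_k}$ is square), the same bound applies to $M^{T}_{I_k}$. A one-line appeal to Lemma \ref{prop3} can also be used as a sanity check: the representing graph is a chordless cycle of length $2k+4$, hence has exactly $2k+4 = 2(k+2)$ edges, which matches the count of $1$-entries above.

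There is no real obstacle here; the only thing to be careful about is reading off the row weights correctly from the displayed matrix (in particular noting that the last row contributes a $1$ in column $1$ as well as column $k+2$, so every row indeed has exactly two ones and the $1$-count is $2(k+2)$ rather than $2(k+2)-1$). Once this is confirmed, the $O(k^{2})$ bound is immediate.
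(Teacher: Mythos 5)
Your proof is correct: the paper states this Observation without any proof, and your direct count — $(k+2)^2$ total entries minus $2(k+2)$ ones (each of the $k+2$ rows of $M_{I_k}$ has exactly two ones, including the wrap-around last row), giving $k(k+2)=O(k^2)$ zeros — is exactly the intended justification, with the cross-check against Lemma \ref{prop3} ($2k+4$ edges in the representing cycle) confirming the count of ones. Nothing further is needed.
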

The above observation leads to a $O^{*}(d^{2d})$ algorithm for $d$-$SC1P$-$0E$. But, using the result of the following lemma, we get a $O^{*}(18^{d})$ algorithm for $d$-$SC1P$-$0E$.
\begin{lemma}\label{obs2}
Given a bipartite graph $H=(V_{\mathsmaller{1}}, V_{\mathsmaller{2}}, E)$, which is an even chordless cycle of length $2n$ (where $n \geq 3$), the number of ways to make $H$ a chordal bipartite graph by adding $n$-$2$ edges is at most $6.75^{n-1}$.
\end{lemma}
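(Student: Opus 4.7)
The plan is to combine the three preceding lemmas (Lemma \ref{main1}, Lemma \ref{tern}, Lemma \ref{main2}) to replace the crude bound $8^{n-1}$ by a tighter one. By Lemma \ref{main1}, the quantity we wish to bound equals the number of ternary trees with $n-1$ internal nodes. By Lemma \ref{tern}, this number has the closed form
\[
T_{n-1} \;=\; \frac{\binom{3(n-1)}{n-1}}{2(n-1)+1} \;=\; \frac{1}{2n-1}\binom{3n-3}{n-1}.
\]
So the task reduces to showing that for every integer $n \geq 3$,
\[
\frac{1}{2n-1}\binom{3n-3}{n-1} \;\leq\; 6.75^{\,n-1}.
\]

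To see where the constant $6.75$ comes from, I would set $m = n-1$ and apply Stirling's approximation (Lemma \ref{main2}) to each of the three factorials in $\binom{3m}{m} = (3m)!/(m!(2m)!)$. The polynomial prefactors combine to $\sqrt{6\pi m}/(\sqrt{2\pi m}\cdot\sqrt{4\pi m}) = \sqrt{3}/(2\sqrt{\pi m})$, and the exponential part collapses to $(3m)^{3m}/(m^m (2m)^{2m}) = (27/4)^m = 6.75^{\,m}$. Hence
\[
\binom{3m}{m} \;\sim\; \frac{\sqrt{3}}{2\sqrt{\pi m}}\cdot 6.75^{\,m},
\]
so in particular $\binom{3m}{m}/(2m+1) = O(6.75^{\,m}/m^{3/2})$, which is well under $6.75^{\,m}$.

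To turn this asymptotic statement into a clean uniform bound, I would proceed in two steps. First, I would use a non-asymptotic form of Stirling (e.g.\ the two-sided bounds $\sqrt{2\pi k}\,(k/e)^k \leq k! \leq \sqrt{2\pi k}\,(k/e)^k e^{1/(12k)}$) to obtain an explicit inequality of the form $\binom{3m}{m} \leq \frac{\sqrt{3}}{2\sqrt{\pi m}}\, 6.75^{\,m}\, e^{1/(36m)}$ that is valid for all $m \geq 1$; dividing by $2m+1$ only makes the bound smaller, so the claim follows for all $m \geq 1$ (equivalently $n \geq 2$). Second, as a sanity check, I would verify the base case $n=3$ (that is $m=2$) directly: $\binom{6}{2}/5 = 3$, while $6.75^{2} \approx 45.56$, so there is substantial slack. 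Analogous direct verification handles any small cases not covered by the Stirling-based estimate.

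The only subtle point is to ensure the constant-factor adjustment from Stirling's approximation is handled rigorously rather than only in the limit, since Lemma \ref{main2} is stated as a limit. This is the step most likely to require care in the write-up: one either cites a standard two-sided Stirling inequality, or observes that the ratio $\binom{3m}{m}/(6.75^{\,m})$ is decreasing in $m$ (which can be checked from the telescoping quotient $\binom{3(m+1)}{m+1}/\binom{3m}{m} = (3m+1)(3m+2)(3m+3)/((m+1)(2m+1)(2m+2))$, whose limit as $m\to\infty$ is exactly $27/4$ and which exceeds $27/4$ only by a vanishing amount — so the maximum of the ratio is attained in the limit), and then combines this monotonicity with the Stirling limit to conclude.
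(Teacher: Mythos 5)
Your proposal follows essentially the same route as the paper: invoke Lemma \ref{main1} to reduce to counting ternary trees, take the closed form $\binom{3m}{m}/(2m+1)$ from Lemma \ref{tern}, and extract the constant $27/4$ via Stirling — indeed you are more careful than the paper, whose proof only establishes the asymptotic $O(6.75^{n-1})$ rather than a uniform bound. One small correction to your monotonicity fallback: the telescoping quotient $\frac{(3m+1)(3m+2)(3m+3)}{(m+1)(2m+1)(2m+2)}$ is in fact always \emph{strictly below} $27/4$ (the comparison reduces to $0<9m+5$), so the ratio $\binom{3m}{m}/6.75^{m}$ is decreasing with its maximum at $m=1$, which only makes your argument cleaner.
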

\begin{proof}
Number of ways to make $H$ a chordal bipartite graph = Number of ternary trees with $n$-$1$ internal nodes (using Lemma \ref{main1}).

Number of ternary trees with $n$ internal nodes = $ \dfrac{{\binom {3n+1} {n}}}{3n+1}$
 
\hspace{1.5 in}= $\dfrac{{\binom {3n} {n}}}{2n+1}$ = $\dfrac{(3n)!}{(2n+1)(2n)!n!}$

$\lim_{n\to\infty} n! = {\sqrt{2\pi{n}}{(\dfrac{n}{e})}^{n}}$ (using Lemma \ref{main2}).

$\lim_{n\to\infty} \dfrac{{\binom {3n} {n}}}{2n+1} = \dfrac{{\sqrt{2\pi (3n)}{(\dfrac{3n}{e})}^{3n}}} {{\sqrt{2\pi (2n)}{(\dfrac{2n}{e})}^{2n}}\times {\sqrt{2\pi (n)}{(\dfrac{n}{e})}^{n}} \times (2n+1)}$

\hspace{1.5 in} $ = \dfrac{ {\sqrt{3}}\times3^{3n} }{ {\sqrt{4\pi{n}}}\times2^{2n}\times(2n+1) } $

$ \dfrac{{\binom {3n} {n}}}{2n+1} = O({\dfrac{3^{3n}}{\sqrt{n}\times2^{2n}\times(2n+1)}}) \sim O({\dfrac{3^{3n}}{2^{2n}}}) = O(6.75^n) $\\

Therefore, number of ternary trees with $n$ internal nodes = $O(6.75^n)$.
 
Hence, the number of ways to make $H$ a chordal bipartite graph is same as the number of ternary trees with $n$-$1$ internal nodes and is $O(6.75^{n-1})$. 
\end{proof}
\begin{lemma}\label{lem2}
In Algorithm \ref{alg3}, destroying $M_{\mathsmaller I_{k}}/M_{\mathsmaller I_{k}}^{T}$ takes $O^{*}(6.75^{d})$-time.
\end{lemma}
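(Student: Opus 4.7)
The plan is to show that the second block of Algorithm \ref{alg3} (lines 5--13, which branches on $M_{I_k}$ or $M_{I_k}^T$) spawns a recursion tree with at most $6.75^{\,d}$ leaves and does polynomial work per node, thereby contributing $O^{*}(6.75^{\,d})$ to the running time. First I would set up the recurrence. By Lemma \ref{prop3}, a minimum-size $M_{I_k}/M_{I_k}^T$ in $M$ has representing graph a chordless cycle of length $2(k+2)$, so by Lemma \ref{main1} together with Lemma \ref{tern} the number of $0$-flipping patterns that destroy it equals the number $B(k)=\binom{3(k+1)}{k+1}/(2k+3)$ of ternary trees with $k+1$ internal nodes. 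By Corollary \ref{prp3} each such destruction consumes exactly $k$ flippings, so the parameter drops by $k$, and if $k>d$ the algorithm correctly returns No; hence only values $1\le k\le d$ populate the tree.

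The key step is a tight per-step branching bound: I claim $B(k)\le 6.75^{\,k}$ for every $k\ge 1$, which is strictly sharper than the $B(k)\le 6.75^{\,k+1}$ that falls out of Lemma \ref{obs2}. I would prove it by induction using the explicit ratio
\[
\frac{B(k)}{B(k-1)}=\frac{3(3k+2)(3k+1)}{(2k+2)(2k+3)},
\]
and verifying that it is at most $27/4=6.75$; clearing denominators reduces this to the obvious inequality $54k+46\ge 0$. Combined with the base case $B(1)=3<6.75$, induction yields $B(k)\le 6.75^{\,k}$ for all $k\ge 1$.

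Finally, let $L(d)$ denote the number of leaves of the $M_{I_k}/M_{I_k}^T$-subtree starting from parameter $d$. Then $L(0)=1$ and $L(d)\le \max_{1\le k\le d}B(k)\cdot L(d-k)$, so an immediate induction gives
\[
L(d)\le \max_{k}\,6.75^{\,k}\cdot 6.75^{\,d-k}=6.75^{\,d}.
\]
Locating a minimum-size $M_{I_k}/M_{I_k}^T$ at each node costs $O(n^{3}m^{3})$ by Lemma \ref{prop2}, and constructing the children (i.e., enumerating the triangulations and applying the corresponding flips) is polynomial, so the total time is $L(d)$ times a polynomial in $m,n$, that is $O^{*}(6.75^{\,d})$. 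The main obstacle is precisely the sharpening from $6.75^{\,k+1}$ to $6.75^{\,k}$: using Lemma \ref{obs2} naively in the recurrence loses a factor of $6.75$ at every branching step and blows up to $6.75^{\,2d}$, so the ratio computation above is essential for hitting the stated $6.75^{\,d}$ target.
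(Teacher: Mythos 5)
Your proof is correct, but it closes the argument by a different route than the paper. The paper does not bound each cycle separately: it writes the total number of leaves as the product $\prod_{n}\phi(i_n)$ of ternary-tree counts over all the pairwise-disjoint chordless cycles (with $\sum_n i_n\le d$), silently replaces that product by a single binomial coefficient of the form $\binom{3d}{d}/(2d+1)$, and then invokes Stirling's approximation (Lemma \ref{main2}) to conclude $O(3^{3d}/2^{2d})=O(6.75^{d})$ leaves. You instead establish the exact, non-asymptotic per-cycle bound $B(k)\le 6.75^{k}$ by computing the ratio $B(k)/B(k-1)=\tfrac{3(3k+2)(3k+1)}{(2k+2)(2k+3)}\le \tfrac{27}{4}$ and inducting from $B(1)=3$, then feed it into the standard search-tree recurrence $L(d)\le\max_{1\le k\le d}B(k)\,L(d-k)\le 6.75^{d}$. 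Your version buys two things: it avoids the limit statement in Lemma \ref{main2} entirely (the paper's use of $\lim_{n\to\infty}$ to bound finite products is an asymptotic shortcut), and it makes explicit the superadditivity that the paper uses implicitly when it collapses $\prod_n\binom{3i_n}{i_n}$ into one Stirling computation at argument $d$. You are also right that the quantity that must be controlled is $6.75^{k}$ per cycle rather than the $6.75^{k+1}$ read off directly from Lemma \ref{obs2} (with $n=k+2$); the paper sidesteps this by never applying Lemma \ref{obs2} cycle-by-cycle inside this proof, whereas you repair it head-on with the ratio calculation. The paper's approach is shorter; yours is more self-contained and rigorous at the same $O^{*}(6.75^{d})$ bound.
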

\begin{proof}
Let $\phi(k)$ represent the number of $4$-cycle decompositions (Section \ref{4-cycle}) of a $2(k + 2)$-cycle, or rather the representing graph of an $M_{1_{k}}$ or $M_{1_{k}}^{\mathsmaller T}$, where $k \geq 1$. Using Lemma \ref{tern}, $\phi(k)= \frac {\binom {3k+3} {k+1}} {2k+3}$. Let there be chordless cycles of sizes $2(i_{1}+2)$, $2(i_{2}+2)$, \ldots, $2(i_{m}+2)$ (in the non-decreasing order of size) in the representing graph of the input matrix, and let $d$ be the number of allowed edge additions. Since $i_{n}$ is equal to the number of edges to be be added to the $n^{th}$ smallest cycle in the representing graph of the input matrix, we get $\sum_{n=1}^{m} i_{n} \leq d$. Then, the number of leaves associated with the removal of $M_{I_{k}}/M_{I_{k}}^{T}$ in the search tree is given by:

$T(k)=\phi(i_{1}) \times \phi(i_{2}) \times \phi(i_{3}), \ldots \phi(i_{k}) =\prod_{n=1}^{m} \phi(i_{n})$\\ 
\indent \indent \indent$\leq \prod_{n=1}^{m} \phi(i_{n}) \frac{(2i_{n}+3)(i_{n}+1)}{3i_{n}+1}$,  where $\sum_{n=1}^{m} i_{n} \leq d$.\\ \vspace{0.05 in}
\indent \indent \indent $= \prod_{n=1}^{m} \binom {3.i_{n}} {i_{n}}$ \\ \vspace{0.05 in}
\indent \indent \indent $=O(\frac{\sqrt{2\pi(3d)}(\frac{3d}{e})^{3d}}{\sqrt{2 \pi(2d)}(\frac{2d}{e})^{2d}.\sqrt{2 \pi d}(\frac{d}{e})^{d}})$ \hspace{0.8 in}(Using Lemma \ref{main2}).\\
\indent \indent \indent$=O(\frac {3^{3d}}{\sqrt{d}. 2^{2d}}) = O(\frac{3^{3d}}{2^{2d}})= O(6.75^{d})$.\\
Hence, destroying all $M_{\mathsmaller I_{k}}/M^{T}_{\mathsmaller I_{k}}$ (where $k \geq 1$), in $M$ takes $O^{*}(6.75^{d})$-time.
\end{proof}
\begin{theorem} \label{rest2}
$d$-$SC1P$-$0E$ problem on a matrix $M_{m \times n}$, can be solved in $O^{*}(18^{d})$-time, where $d$ denotes the number of $0$-entries that can be flipped. Consequently, it is FPT.
\end{theorem}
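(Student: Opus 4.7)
The plan is to analyse Algorithm \ref{alg3} in two stages corresponding to the two branching rules, bound the number of leaves of the resulting search tree, and then multiply by the polynomial work per node. The correctness part is essentially exhaustive-search reasoning, while the running-time part combines the $18$-way branching with the bound proved in Lemma \ref{lem2}.

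First I would verify correctness of each branching rule. Whenever $M$ contains a forbidden submatrix $M'\in X$, any valid solution of budget $d$ must flip at least one of the $0$-entries of $M'$; since $M'$ has at most $18$ zero-entries, branching exhaustively on these entries is complete and drops the budget by $1$. Once no submatrix from $X$ remains, the only surviving forbidden configurations are of type $M_{I_k}$ or $M_{I_k}^T$. By Lemma \ref{prop3} these correspond to chordless cycles of length $2k+4$ in the representing graph, and by Corollary \ref{prp3} each such configuration requires exactly $k$ flips; hence if the minimum such $k$ already exceeds $d$, the algorithm correctly answers No, and otherwise it branches on all valid $4$-cycle decompositions, whose count is bounded by $6.75^{k-1}=O(6.75^{k})$ by Lemma \ref{obs2}.

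Next I would bound the run-time. Partition the $d$ flips performed along any root-to-leaf path into $d_1$ flips spent destroying matrices from $X$ and $d_2$ flips spent destroying $M_{I_k}/M_{I_k}^T$ configurations, with $d_1+d_2\leq d$. The first stage contributes at most $18^{d_1}$ leaves. For the second stage, Lemma \ref{lem2} shows that even when multiple disjoint $M_{I_k}/M_{I_k}^T$ are destroyed in sequence, the aggregate number of leaves is $O(6.75^{d_2})$, via the product $\prod_n \phi(i_n)$ estimated using Stirling's approximation (Lemma \ref{main2}). Combining the two stages gives $O(18^{d_1}\cdot 6.75^{d_2}) \leq O(18^{d_1+d_2}) \leq O(18^{d})$ leaves, since $6.75\leq 18$. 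The per-node polynomial work consists of an $SC1P$-test (linear in $m+n$) together with locating a forbidden submatrix: an element of $X$ takes $O(m^6 n)$-time by Lemma \ref{prop1}, and a minimum-size $M_{I_k}/M_{I_k}^T$ takes $O(n^3 m^3)$-time by Lemma \ref{prop2}. Absorbing all polynomial factors into the $O^{*}$ notation yields the claimed $O^{*}(18^d)$ running time, and hence fixed-parameter tractability.

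The main obstacle is the second-stage bound: one has to argue that the $6.75^{k}$ factor associated with a single $M_{I_k}$-type obstruction multiplies correctly over the possibly many disjoint chordless cycles destroyed along a path. This is precisely the content of Lemma \ref{lem2}, which lets one collapse the product into a single $O(6.75^{d_2})$ bound. Once that lemma is in hand, verifying that $6.75 < 18$ and that the remaining overhead is polynomial is routine, so the overall analysis cleanly resolves to $O^{*}(18^d)$.
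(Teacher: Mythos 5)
Your proposal is correct and follows essentially the same route as the paper's own proof: the same two-stage analysis of Algorithm~\ref{alg3}, the same appeal to Lemma~\ref{lem2} (via Lemmas~\ref{obs2} and~\ref{main2}) for the $O^{*}(6.75^{d})$ bound on destroying the $M_{I_{k}}/M^{T}_{I_{k}}$ obstructions, and the same combination $18^{d_1}\cdot 6.75^{d_2}\leq 18^{d}$ with polynomial per-node work from Lemmas~\ref{prop1} and~\ref{prop2}. Your write-up is in fact somewhat more explicit about the correctness of the branching rules than the paper's version, which focuses only on the running-time accounting.
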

\begin{proof}
Each node in the search tree of Algorithm \ref{alg3} has at most $18$ or $O(7^{k})$ subproblems, depending on whether we are destroying the fixed size forbidden matrices or $M_{\mathsmaller I_{k}}/M^{T}_{\mathsmaller I_{k}}$ respectively. A submatrix $M^{'}$ of $M$, that is isomorphic to one of the forbidden matrices in $X$, and $M_{I_{k}}/M^{T}_{\tiny I_{k}}$ can be found in $O(m^{6}n)$-time (using Lemma \ref{prop1}) and  $O(m^{3}n^{3})$-time (using Lemma \ref{prop2}) respectively. From Lemma \ref{lem2}, it follows that destroying all $M_{\mathsmaller I_{k}}/M^{T}_{\mathsmaller I_{k}}$ in $M$ takes $O^{*}(6.75^{d})$-time, whereas destroying all the forbidden matrices from $X$ takes $O^{*}(18^{d})$-time.  Therefore, the total time complexity of Algorithm \ref{alg3} is $O^{*}(18^{d})$ . 
\end{proof}
The idea used in Algorithm \ref{alg3}, does not work for $SC1S$ and other $SC1E$  problems defined in Section \ref{intro}. In $d$-$SC1P$-$0E$ problem, the presence of a large $M_{I_{k}}$ (or a large chordless cycle), where $k>d$ is enough to say that we are dealing with a No instance (Using Corollary \ref{prp3}). But for $d$-$SC1S$-$R \backslash C \backslash RC$ and $d$-$SC1P$-$1E\backslash01E$ problems, a chordless cycle (of any length) can be destroyed by deleting an arbitrary vertex and an arbitrary edge respectively. This idea plays a crucial role in the context of flipping zeroes, but not flipping ones, and deleting rows/columns in the input matrix.\\
\indent The following corollary on \textsc{Biconvex Completion} problem (Section \ref{intro}) is a direct consequence of Theorem \ref{rest2}.
\begin{corollary}
\textsc{Biconvex Completion} problem is fixed-parameter tractable on bipartite graphs with a run-time of $O^{*}(18^{d})$, where $d$ denotes the number of allowed edge additions.
\end{corollary}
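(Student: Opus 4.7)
The plan is to reduce \textsc{Biconvex Completion} to $d$-$SC1P$-$0E$ via the half adjacency matrix and then invoke Theorem \ref{rest2}. Given an instance $\langle G = (V_1, V_2, E), d\rangle$ of \textsc{Biconvex Completion}, I would first construct the half adjacency matrix $M_G$ of $G$ (Definition \ref{defn4}), which can be done in polynomial time. This yields an instance $\langle M_G, d\rangle$ of $d$-$SC1P$-$0E$.

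The heart of the proof is establishing a bijection between the two solution spaces. By Definition \ref{defn4}, the entry $(M_G)_{ij}$ equals $1$ precisely when $\{u_i, v_j\} \in E$, so adding a non-edge $\{u_i, v_j\} \in (V_1 \times V_2) \setminus E$ to $G$ corresponds exactly to flipping the $0$-entry at position $(i,j)$ in $M_G$, and vice versa. Thus any set $E' \subseteq (V_1 \times V_2) \setminus E$ of size at most $d$ corresponds to a set $B' \subseteq B(M_G)$ of indices of size at most $d$, and the matrix obtained from $M_G$ by flipping $B'$ is exactly the half adjacency matrix of $G' = (V_1, V_2, E \cup E')$.

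Next I would invoke Lemma \ref{biconvexlemm}, which states that a bipartite graph is biconvex if and only if its half adjacency matrix has the $SC1P$. Applied to $G'$, this gives: $G'$ is biconvex $\iff$ the flipped matrix has the $SC1P$. Combining this with the previous bijection, $\langle G, d\rangle$ is a Yes-instance of \textsc{Biconvex Completion} if and only if $\langle M_G, d\rangle$ is a Yes-instance of $d$-$SC1P$-$0E$.

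Finally, I would run the FPT algorithm of Theorem \ref{rest2} on $\langle M_G, d\rangle$, which answers the question in $O^{*}(18^d)$ time, and return the same answer for $\langle G, d\rangle$. Since the reduction itself is polynomial and the subsequent call is $O^{*}(18^d)$, the overall run-time is $O^{*}(18^d)$, establishing fixed-parameter tractability. There is no significant obstacle here: the argument is essentially a direct translation, with Lemma \ref{biconvexlemm} providing the structural equivalence and Theorem \ref{rest2} providing the algorithmic content; the only care needed is to observe that edge additions in $G$ correspond bijectively to $0$-flips in $M_G$ (as opposed to arbitrary entry flips).
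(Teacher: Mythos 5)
Your proposal is correct and matches the paper's (implicit) argument: the paper presents this corollary as a direct consequence of Theorem \ref{rest2}, with the translation going through the half adjacency matrix, the correspondence between edge additions and $0$-flips, and Lemma \ref{biconvexlemm}, exactly as you lay out. Your write-up simply makes explicit the routine reduction that the paper leaves to the reader.
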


\subsubsection{\textup{\textbf{An FPT algorithm for $d$-$SC1P$-$1E$ problem on restricted matrices}}}\label{restrctd1}
The $d$-$SC1P$-$1E$ problem on $(2,*)/(*,2)$-matrices can also be solved  using Algorithm \ref{alg1}, with a modification in the branching step as follows. Here, we branch on the number of $1$-entries of the forbidden submatrix $M^{T}_{\mathsmaller {3_{1}}}/M_{3_{1}}$  found in $M$. In each branch, we flip the corresponding $1$-entry and the parameter $d$ is decremented by the weight of that $1$-entry (Definition \ref{defn2}). The number of $1$-entries in an $M^{T}_{\mathsmaller{3_{1}}}/M_{3_{1}}$ is $6$ (for both $(2,*)$ and $(*,2)$-matrix), which leads to a branching factor of at most $6$. After the branching step, the remaining pairwise disjoint forbidden submatrices of type $M_{\mathsmaller I_{k}}$ and $M^{T}_{\mathsmaller{I_{k}}}$ (where $k \geq 1$) in $M$ can be destroyed in polynomial time by flipping a minimum weight $1$-entry in $M_{\mathsmaller I_{k}}$ and  $M^{T}_{\mathsmaller{I_{k}}}$ respectively. Therefore, the total time complexity is $O^{*}(6^{d})$, which leads to the following theorem.
\begin{theorem}\label{thm18}
$d$-$SC1P$-$1E$ on a $(2,*)/(*,2)$-matrix $M_{m \times n}$ can be solved in  $O^{*}(6^{d})$-time where $d$ denotes the number of allowed $1$-flippings. The optimization version of $d$-$SC1P$-$1E$ problem (\textsc{Sc$1$p-$1$-Flipping}) can be approximated in polynomial-time with a factor of six. 
\end{theorem}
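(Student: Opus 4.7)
The plan is to adapt Algorithm \ref{alg1} to the editing setting by branching on the $1$-entries (rather than rows) of the unique finite-size forbidden submatrix that can occur. Specifically, since a $(2,*)$-matrix cannot contain any forbidden submatrix from $X$ with a column of weight three or more, the only element of $X$ that can appear is $M^{T}_{\mathsmaller{3_{1}}}$; symmetrically, only $M_{\mathsmaller{3_{1}}}$ can appear in a $(*,2)$-matrix. A direct count shows each of $M_{\mathsmaller{3_{1}}}$ and $M^{T}_{\mathsmaller{3_{1}}}$ contains exactly six $1$-entries. The first step is thus to preprocess $M$ by collapsing identical rows and columns into weighted representatives (as in Section \ref{restrctd}), then repeatedly locate an occurrence of the relevant forbidden matrix and branch into at most six subproblems, flipping one of its $1$-entries in each branch and decreasing the parameter by the weight of that entry. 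Correctness of this branching step follows from the same argument as Lemma \ref{corr1}: any solution must flip at least one of the six $1$-entries of the witnessing submatrix, for otherwise the submatrix survives and the resulting matrix still violates the $SC1P$.

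Once Stage $1$ terminates at a leaf instance $\langle M_{i}, d_{i}\rangle$, the remaining forbidden submatrices in $M_{i}$ can only be of type $M_{\mathsmaller{I_{k}}}$ or $M^{T}_{\mathsmaller{I_{k}}}$ ($k\ge 1$), and Theorem \ref{main} guarantees that these are pairwise disjoint (no shared row or column), since $M_{i}$ has no identical rows/columns and no member of $X$. Flipping a $1$-entry corresponds to removing the associated edge from the representing graph $G_{\mathsmaller{M_{i}}}$, and by Lemma \ref{prop3} each $M_{\mathsmaller{I_{k}}}$ (or its transpose) corresponds to an induced chordless cycle; removing any single edge of such a cycle destroys it. Because the witnessing cycles share no rows or columns, a greedy procedure that finds each remaining $M_{\mathsmaller{I_{k}}}/M^{T}_{\mathsmaller{I_{k}}}$ (via Lemma \ref{prop2}) and flips one of its minimum-weight $1$-entries handles the leaf instance in polynomial time, and the local choice is optimal because no single flip can destroy two disjoint cycles simultaneously.

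For the running-time analysis, the search tree in Stage $1$ has branching factor at most $6$ and depth at most $d$ (each branch decreases the parameter by at least one), giving $O(6^{d})$ leaves. Locating a copy of $M^{T}_{\mathsmaller{3_{1}}}$ or $M_{\mathsmaller{3_{1}}}$ at each internal node costs polynomial time by Lemma \ref{prop1}, and Stage $2$ at each leaf is polynomial by Lemma \ref{prop2} together with the pairwise-disjointness argument above. Hence the total running time is $O^{*}(6^{d})$, establishing fixed-parameter tractability. For the approximation statement, replace the Stage $1$ branching with the commitment rule ``flip all six $1$-entries of every located forbidden submatrix'' until no member of $X$ remains; Stage $2$ is still optimal and polynomial. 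An optimum solution must flip at least one of those six entries for each such occurrence, so the total number of flips performed is at most six times the optimum, yielding the claimed factor-$6$ polynomial-time approximation.

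The main obstacle in this plan is verifying that the Stage $2$ greedy step is safe under flipping rather than deletion: unlike row/column removals, a $1\to 0$ flip could in principle create a new forbidden submatrix that was not present before. The resolution is that when a witnessing cycle is already isolated (no shared row/column with any other $M_{\mathsmaller{I_{k}}}/M^{T}_{\mathsmaller{I_{k}}}$ or any member of $X$), deleting one edge of that cycle in $G_{\mathsmaller{M_{i}}}$ produces only a chordless path on the same vertex set, whose induced submatrix has the $SC1P$, so no new member of $F_{SC1P}$ involving those rows/columns can arise. Handling this interaction between the edge deletion, the weights, and the pairwise-disjointness structure is the delicate part, and everything else reduces to the corresponding arguments already developed for $d$-$SC1S$-$R$ in Section \ref{restrctd}.
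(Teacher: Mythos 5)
Your proposal follows essentially the same route as the paper: it reuses Algorithm \ref{alg1} with the branching step modified to the six $1$-entries of $M^{T}_{\mathsmaller{3_{1}}}/M_{\mathsmaller{3_{1}}}$ (decrementing $d$ by the entry's weight), destroys the remaining pairwise-disjoint $M_{\mathsmaller{I_{k}}}/M^{T}_{\mathsmaller{I_{k}}}$ greedily by flipping one minimum-weight $1$-entry each via Theorem \ref{main} and Lemma \ref{prop3}, and obtains the factor-$6$ approximation by committing to all six entries instead of branching. If anything, you are more careful than the paper in flagging and discharging the worry that a $1\to 0$ flip might create new forbidden submatrices, a point the paper's write-up leaves implicit.
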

The following corollary on \textsc{Biconvex Edge Deletion} problem (Section \ref{intro}) is a direct consequence of Theorem \ref{thm18}.
\begin{corollary}
\textsc{Biconvex Edge Deletion} problem is fixed-parameter tractable on certain bipartite graphs, in which the degree of all vertices in one partition is at most two, with a run-time of $O^{*}(6^{d})$, where $d$ denotes the number of allowed edge deletions.
\end{corollary}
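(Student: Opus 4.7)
The plan is to reduce \textsc{Biconvex Edge Deletion}, restricted to bipartite graphs in which one partition has maximum degree at most two, to $d$-$SC1P$-$1E$ on $(2,*)$-matrices (or $(*,2)$-matrices), and then invoke Theorem \ref{thm18} as a black box. Given an instance $\langle G, d \rangle$ with $G = (V_{\mathsmaller 1}, V_{\mathsmaller 2}, E)$ and every vertex of $V_{\mathsmaller 1}$ of degree at most two, I would first construct the half-adjacency matrix $M_{G}$ of $G$ (Definition \ref{defn4}), indexing rows by $V_{\mathsmaller 1}$ and columns by $V_{\mathsmaller 2}$. The degree bound on $V_{\mathsmaller 1}$ immediately forces every row of $M_{G}$ to contain at most two ones, so $M_{G}$ is a $(2,*)$-matrix. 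If instead the degree bound lies on $V_{\mathsmaller 2}$, the same argument applied to $M_{G}^{T}$ produces a $(*,2)$-matrix. This construction is clearly polynomial time.

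Next I would establish the equivalence of the two instances. By Lemma \ref{biconvexlemm}, a bipartite graph is biconvex if and only if its half-adjacency matrix has the $SC1P$. Moreover, deleting an edge $\{u,v\}$ with $u \in V_{\mathsmaller 1}$, $v \in V_{\mathsmaller 2}$ from $G$ is, by construction of $M_{G}$, the same operation as flipping the 1-entry $m_{uv}$ in $M_{G}$, and this correspondence is a bijection between $E$ and $A(M_{G})$. Hence a set $E' \subseteq E$ with $\vert E' \vert \leq d$ whose removal turns $G$ into a biconvex graph corresponds exactly to a set $A' \subseteq A(M_{G})$ with $\vert A' \vert \leq d$ whose flipping turns $M_{G}$ into a matrix with the $SC1P$. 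So $\langle G, d \rangle$ is a Yes-instance of \textsc{Biconvex Edge Deletion} iff $\langle M_{G}, d \rangle$ is a Yes-instance of $d$-$SC1P$-$1E$.

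Finally I would invoke the FPT algorithm of Theorem \ref{thm18} on $\langle M_{G}, d \rangle$: since $M_{G}$ is a $(2,*)$- (or $(*,2)$-) matrix, the algorithm runs in $O^{*}(6^{d})$ time and returns the correct answer, and the polynomial-time reduction preserves this bound. There is no real obstacle here; the only point worth checking is that the edge-to-entry correspondence behaves well under the allowed modification (1-flips on the matrix side correspond precisely to edge deletions on the graph side, with no accidental creation or destruction of vertices), which is immediate from Definition \ref{defn4}. The same reasoning transposed gives the $(*,2)$ case, so the $O^{*}(6^{d})$ bound holds regardless of which partition carries the degree restriction.
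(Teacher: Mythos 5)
Your proposal is correct and is exactly the argument the paper intends: it states the corollary as a direct consequence of Theorem \ref{thm18}, via the half-adjacency matrix, Lemma \ref{biconvexlemm}, and the bijection between edge deletions and $1$-flips. One small slip: under the paper's convention an $(x,y)$-matrix bounds \emph{columns} by $x$ and \emph{rows} by $y$, so a degree bound on $V_{\mathsmaller 1}$ (the row side) yields a $(*,2)$-matrix rather than a $(2,*)$-matrix; since Theorem \ref{thm18} gives the same $O^{*}(6^{d})$ bound for both classes, this does not affect the conclusion.
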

\section{Conclusion}
In this work, first we showed that the decision versions of $SC1S$ and $SC1E$ problems are NP-complete. Then, we proved that $d$-$SC1S$-$R/C/RC$ and $d$-$SC1P$-$0E$ problems are fixed-parameter tractable on general matrices. We also showed that $d$-$SC1P$-$1E$ problem is fixed-parameter tractable on certain restricted matrices. Improved FPT algorithms for $d$-$SC1S$-$R/C/RC$ problems on $(2,*)$ and $(*,2)$ matrices are also presented here. We also observed that the fixed-parameter tractability of $d$-$SC1S$-$R$ problem on $(2,*)$-matrices implies that \textsc{Proper Interval Vertex Deletion} problem is FPT on triangle-free graphs with a run-time of $O^{*}(4^{d})$. From our results, it turns out that \textsc{Biconvex Vertex Deletion} and \textsc{Biconvex Completion} problems are fixed-parameter tractable. We also observed that \textsc{Biconvex Edge Deletion} problem is fixed-parameter tractable on certain restricted bipartite graphs. We conjecture that $d$-$SC1P$-$1E$ and $d$-$SC1P$-$01E$ problems are also fixed-parameter tractable on general matrices. However, the idea used for solving $d$-$SC1P$-$0E$ cannot be extended to solve $d$-$SC1P$-$1E$ and $d$-$SC1P$-$01E$ problems. In $d$-$SC1P$-$1E$ and $d$-$SC1P$-$01E$ problems, a chordless cycle of any length can be destroyed by removing a single edge, which leads to an unbounded number of branches. An interesting direction for future work would be to investigate the parameterized complexity of $d$-$SC1P$-$1E/01E$ problems on general matrices.
\label{references}

\bibliography{final_paper}

\end{document}